\documentclass[final,onefignum,onetabnum]{siamart190516}


\usepackage{lipsum}
\usepackage{amsfonts}
\usepackage{amsopn}
\usepackage{braket}
\usepackage{graphicx}
\usepackage{epstopdf}
\usepackage{algorithmic}

\usepackage{physics}
\usepackage{IEEEtrantools}

\usepackage{tikz}
\usetikzlibrary{quantikz}

\DeclareMathOperator{\HH}{\mathcal{H}}
\DeclareMathOperator{\KK}{\mathcal{K}}

\DeclareMathOperator{\M}{\textsc{M}}

\DeclareMathOperator{\D}{\textsc{D}}

\DeclareMathOperator{\PPT}{\textsc{PPT}}
\DeclareMathOperator{\Sep}{\textsc{Sep}}
\DeclareMathOperator{\PSD}{\mathcal{PSD}}
\DeclareMathOperator{\SEP}{\mathcal{SEP}}
\DeclareMathOperator{\x}{\tt x}
\DeclareMathOperator{\y}{\tt y}

\DeclareMathOperator{\CP}{\boldsymbol{CP}}
\DeclareMathOperator{\bP}{\boldsymbol{P}}

\ifpdf
  \DeclareGraphicsExtensions{.eps,.pdf,.png,.jpg}
\else
  \DeclareGraphicsExtensions{.eps}
\fi

\def\RR{{\mathbb R}}
\def\PP{{\mathbb P}}
\def\CC{{\mathbb C}}
\def\NN{{\mathbb N}}
\DeclareMathOperator\Id{Id}


\newcommand{\ols}[1]{\mskip.5\thinmuskip\overline{\mskip-.5\thinmuskip {#1} \mskip-.5\thinmuskip}\mskip.5\thinmuskip}

\newcommand{\invertleadsto}{%
\mathrel{  \reflectbox{\rotatebox[origin=c]{0}{$\leadsto$}}}}

\newsiamremark{remark}{Remark}
\newsiamremark{exmp}{Example}
\newsiamremark{hypothesis}{Hypothesis}
\crefname{hypothesis}{Hypothesis}{Hypotheses}
\newsiamthm{claim}{Claim}

\headers{New families of entangled states}{A. Buckley}

\title{New families of entangled states on 
{\boldmath$\CC^3 \otimes \CC^3$}} 

\author{Anita Buckley\thanks{ Faculty of Mathematics and Physics, University of Ljubljana, Jadranska 19,
SI-1000 Ljubljana, Slovenia 
  (\email{anita.buckley@fmf.uni-lj.si}) and 
  Faculty of Informatics, Universit\`a della Svizzera italiana, 
  Via Buffi 13, 6900 Lugano, Switzerland (\email{anita.buckley@usi.ch}).}
}

\ifpdf
\hypersetup{
  pdftitle={Entangled states},
  pdfauthor={A. Buckley}
}
\fi


\begin{document}

\maketitle

\begin{abstract}
We build upon our previous work, the Buckley-\v Sivic method for simultaneous construction of families of positive maps on $3 \times 3$ self-adjoint matrices by prescribing a set of complex zeros to the associated forms. 
Positive maps that are not completely positive can be used to prove (witness) 
that certain mixed states are entangled.
We obtain entanglement witnesses that are indecomposable and belong to extreme rays of the cone of positive maps. 
Consequently our semidefinite program returns 
new examples of entangled states whose entanglement cannot be certified by the transposition map nor by other well-known positive maps.
The constructed states as well as the method of their construction offer some valuable insights for quantum information theory, in particular into the geometry of positive cones.
\end{abstract}

\begin{keywords}
  positive maps, entanglement witnesses, entangled states 
\end{keywords}

\begin{AMS}
  81R15, 47L07, 15A63, 90C22 
\end{AMS}

\section{Introduction}

\subsection{Notation and context} \label{subsec:Notation}
In quantum information theory we typically consider complex finite-dimensional Hilbert spaces and operators between them. We use the following notation: let $A \in B(\HH, \KK)$ be an operator 
(i.e., a linear map, the notion $B$ stands for boundedness)
from $\HH$ to $\KK$, its unique adjoint is denoted by  $A^{\dagger} \in B(\KK, \HH)$, and when $\HH=\KK$ and $A^{\dagger} =A$ we call $A\in B(\HH)$ a \textit{self-adjoint} operator. 
The space of self-adjoint operators $B^{\text{sa}}(\HH)$ is a real (however not complex) vector subspace in $B(\HH)$.
When working with
complex Hilbert spaces it is standard to 
use Dirac's bra-ket notation, where a vector is written as a \textit{ket} vector $\ket{\psi} \in \HH$ and the same vector considered in the dual space 
(i.e., $\ket{\psi}^{\dagger} \in \HH^{\ast}$) is 
written as a \textit{bra} vector $\bra{\psi}$.
A \textit{quantum state} on $\HH$ is a trace one positive self-adjoint operator. In literature quantum states are alternatively referred to as \textit{density operators} or \textit{density matrices}, 
therefore we denote the convex set of states by $\D(\HH)$. By the Krein-Milman theorem, $\D(\HH)$ is the convex hull of its extreme points called \textit{pure states}, i.e., states of the form $\ketbra{\psi}{\psi}$, where $\psi \in \HH$ is a norm one vector.
A state on a multipartite Hilbert space 
\begin{equation*}
    \HH=\HH_1 \otimes \HH_2 \otimes \cdots \otimes \HH_k
\end{equation*}
is by definition \textit{pure separable} if it is of rank one $\ketbra{\psi}{\psi}$ 
with $\psi=\psi_1 \otimes \cdots \otimes \psi_k$ for some unit vectors 
$\psi_i \in \HH_i$; and it is a
 \textit{separable state} if it can be written as a convex combination of pure separable states. A state is \textit{entangled} if it is not separable. In other words,
the set of separable states is
\begin{IEEEeqnarray*}{CCC}
\Sep(\HH) & = & \text{conv}  
\set{ \ketbra{\psi_1 \otimes \cdots \otimes \psi_k}
{\psi_1 \otimes \cdots \otimes \psi_k}   \colon 
\psi_i \in \HH_i} \\
 & = & 
 \text{conv}  
\set{ \rho_1 \otimes \cdots \otimes \rho_k
 \colon 
\rho_i \in \D(\HH_i) };
\end{IEEEeqnarray*}
this follows
from the canonical identifications between
$B\left( \otimes_{i=1}^k \HH_i \right) = \otimes_{i=1}^k B(\HH_i)$
and  between the real vector spaces
\begin{equation*}
    B^{\text{sa}}\left( \bigotimes_{i=1}^k \HH_i \right) = 
    \bigotimes_{i=1}^k B^{\text{sa}}(\HH_i)
\end{equation*}
(here the left tensor product is over the complex field and the right tensor product is over the reals). 

Throughout the paper we equate
$B(\CC^n, \CC^m)$ with the Hilbert space of $m \times n$ complex matrices $\M_{m,n}$ that 
comes naturally equipped with the Hilbert-Schmidt inner product 
\begin{equation*}
    \langle A, B \rangle_{\text{HS}} = \Tr A^{\dagger} B, 
\end{equation*}
for $A, B \in \M_{m,n}$. Analogously we equate $B(\CC^n)=\M_n$ and $B^{\text{sa}}(\CC^n)=\M^{\text{sa}}_n$ (note that $n^2$ is their complex and real dimension, respectively).
In this article we restrict our study of entanglement to bipartite Hilbert spaces $\CC^{d_1} \otimes \CC^{d_2}$
and can thus avail another standard identification, 
between operators in
$B(\CC^{d_1} \otimes \CC^{d_2} )$ and 
$d_1 d_2 \times d_1 d_2$ block matrices in $\M_{d_1 d_2}$.
Geometrically it is often convenient to drop the trace one constraint and (by abuse of notation) consider any positive semidefinite matrix as a state.
In view of the above identifications, when 
$\HH=\CC^{d_1} \otimes \CC^{d_2}$, 
the convex compact set $\D(\CC^{d_1} \otimes \CC^{d_2})$ is the base of the positive semidefinite cone $\PSD(\CC^{d_1} \otimes \CC^{d_2}) \subset \M_{d_1 d_2}^{\text{sa}}$, and similarly
$\Sep(\CC^{d_1} \otimes \CC^{d_2})$ is the base of the separable cone 
$\SEP(\CC^{d_1} \otimes \CC^{d_2}) \subset 
\PSD(\CC^{d_1} \otimes \CC^{d_2})$. 
The $\PSD$ cone is self-dual with respect to the Hilbert-Schmidt inner product. On the other hand,
it is straightforward to check that the dual of the separable cone consists of \textit{block-positive matrices}
\begin{equation*}
    \mathcal{BP}:= \set{ M \in \M_{d_1 d_2}^{\text{sa}} \colon 
    \sum_{j,k=1}^{d_1}  z_j \ols{z_k} M_{jk} \in \PSD(\CC^{d_2})
    \text{ for all } \texttt{z} \in \CC^{d_1}
    }.
\end{equation*}
By definition, the cone $\mathcal{PPT}$ and its trace 1 base PPT
consist of the states with positive partial transpose; in other words, $\rho \in \D$ is a \textit{PPT state} if 
$\Gamma(\rho) \succeq 0$, 
where $\Gamma:= T \otimes \Id$ is the partial transposition and $T$ is the transposition. We name 
a matrix \textit{decomposable} if it is a convex combination of
matrices in $\PSD$ and $\text{co-}\!\PSD:=\Gamma(\PSD)$. 
Using this notation, $\mathcal{PPT} = \text{co-}\!\PSD \cap \PSD$ 
and its dual cone is $\mathcal{PPT}^{\ast} = \text{co-}\!\PSD +\PSD$.
The inclusion $\SEP \subset \mathcal{PPT}$ is known as 
the \textit{Peres-Horodecki criterion} due to \cite{HorodMain} and \cite{Peres}, or
the \textit{PPT criterion} 
due to its construction.
The hierarchies of the above cones, their bases and their dual cones are summarised in \cref{tab:conesO}, where the cones are increasing from bottom to top whereas the dual cones are decreasing in the same order. 

\begin{table}[ht]
\label{tab:conesO}
\centering
 \begin{tabular}{||c | c | c || c||} 
 \hline
   \multicolumn{2}{|c|}{Cone of matrices $\mathcal{C}$ }& base of $\mathcal{C}$  & 
  dual cone $\mathcal{C}^{\ast}$ \\ 
 \hline\hline
 block positive & $\mathcal{BP}$ & BP & $\SEP$ \\ 
 \hline
 \!\!\! decomposable \!\!\! &\!\!\! $\text{co-}\!\PSD \!+\! \PSD$ \!\!\!& \!\!\!
 conv$\,\{\D \cup \Gamma(\D)\}$ \!\!\! &\!\!\! $\mathcal{PPT}$  \\
 \hline
 positive & $\PSD$ & $\D$ & $\PSD$ \\
 \hline
 PPT & $\mathcal{PPT}$ & PPT  \!\!\! &\!\!\!
 $\text{co-}\!\PSD \!+\! \PSD$\!\! \\
 \hline
 separable & $\SEP$ & $\Sep$ & $\mathcal{BP}$ \\ 
 \hline
\end{tabular}
\caption{Cones of operators on $\CC^{d_1} \otimes \CC^{d_2}$ represented as
cones of matrices in $\M^{\text{sa}}_{d_1 d_2}$.}
\end{table}

Next, we tackle more general maps acting between spaces of matrices 
(or operators).  
In order to emphasize the difference between such maps and the operators considered above, 
we call them 
\textit{quantum maps} or \textit{superoperators}. In what follows we consider
self-adjointness-preserving $\CC$-linear quantum maps
$\Phi \colon \M_{d_2} \rightarrow \M_{d_1}$
(note that every such map $\Phi$ 
is obtained from an $\RR$-linear map 
$\Psi \colon \M^{\text{sa}}_{d_2} \rightarrow \M^{\text{sa}}_{d_1}$
in a unique way by complexification and conversely, $\Psi$ is a restriction of $\Phi$). The map $\Phi$ is defined to be \textit{positive} if it maps positive semidefinite matrices to 
positive semidefinite matrices, and $\Phi$ is $n$-\textit{positive} if 
\begin{IEEEeqnarray*}{CCCCC}
\Phi \otimes \Id & \colon & 
\M^{\text{sa}}_{d_2 n}  &
\longrightarrow &
\M^{\text{sa}}_{d_1 n} 
\end{IEEEeqnarray*}
is positive (under the identifications 
$\M^{\text{sa}}_{d_i n} = B^{\text{sa}}(\CC^{d_i} \otimes \CC^n) = 
\M^{\text{sa}}_{d_i} \otimes \M^{\text{sa}}_{n}$). If $\Phi$ is $n$-positive for all $n \in \NN$, it is said to be \textit{completely positive}. 
For example, transposition is a positive map which is not 2-positive and is thus not completely positive. 
In quantum information theory, completely positive maps that are also trace-preserving are called \textit{quantum channels}. 
A completely positive map 
$\Phi \in B(\M_{d_2}^{\text{sa}}, \M_{d_1}^{\text{sa}})$ is \textit{entanglement breaking} if 
for any $n \in \NN$ and any $Z \in \M^{\text{sa}}_{d_2 n}$ the matrix
$\left( \Phi \otimes \Id_{\M_n} \right) Z$
is separable. 
From the definitions
it follows that the sets of entanglement breaking maps, completely positive maps and  positive maps are all convex cones living inside
the real vector space of self-adjointness-preserving quantum maps; we denote them as 
$\boldsymbol{EB}(\M_{d_2}^{\text{sa}}, \M_{d_1}^{\text{sa}}) \subset
\CP(\M_{d_2}^{\text{sa}}, \M_{d_1}^{\text{sa}}) \subset \bP(\M_{d_2}^{\text{sa}}, \M_{d_1}^{\text{sa}})$. 

The Choi isomorphism establishes a one-to-one correspondence between superoperators acting on matrix algebras and the Choi matrices acting on bipartite Hilbert spaces. 
In specified bases (denoted as $\{ e_i \}_{i=1}^d$  and  
$\{ E_{ij} \}_{i,j=1}^d$ for
$\CC^d$ and $\M_d$, respectively), the \textit{Choi isomorphism} is defined as
\begin{IEEEeqnarray*}{CCL}
\vspace{1.5mm}
B(\M_{d_2}, \M_{d_1}) &
\ \ \xrightarrow{C} \ \ &
\ \ \ \ \ \ \ \ \ \ \ \ \M_{d_1 d_2} \\ 
\Phi \colon \M_{d_2} \rightarrow \M_{d_1} & \mapsto & 
C(\Phi) \colon  \CC^{d_1} \otimes \CC^{d_2} \rightarrow 
\CC^{d_1} \otimes \CC^{d_2} \\
 & & \ \ \,  || \\
 & & \sum_{i,j} \Phi(E_{ij}) \otimes E_{ij}
\end{IEEEeqnarray*}
and $C(\Phi)$ is called the \textit{Choi matrix} of $\Phi$.
It is often convenient to rewrite the Choi matrix as follows:
\begin{equation} \label{eq:ChoiMat}
    C(\Phi) = \left( \Phi \otimes \Id_{\M_{d_2}} \right) (\ketbra{\chi}{\chi}),
\end{equation}
where 
$\chi = \sum_i e_i \otimes e_i \in \CC^{d_2}  \otimes  \CC^{d_2}$ is a \textit{maximally entangled vector}.
In quantum information theory it is standard to compute 
with the computational bases $\{\ket{i-1} \}_{i=1}^d$ of $\CC^d$ and  analogously $\{ \ket{ij}:=\ket{i} \otimes \ket{j} \}_
{\substack{0\leq i < d_1,\,  0\leq j < d_2}}$
of $\CC^{d_1}  \otimes  \CC^{d_2}$.

The Choi isomorphism $C$ induces 
an isomorphism between the real vector spaces of self-adjointness-preserving linear maps 
$\left\{ \Phi \colon \M_{d_2}^{\text{sa}} 
\rightarrow \M_{d_1}^{\text{sa}} \right\}$ and self-adjoint matrices 
$\M_{d_1 d_2}^{\text{sa}}$.
In particular, by Choi's theorem~\cite{ChoiCP}, $C$ induces an isomorphism between the self-dual cones $\CP(\M_{d_2}^{\text{sa}}, \M_{d_1}^{\text{sa}})$ and    
$\PSD(\CC^{d_1} \otimes \CC^{d_2})$. 
Indeed,  Choi's theorem states that $\Phi$ being completely positive is equivalent to $C(\Phi)$ being positive semidefinite, which in turn is equivalent to  
$\Phi$ having a particular structure 
\begin{equation} \label{eq:Kraus}
    \Phi(Z) = \sum_{k=1}^K A_k Z A^{\dagger}_k, 
\end{equation}
where $ A_1, \ldots, A_K \in \M_{d_1, d_2}$,
called the \textit{Kraus decomposition} of $\Phi$. 
Consequently the Choi isomorphism induces one-to-one correspondences between the cones of positive / decomposable / completely positive / PPT-inducing / entanglement breaking maps  and the cones of block positive / decomposable / positive semidefinite / PPT / separable  
matrices, respectively. 
We refer to the book of Aubrun and Szarek \cite{AandBbook} for an excellent exposition of the Choi isomorphism (and the related Jamio{\l}kowski isomorphism) and the correspondences it induces between the respective cones. 
The relations, following the notation in \cite{AandBbook}, are summarised in 
\cref{tab:conesSoO}. We remark that, by St{\o}rmer-Woronowitz theorem \cite{StorAold,Woron}, every state on $\CC^2 \otimes \CC^2$
or $\CC^2 \otimes \CC^3$ with positive partial transpose is separable; in other words, 
$\SEP(\CC^{d_1} \otimes \CC^{d_2})=\mathcal{PPT}(\CC^{d_1} \otimes \CC^{d_2})$ for $d_1 d_2 \leq 6$, which is equivalent to $\boldsymbol{P}(\M_{d_2}^{\text{sa}}, \M_{d_1}^{\text{sa}})=
\boldsymbol{DEC}(\M_{d_2}^{\text{sa}}, \M_{d_1}^{\text{sa}})$.

\begin{table}[ht]
\label{tab:conesSoO}
\centering
 \begin{tabular}{||c | c || c | c ||} 
 \hline
  \multicolumn{2}{||c||}{Cone of superoperators  $\boldsymbol{C}$} & 
  \multicolumn{2}{c||}{Cone of matrices $\mathcal{C}$ }   \\ 
 \hline\hline 
  & & & \\
 positive & $\boldsymbol{P}$ &
 block positive & $\mathcal{BP}$ \\ 
  & $\cup$ & & $\cup$ \\
 decomposable & $\boldsymbol{DEC}$ & 
 decomposable & $\text{co-}\!\PSD + \PSD$      \\
 & $\cup$ & & $\cup$ \\
 completely positive & $\boldsymbol{CP}$  & 
 positive semidefinite & $\PSD$ \\
& $\cup$ & & $\cup$ \\
 PPT-inducing & $\boldsymbol{PPT}$ & 
 "PPT"  & $\mathcal{PPT}$  \\
 & $\cup$ & & $\cup$ \\
 entanglement breaking & $\boldsymbol{EB}$ & 
 separable & $\SEP$ \\ 
  & & & \\
 \hline
\end{tabular}
\caption{Each cone $\boldsymbol{C}$ is connected to a cone 
$\mathcal{C}$ by the Choi isomorphism: 
$\Phi \in \boldsymbol{C} \Longleftrightarrow C(\Phi) \in \mathcal{C}$.}
\end{table}

The goal of the article is to 
construct new entanglement witnesses that certify entanglement in bipartite states. 
To this end we deploy the identifications in \cref{tab:conesO,tab:conesSoO} of the dual cone $\SEP^{\ast}$ with the cone of block positive matrices $\mathcal{BP}$ and the isomorphic cone of positive maps $\bP$. These identifications yield the following equivalent conditions for a state $\rho$ on $\HH=\CC^{d_1} \otimes \CC^{d_2}$:
\begin{enumerate}
    \item state $\rho$ is entangled,
    \item there exists $\sigma \in \SEP^{\ast}= \mathcal{BP}$
    such that $\langle \sigma, \rho \rangle_{\text{HS}}= 
    \Tr(\sigma \rho) <0$,
    \item there exists a positive map 
    $\Psi \colon \M_{d_2}^{\text{sa}} \rightarrow \M_{d_1}^{\text{sa}}$ 
    such that $\Tr\left(C(\Psi) \rho \right)<0$.
\end{enumerate}
The following Horodecki's entanglement witness theorem is a direct corollary of the above, where the positive map $\Phi=\Psi^{\dagger}$ is the adjoint of $\Psi$ in statement 
3.
\begin{corollary}[Horodecki's entanglement witness 
theorem \cite{HorodMain}]
\label{cor:HEW} Consider
$\HH = \CC^{d_1} \otimes \CC^{d_2}$. 
A state  $\rho$ on $\HH$ is entangled
if and only if there exists a positive map 
$\Phi \colon \M_{d_1}^{\text{sa}} \rightarrow \M_{d_2}^{\text{sa}}$
such that the matrix
$\left( \Phi \otimes \Id_{\M_{d_2}^{\text{sa}} }\right)\rho$ is not positive semidefinite.
\end{corollary}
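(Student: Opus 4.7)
The strategy is to derive the corollary directly from the three equivalent conditions—entanglement of $\rho$, existence of $\sigma \in \mathcal{BP}$ with $\Tr(\sigma\rho) < 0$, and existence of a positive $\Psi$ with $\Tr(C(\Psi)\rho) < 0$—already displayed in the paragraph preceding it. The remaining task is to translate condition (3) into the formulation of the corollary, with $\Phi := \Psi^\dagger$.

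The key computation is the identity
\begin{equation*}
    \Tr\bigl(C(\Psi)\rho\bigr) \;=\; \Tr\bigl((\Psi \otimes \Id)(\ketbra{\chi}{\chi})\,\rho\bigr) \;=\; \bra{\chi}(\Psi^\dagger \otimes \Id)\rho\ket{\chi},
\end{equation*}
which follows from the representation \eqref{eq:ChoiMat} of the Choi matrix together with the trace--adjoint duality $\Tr(\Psi(X)Y)=\Tr(X\Psi^\dagger(Y))$ applied to the first tensor leg. Observe that $\Phi := \Psi^\dagger$ is itself a positive map, because positivity of $\Psi$ yields $\Tr(X\Psi^\dagger(Y))=\Tr(\Psi(X)Y)\geq 0$ for every positive semidefinite $X$, forcing $\Psi^\dagger(Y)\succeq 0$ whenever $Y \succeq 0$.

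For the forward implication, assuming $\rho$ is entangled, condition (3) supplies a positive $\Psi$ with $\Tr(C(\Psi)\rho) < 0$, and the displayed identity immediately gives $\bra{\chi}(\Phi \otimes \Id)\rho\ket{\chi} < 0$, so $(\Phi \otimes \Id)\rho$ cannot be positive semidefinite. The converse is handled by contraposition: if $\rho = \sum_k p_k \ketbra{\psi_k^1 \otimes \psi_k^2}{\psi_k^1 \otimes \psi_k^2}$ is separable, then
\begin{equation*}
    (\Phi \otimes \Id)\rho \;=\; \sum_k p_k \,\Phi\bigl(\ketbra{\psi_k^1}{\psi_k^1}\bigr) \otimes \ketbra{\psi_k^2}{\psi_k^2}
\end{equation*}
is a convex combination of tensor products of positive semidefinite matrices, hence positive semidefinite, for every positive $\Phi$.

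The proof has no real conceptual obstacle---it is a translation exercise, since the heavy lifting (Hahn--Banach separation, the identification $\SEP^\ast = \mathcal{BP}$, and the Choi isomorphism $\mathcal{BP} \cong \bP$) has already been deployed to establish the equivalence of (1), (2), (3). The only points that demand care are the direction of the adjoint under the Choi isomorphism, so that $\Phi\colon\M_{d_1}^{\text{sa}} \to \M_{d_2}^{\text{sa}}$ emerges from $\Psi\colon\M_{d_2}^{\text{sa}} \to \M_{d_1}^{\text{sa}}$, and the observation that the maximally entangled vector $\chi$ must live in $\CC^{d_2}\otimes\CC^{d_2}$ when testing positivity of $(\Phi \otimes \Id)\rho$.
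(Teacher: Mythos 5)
Your proof is correct and follows exactly the route the paper intends: the corollary is derived from the displayed equivalence of conditions (1)--(3) by setting $\Phi=\Psi^{\dagger}$, using the identity $\Tr(C(\Psi)\rho)=\bra{\chi}(\Psi^{\dagger}\otimes\Id)\rho\ket{\chi}$ and the positivity of adjoints of positive maps; the paper merely asserts this as a ``direct corollary'' while you supply the details, including the easy separable direction. No gaps.
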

Using the above notation, the positive map $\Phi \in \bP \backslash \CP$ or the block positive matrix 
$\sigma = C(\Psi) \in \mathcal{BP} \backslash \PSD$ 
(or the linear functional 
$\langle \sigma, \cdot \rangle_{\text{HS}}$) is defined to be
an \textit{entanglement witness}, because it witnesses the entanglement of the state $\rho$.
An entanglement witness $\sigma \in \mathcal{BP}$
is said to be an \textit{optimal entanglement witness} if the set 
$\textsc{E}(\sigma):= \set{\rho \in \D \colon \Tr(\sigma \rho) <0}$
is maximal, i.e., there exists no $\sigma' \in \mathcal{BP}$ such that $\textsc{E}(\sigma')$ would  strictly contain  
$\textsc{E}(\sigma)$. It follows from the $S$-lemma (see \cite{Slemma} and \cite[Appx. \hspace{-3pt}C and \S 2.4]{AandBbook}) that when $\sigma$ 
lies on an extreme ray of $\mathcal{BP}$ and $\sigma \notin \PSD$, then it is an optimal entanglement witness.

\subsection{Historical development and Contributions} 
In~\cite{Klep,SzarEtAl} it was shown that, in the general setting, the cone of positive maps is much bigger than the cone of completely positive maps. 
By comparing the relations between the cones in \cref{tab:conesSoO,tab:conesO}, this is equivalent to the statement that the cone of positive semidefinite matrices is much bigger than the cone of separable matrices. 
Consequently, the compact convex set $\D (\CC^{d_1} \otimes \CC^{d_2})$ is a larger set than 
$\Sep (\CC^{d_1} \otimes \CC^{d_2})$
despite having the same inradius by the Gurvits-Barnum~\cite{GurBar} theorem
(with respect to the Hilbert-Schmidt norm and center in the maximally mixed state $\frac{1}{d_1 d_2} I$), and thus 
having the same dimension
in $\RR$,
\begin{displaymath}
  \dim \Sep = \dim \D = (d_1 d_2)^2 -1.
\end{displaymath}
Moreover, there exists no known simple description of the facial structure of the convex set $\Sep$ (unlike the well understood geometry of the set of states $\D$, see for example \cite{BengZy}). This is related to the fact that the fundamental question in quantum physics, quantum information theory (QIT) and
quantum communication - whether a state is separable or entangled - is
an NP-hard problem \cite{Gha}. 

This paper aims to shed some light on the geometry of 
$\SEP (\CC^{3} \otimes \CC^{3})$, the cone of separable states on 
$\CC^3 \otimes \CC^3$. 
The main contribution of the article is in presenting the families of positive maps on $\M_3^{\text{sa}}$
from \cite{BucSiv} as optimal entanglement witnesses and consequently 
give an explicit construction of new entangled states. 
(By Choi's isomorphism, this is equivalent to the construction of
witnesses of non-entanglement breaking maps. Indeed, from \cref{tab:conesSoO} we read that, an entangled state can be represented as the Choi matrix of a completely positive map that is not entanglement breaking.) 
Positive maps that are not completely positive can be used to prove (witness) that certain mixed states are entangled - therefore the name entanglement witnesses.
We design a semidefinite program  in such a way that it outputs entangled states which are not detected by other known positive maps, specifically the transposition and the Choi map. 
The advantage of our symbolic computation of entanglement witnesses by 
the method of "prescribing zeros" \cite{BucSiv}, is that we simultaneously
obtain multidimensional families of entanglement witnesses; as a demonstration of method's strength, we obtain a 5-parameter family of positive maps that amalgamates all the generalizations of Choi's map in the literature. 
Moreover, our entanglement witnesses lie on the extremal rays of the respective cones, thus they are optimal entanglement witnesses. Such explicit families of entanglement witnesses and the corresponding entangled states provide a tighter approximation of the separable cone $\SEP (\CC^{3} \otimes \CC^{3})$ (see \cref{fig:Psi_t}). 

The paper is organized as follows. We start by setting the stage: we introduce the notion of entanglement witnesses and
the advantages of them being extremal and indecomposable.
The classical example of the Choi map is recalled, which is arguably the second most famous positive map after the transposition.
\Cref{sec:main} reviews the method from \cite{BucSiv}, of constructing new families of positive maps
on $\M_3^{\text{sa}}$, and explains their relevance and 
implications in quantum theory. 
In \cref{sec:examples} we extend the main theorems from  \cite{BucSiv} by constructing families of positive maps that are neither completely positive nor co-completely positive. This means that such maps are capable of witnessing entanglement of states which the transposition map fails to certify.
We also address how to make entanglement witnesses unital (or trace preserving).
Our algorithm, phrased as a semidefinite program, is provided in \cref{sec:alg} - it involves the new maps and outputs examples of entangled states on  
$\CC^3 \otimes \CC^3$ whose entanglement can neither be witnessed by the transposition map nor by the Choi map.
Experimental results, both some exact and some numerical, are in \cref{sec:experiments}, and the conclusions follow in
\cref{sec:conclusions}.

In mathemathics, positive maps are key notions in
$C^{\ast}$-algebras (see \cite{StorA}), in the study of nonnegative polynomials and sums of squares \cite{Bleck4,Bleck5},  and in connection with cone optimization, where positive maps act between real symmetric matrices (following the ideas of Lasserre and Parrilo, 
polynomial optimization problems can be relaxed and transferred into semidefinite optimization problems \cite{NieZh,Bleck2}). 
It is natural to extend positive maps from acting on $\M^{\text{sym}}_d(\RR)$ 
to positive maps on $\M^{\text{sa}}_d(\CC)$, and thus making them suitable 
for QIT.
However, in general extra effort is required to ensure that the complexification of an extremal positive map is also extremal \cite{KCHa}.
The relevance of positive maps in QIT was observed by the Horodecki group \cite{HorodMain}, particularly in connection with the entanglement detection, 
and since then there are many examples of positive maps that are not 
completely positive in the literature 
(for an overview  we refer the interested 
reader to the survey paper \cite{KyeExp} and the references in \cite{BucSiv}). 
By St{\o}rmer-Woronowitz theorem \cite{StorAold,Woron}, $d_1=d_2=3$ is the smallest dimension where the transposition map does not detect all entangled states. 
Extensive study of the geometry of entangled bi-qutrit PPT states, their classification and construction of certain simplicial faces of 
$\Sep (\CC^{3} \otimes \CC^{3})$ are due to Kye et al. 
(see \cite{KyePPT1,KyePPT2} and the parametrized examples of entangled PPT states in their preceding works). In \cite{Hermes,ChrConj} the authors further explore the relations between entanglement and (complete/co-complete) positivity. 
Like real algebraic geometry, quantum information theory avails various forms of semidefinite programming \cite{Wat}. For example,
semidefinite programming is a well-known modus operandi for
testing entanglement~\cite{SDPentang}.

\section{Entanglement witnesses}
\label{sec:main}

In this section we motivate the study of entanglement witnesses with the two most known examples, the transposition and the Choi map.
We describe our method of construction of optimal entanglement witnesses. 

\subsection{Setting the stage}

Note that Horodecki’s  entanglement  witness  theoorem in \cref{cor:HEW}, for $\Phi=T$, yields the  PPT criterion or the Peres-Horodecki criterion:
\begin{equation}
\label{eq:PPTcriterion}
    \SEP \subset \mathcal{PPT} := \PSD
    \cap\, \Gamma (\PSD),
\end{equation}
where $\Gamma := T \otimes \Id$.
The strength of the PPT criterion is in detecting entanglement: if the partial transpose of a state is not positive, the state itself must be non-separable, i.e., entangled. 
On the other hand, the transposition fails to detect the states in the nonempty set $\PPT \backslash \Sep$ as illustrated in \cref{fig:PT}.
This, and the fact that the partial transposition detects entanglement in any pure state, makes the transposition  the most fundamental entanglement witness in quantum information theory. 
PPT states have attracted much attention in the literature as they can be seen as rough approximations to separable states (see 
\cref{eq:PPTcriterion} and \cref{fig:PT}). 
\begin{figure}[htbp]
  \centering
  \includegraphics[width=0.73\textwidth, height=95mm]{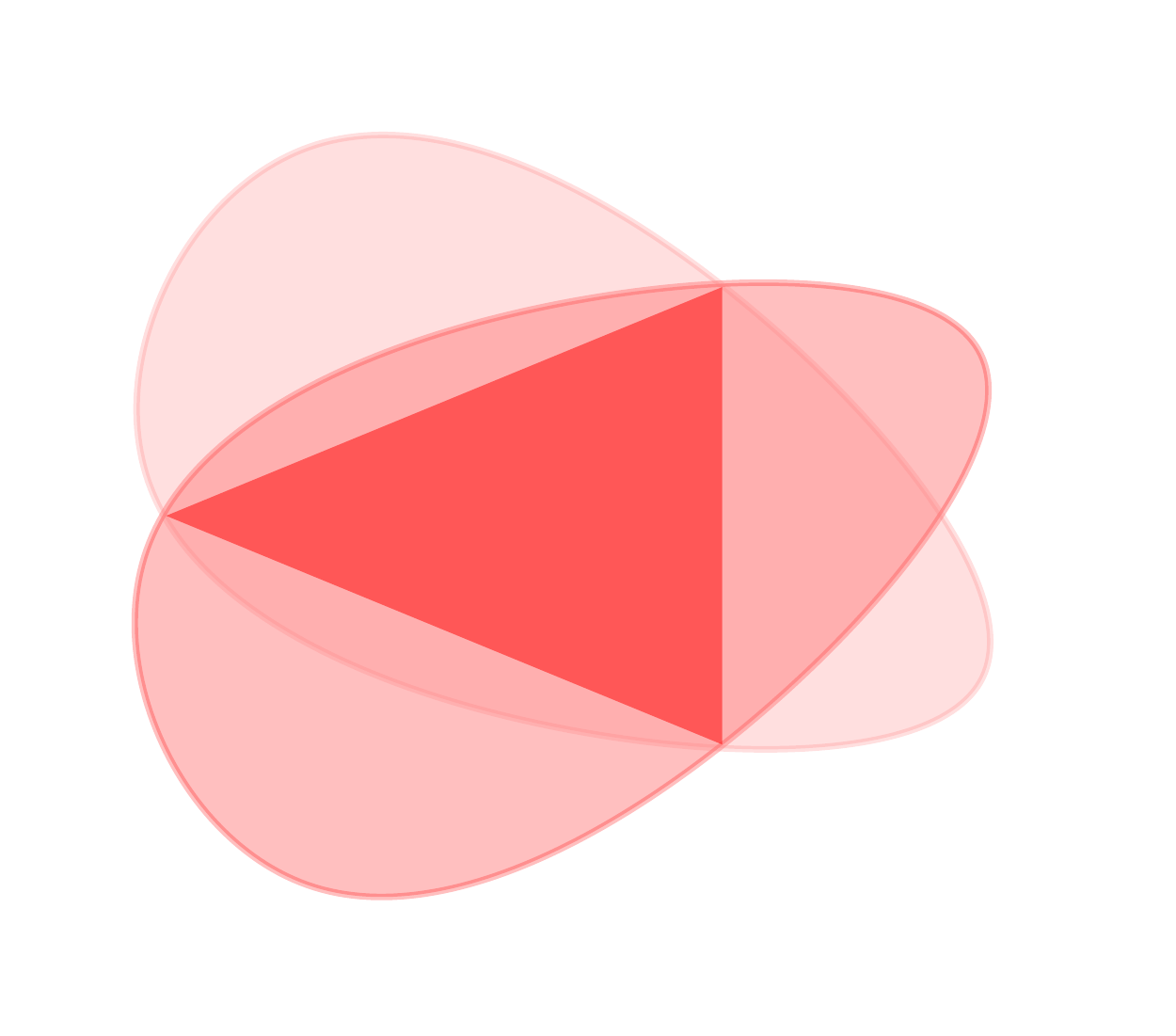}
   \put(-190,205){$\Gamma(\D)$}
    \put(-155,134){$\Sep$ }
    \put(-190,70){$\D$}
     \put(-60,197) {\vector(0,-1){25}}
            \put(-80,200){states detected by $T$}
        \put(-80,70) {\vector(0,1){60}}
       \put(-85,60){$\PPT = \D \cap \, \Gamma(\D)$}
         \put(-85,48){states not detected by $T$}
  \caption{ The inclusion 
    $\Sep \left( \CC^3 \otimes \CC^3\right) \subset 
    \PPT \left( \CC^3 \otimes \CC^3\right)$ is strict.}
  \label{fig:PT}
\end{figure}
By applying
transposition together with other entanglement witnesses (i.e., positive maps that are not completely positive) on the cone $\PSD$, we can get better approximations of the separable cone. Indeed, by \cref{cor:HEW} the following inclusion holds for a set 
$\mathbf{A} \subset \bP \backslash \boldsymbol{CP}$:
\begin{equation*}
    \SEP \subset 
   \PSD \bigcap_{\Psi \in \mathbf{A} } 
    \left( \Psi^{\dagger} \otimes \Id \right) (\PSD);
\end{equation*}
note that for a richer family of positive maps $\mathbf{A}$ we get tighter approximation of $\SEP$.
(Alternatively, the above formula can be stated as 
$ \SEP \subset 
   \bigcap_{\Psi \in \mathbf{B} } 
    \left( \Psi^{\dagger} \otimes \Id \right) (\PSD)$, 
where we can assume that $\mathbf{B} \subset \bP$ contains $\Id$.)

The first examples of positive maps, which are not completely positive when restricted to real symmetric matrices (i.e., indecomposable positive maps), are due to Choi~\cite{Choi72,ChoiSD}. 
In \cite{KCHa} 
the Choi map 
$\Psi_C \colon \M_3^{\text{sa}} \to \M_3^{\text{sa}}$ is defined as 
\begin{equation}
\label{eq:ChoisMap}
    \Psi_C \left( 
    \begin{bmatrix}
    z_{00}& z_{01}&z_{02} \\
z_{10}&   z_{11} &z_{12}\\
z_{20} &z_{21} &  z_{22}
    \end{bmatrix}
    \right) = 
     \begin{bmatrix}
    z_{00}+z_{11}& -z_{01}&-z_{02} \\
-z_{10}&   z_{11}+z_{22} &-z_{12}\\
-z_{20} &-z_{21} &  z_{00}+z_{22}
    \end{bmatrix}.
\end{equation}
The Choi matrix of $\Psi_C$ is not positive semidefinite
and it lies on an extreme ray of block positive matrices by \cite{KCHa}, therefore  $\Psi_C$ is an optimal entanglement witness.  
Note that $\Psi_C$ is neither stronger nor weaker entanglement witness than the PPT criterion, as illustrated in \cref{fig:PT_Choi}. 
For example, 
it is straightforward to check that the mixed state 
\begin{IEEEeqnarray*}{C} 
  \rho= 
  \frac{1}{21}   \left[\begin{array}{ccc|ccc|ccc}
    2 & \cdot & \cdot & 
    \cdot & 2 & \cdot &
    \cdot & \cdot & 2   \\
 \cdot & 1 & \cdot & \cdot & \cdot & \cdot & \cdot & \cdot & \cdot   \\
 \cdot & \cdot & 4 & \cdot & \cdot & \cdot & \cdot & \cdot & \cdot  \\
 \hline 
\cdot & \cdot & \cdot &  4 & \cdot & \cdot & \cdot & \cdot & \cdot   \\
 2 & \cdot & \cdot & \cdot & 2 & \cdot & \cdot & \cdot & 2   \\
 \cdot & \cdot & \cdot & \cdot & \cdot & 1 & \cdot & \cdot & \cdot  \\
 \hline 
\cdot & \cdot & \cdot &  \cdot & \cdot & \cdot & 1 & \cdot & \cdot   \\
 \cdot & \cdot & \cdot & \cdot & \cdot & \cdot & \cdot & 4 & \cdot   \\
 2 & \cdot & \cdot & \cdot & 2 & \cdot & \cdot & \cdot & 2
\end{array}\right]
\end{IEEEeqnarray*}
from \cite{StorA} has positive partial transpose,
however
$(\Psi_C^{\dagger} \otimes \Id )\, \rho$ has an eigenvalue equal to $-\frac{1}{21}$.
On the other hand, the pure state $\ketbra{v}{v}$ with  $v=(1,0,0,0,0,0,0,0,1)$ (or $v=\ket{00}+\ket{22}$ in the computational basis) is detected by $\Gamma = T \otimes \Id$ and not by 
$\Psi_C$. 

\begin{figure}[htbp]
  \centering
  \includegraphics[width=0.73\textwidth, height=95mm]{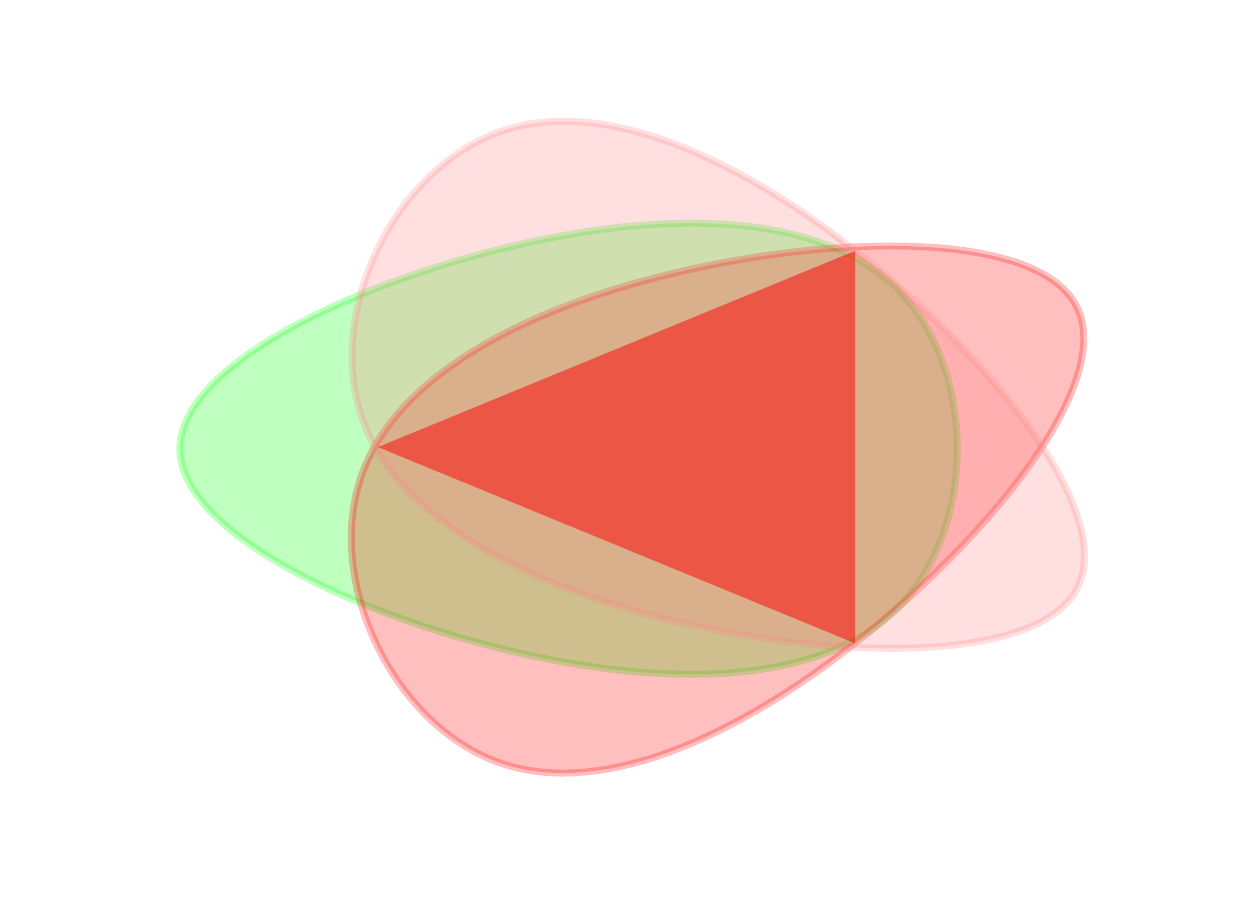}
   \put(-160,215){$\Gamma(\D)$}
      \put(-203,81) {\vector(1,0){53}}
      \put(-295,80){states detected by $T$ }
       \put(-295,70){and not by $\Psi_C$}
    \put(-135,134){$\Sep$ }
    \put(-250,134){$( \Psi_C^{\dagger} \otimes \Id )(\D)$ }
    \put(-160,50){$\D$}
         \put(-75,68) {\vector(0,1){63}}
         \put(-82,61) {\vector(-1,1){23}}
       \put(-80,58){$\D \cap \, \Gamma(\D) 
       \cap (\Psi_C^{\dagger} \otimes \Id)(\D) \, \backslash \Sep$}
       \put(-80,46){entangled states }
       \put(-80,36){detected by neither $\Psi_C$ nor $T$}
       \put(-56,201) {\vector(0,-1){67}}
            \put(-80,215){states detected by $\Psi_C$ }
             \put(-80,205){and not by $T$}
  \caption{Additional constrain $\left( \Psi_C^{\dagger} \otimes \Id \right) \rho \succeq 0$, where $\Psi_C$ is the Choi map.}
  \label{fig:PT_Choi}
\end{figure}

\begin{remark} \label{rem:Rzeros}
Actually, in~\cite{ChoiSD} Choi considered the following correspondence between 
linear maps on real symmetric matrices and biquadratic forms in variables 
$\x=(x_0, x_1, x_2)$, $ \y=(y_0, y_1, y_2)$:
\begin{equation} \label{eq:biqForm}
 \Psi \colon \M_3^{\text{sym}} \rightarrow \M_3^{\text{sym}}
  \  \invertleadsto	\! \leadsto \ p_{\Psi}(\x, \y) := \mel{\y}{\, 
  \Psi\left(
  \ketbra{\x}{\x} \right) \,}{\y},  \ \x, \y \in \RR^3,
\end{equation}
and proved that it induces an isompophism
between linear maps / positive maps / completely positive maps on symmetric matrices and  biquadratic forms / nonnegative biquadratic forms / sums of squares (SOS) of bilinear forms on $\RR^3 \times \RR^3$, respectively.
Consequently, in the $\RR$-setting, positive maps can be constructed by specifying the sets of zeros of nonnegative biquadratic forms (see \cite{Rezn,Quarez}). Quarez \cite{Quarez} proved that a 
nonnegative biquadratic form which is not a sum of squares can have at most 10 zeros. Contrarily, the number of real zeros of an SOS form is either infinite or at most 6. This implies that nonnegative biquadratic forms with 7, 8, 9 or 10 zeros define positive maps that are not completely positive. 
For example,  the Choi map \eqref{eq:ChoisMap} restricted to $\M_3^{\text{sym}}$ induces the 
biquadratic form 
\begin{equation*}
    p_{\Psi_C} = x_0^2 y_0^2 + x_1^2 y_1^2 + x_2^2 y_2^2 +
    x_0^2 y_2^2 + x_1^2 y_0^2 + x_2^2 y_1^2
    - 2 x_0 x_1 y_0 y_1 - 2 x_0 x_2 y_0 y_2 - 2 x_1 x_2 y_1 y_2
\end{equation*}
with 7 zeros 
\begin{multline} \label{eq:ChoiZeros}
      (1,1,1; 1,1,1), (1,1,-1;1,1,-1), (1,-1,1;1,-1,1), (-1,1,1;-1,1,1), \\
(1,0,0;0,1,0), (0,1,0,0,0,1), (0,0,1;1,0,0).
\end{multline}
This gives an alternative proof that the Choi map on $\M_3^{\text{sa}}$ is not completely positive (since its restriction to $\M_3^{\text{sym}}$ is not completely positive).

On the other hand,
the extensions of extremal positive maps from $\M_3^{\text{sym}}$ to $\M_3^{\text{sa}}$ are in general not extremal.
Nearly 40 years after Choi's construction, K.-C. Ha~\cite{KCHa} gave an elaborate proof that, also in the complex setting, the Choi map \eqref{eq:ChoisMap} lies on an extreme ray of the cone of positive maps 
$\bP(\M_3^{\text{sa}}, \M_3^{\text{sa}})$. 
In \cref{ex:Choi} we will show 
how the extremality of the Choi map follows directly from our construction.

The geometry of nonnegative polynomials is
relevant in all areas of polynomial optimisation, see~\cite{Bleck2}. In higher dimensions, most of recent developments in polynomial optimization are based on sum of squares decomposition of nonnegative polynomials.
We believe that the methods and the restrictions of positive maps considered in this paper
could contribute to the understanding of symmetric nonnegative forms.  
\end{remark}

\subsection{Construction of positive maps: the method of prescribing zeros} \label{subsec:construction}

In \cref{subsec:Notation} we explained that having an entanglement witness is equivalent to having a positive map $\Psi$ that is not completely positive, which in turn is equivalent to
the associated Choi matrix $C(\Psi)$ being block positive but not positive semidefinite. Moreover, $\Psi$ is an optimal entanglement witness if and only if it
belongs to an extreme ray in the positive cone $\bP$.

In what follows we describe our construction of families of entanglement witnesses, 
which we call "the method of prescribing the zero set", first discovered in \cite{BucSiv}.
We construct positive maps by extending Choi's approach, described in \eqref{eq:biqForm}, from the real symmetric
matrices to Hermitian matrices.
To a given self-adjointness preserving map
$\Psi \colon \M_3^{\text{sa}} \rightarrow  \M_3^{\text{sa}}$ 
we associate the form  
\begin{equation*}
  p_{\Psi}(\x, \y) := \mel{\y}{\, \Psi\left(
  \ketbra{\x}{\x} \right) \,}{\y},
\end{equation*}
where $\x=(x_0,x_1,x_2) \in \PP^2(\CC)$ and $\y=(y_0,y_1,y_2) \in \PP^2(\CC)$ (here we view $\x$ and $\y$ as points 
in the projective plane and not in $\CC^3$, 
which will be convenient when we consider the set of zeros of $p_{\Psi}$). 
Then, it follows directly from the definition of positive maps that,
$\Psi$ is positive  if and only if the associated form $p_{\Psi}$ is nonnegative (or equivalently, written in real variables, $p_{\Psi}$ is a nonnegative polynomial in
$ \frac{1}{2} (x_k + \ols{x_k}),\  \frac{1}{2 i} (x_k - \ols{x_k})$ 
and 
$\frac{1}{2} (y_k + \ols{y_k}),\  \frac{1}{2 i} (x_y - \ols{y_k})$ ).
The adjoint of a positive map
$\Psi \colon \M_3^{\text{sa}} \rightarrow \M_3^{\text{sa}}$  is another  positive map, which can be obtained from the following equalities 
\begin{IEEEeqnarray*}{CCCCC}
 \langle \Psi(\ketbra{\x}{\x}), \ketbra{\y}{\y} \rangle_{\text{HS}} &=&
 \Tr \left( \Psi(\ketbra{\x}{\x})  \ketbra{\y}{\y} \right) & = & p_{\Psi}(\x, \y) \\
  || & & & & \\
  \langle \ketbra{\x}{\x}, \Psi^{\dagger}(\ketbra{\y}{\y}) 
  &=&
 \Tr \left( \ketbra{\x}{\x}  \Psi^{\dagger}(\ketbra{\y}{\y}) \right) & = & p_{\Psi^{\dagger}}(\y, \x).
\end{IEEEeqnarray*}
The above equalities together with the Kraus decomposition \eqref{eq:Kraus} also imply that $\Psi$ is completely positive if and only if $\Psi^{\dagger}$ is completely positive.

When constructing a $\CC$-linear map $\Psi \colon \M_3 \rightarrow  \M_3$ it is advantageous to write it in coordinates, like
\begin{equation} \label{eq:PsiCoordinates}
    \Psi \left( \left[ z_{ij} \right]_{i,j=0}^2 \right)= 
\left[ \sum_{k,l=0}^2 a_{ijkl} \, z_{kl} \right]_{i,j=0}^2 
\ \text{ for } a_{ijkl} \in \CC.
\end{equation}
Note that $\Psi$ is 
self-adjointness preserving if and only if $\Psi(Z^{\dagger})=\Psi(Z)^{\dagger}$ for all $Z \in \M_3$, which in turn is equivalent to the Choi matrix $C(\Psi)$ being Hermitian. This shows 
that the coefficients of $\Psi$ 
are related as 
\begin{equation} \label{eq:SAcoeff}
    a_{ijkl}=\ols{a_{jilk}} \ \text{ for all } \ 
    0\leq i, j, k, l \leq 2.
\end{equation}
In the same coordinates, the adjoint map is
$ \Psi^{\dagger} \left( \left[ z_{ij} \right]_{i,j=0}^2 \right)= 
\left[ \sum_{k,l=0}^2 a_{lkji} \, z_{kl} \right]_{i,j=0}^2$.
Next we fix a set of points $\mathcal{Z} \subset \PP^2(\CC) \times \PP^2(\CC)$ and determine necessary conditions on $a_{ijkl}$, so that $\Psi$ is positive and $p_{\Psi}(\x,\y)=0$ for all $(\x,\y) \in \mathcal{Z}$. 
Under the assumption of positivity, both matrices $\Psi(\ketbra{\x}{\x})$ and $\Psi^{\dagger}(\ketbra{\y}{\y})$ are positive semidefinite, therefore
each zero $(\x,\y) \in \mathcal{Z}$ of $p_{\Psi}$ imposes additional linear relations among the coefficients $a_{ijkl}$ in the following way: 
\begin{equation} \label{eq:zeroCoeff}
    \Psi(\ketbra{\x}{\x})\ket{\y}=
    \begin{bmatrix}
        0 \\ 0 \\ 0
    \end{bmatrix}
   \  \text{ and } \
    \Psi^{\dagger}(\ketbra{\y}{\y}) \ket{\x}=
     \begin{bmatrix}
        0 \\ 0 \\ 0
    \end{bmatrix}.
\end{equation}

The group $\text{PGL}_3 \times \text{PGL}_3$ acts naturally on the cone of positive maps $\bP(\M_3^{\text{sa}}, \M_3^{\text{sa}})$ by
\begin{equation} \label{eq:PQaction}
     \Psi(Z)  \ \mapsto \ 
   Q^{\dagger}\, \Psi\left(P Z P^{\dagger} \right) Q,
\end{equation} where $P, Q \in \text{PGL}_3$, which is eqivalent to the change of variables in biquadratic forms
\begin{equation*}
     \mel{\y}{\, \Psi\left(
  \ketbra{\x}{\x} \right) \,}{\y} \   \mapsto \ 
  \mel{Q \y}{\, \Psi\left(
  \ketbra{P \x}{P \x} \right) \,}{ Q \y}.
\end{equation*} 
The above action also preserves the complete positivity, which can be seen directly from the Kraus decomposition \eqref{eq:Kraus}. 
Moreover, the cardinality of 
the zero set of $p_{\Psi}$ (i.e., the set 
$\set{ (\x, \y) \in \PP^2(\CC) \times \PP^2(\CC) 
\colon p_{\Psi}(\x, \y)=0 }$) is preserved,
and when the cardinality is at least four, we can by \cite{BucSiv} assume that four of
the zeros equal
\begin{equation} \label{eq:4zeros}
   (1,1,-1;1,1,-1), (1,-1,1;1,-1,1), (-1,1,1;-1,1,1) \text{ and }
    (1,1,1; 1,1,1).
\end{equation} 

We are now ready to summarize our construction of positive maps $\Psi$ by prescribing a sufficiently big set
$\mathcal{Z} \subset \PP^2(\CC) \times \PP^2(\CC)$ as zeros of 
the nonnegative forms $p_{\Psi}$. 
Without loss of generality we can assume that $\mathcal{Z}$ 
contains the four points in \eqref{eq:4zeros}.
We start by  solving the linear system of equations \eqref{eq:SAcoeff} and \eqref{eq:zeroCoeff} for all $(\x,\y)\in \mathcal{Z}$; the family of solutions $\set{ \Psi }$ satisfies necessary conditions for $\Psi$ being positive and $p_{\Psi}$ having the prescribed zeros. Next we determine which of these candidate maps are  
\begin{itemize}
    \item positive, i.e., $\Psi \left( \ketbra{\x}{\x} \right) \succeq 0$ for all $\x \in \PP^2(\CC)$;
    \item completely positive, i.e., the Choi matrix $C(\Psi) \succeq 0$;
    \item extremal, i.e., $\Psi$ is not a sum of different nonzero positive maps. 
\end{itemize}
In general the most difficult to verify is the positivity  (recall the top rows in \cref{tab:conesO,tab:conesSoO} showing that the positivity problem is as hard as the NP-hard separability problem). 
We are able to directly condition
which of the maps are positive. (Our approach differs from \cite{NieZh}, where the authors construct a hierarchy of semidefinite relaxations for minimizing the associated biquadratic form over the unit spheres. 
Alternative approach \cite{CueNet,NetPla} relates positivity 
with linear matrix inequalities that define spectrahedra and their shadows.) 

On the other hand, it is straightforward to determine which of the constructed positive maps
are extremal: if $\Psi = \Psi_1 + \Psi_2$, then $p_{\Psi} = 
p_{\Psi_1} + p_{\Psi_2}$, and both sets of zeros of $p_{\Psi_1}$
and $p_{\Psi_2}$ must contain the set $\mathcal{Z}$; this means that 
$\Psi_1, \Psi_2$ and $\Psi$
belong to the same family of positive maps. 

In the following example we construct a 5-parameter family of positive maps that includes the Choi map \eqref{eq:ChoisMap} and its  generalizations known in the literature. 

\begin{exmp}  \label{ex:Choi}
Consider the 3-parameter set of points in $\PP^2(\CC) \times \PP^2(\CC)$
\begin{multline} \label{eq:zerosChoiC}
\mathcal{Z}_C = 
 \left\{     (e^{i \varphi_0},e^{i \varphi_1},e^{i \varphi_2};\, 
      e^{i \varphi_0},e^{i \varphi_1},e^{i \varphi_2}), \right.  \\
\left. (1,0,0;\, 0,1,0), (0,1,0;\, 0,0,1), (0,0,1;\, 1,0,0)
 \right\},
\end{multline}
where $\varphi_0,\varphi_1,\varphi_2 \in [0, 2 \pi)$.  
It is straightforward to check that the points in $\mathcal{Z}_C$ 
are the zeros of 
$p_{\Psi_C}(\x, \y):=  \mel{\y}{\, \Psi_C\left(
  \ketbra{\x}{\x} \right) \,}{\y}$,
where $\Psi_C  \colon \M_3^{\text{sa}} \rightarrow \M_3^{\text{sa}}$ is the Choi map defined in \eqref{eq:ChoisMap}. (For this reason we will call 
$\mathcal{Z}_C$ the \textit{Choi set of zeros}.) 
Note that exactly 7 of the points in $\mathcal{Z}_C$ have coordinates in $\RR$ and they are equal to \eqref{eq:ChoiZeros}. 

We will construct all positive maps $\Psi$ such that the zero set of the associated forms $p_{\Psi}$ contains $\mathcal{Z}_C$. 
If we write $\Psi$ in coordinates \eqref{eq:PsiCoordinates}, then its coefficients $a_{ijkl} \in \CC $ need to satisfy the relations in \eqref{eq:SAcoeff} and \eqref{eq:zeroCoeff} for every $(\x, \y) \in \mathcal{Z}_C$. We obtain a linear system of equations in $a_{ijkl} := 
r_{ijkl} + i\, c_{ijkl}$ with the following 5-dimensional family of solutions:
\begin{equation*} 
    \Psi \left( \left[ z_{ij} \right]_{i,j=0}^2 \right) =
    \begin{bmatrix}
r_0 z_{00} + r_{0011} z_{11}  & 
(r_{0101}+ i\, c_{0101}) z_{01} & 
(r_{0202} - i\, c_{0101}) z_{02} \\
 (r_{0101} - i\, c_{0101}) z_{10} & 
 r_1 z_{11} + r_{0011} z_{22}  & 
 (r_{1212}+ i\, c_{0101} ) z_{12} \\ 
   (r_{0202} + i\, c_{0101})  z_{20} & 
   (r_{1212} - i\, c_{0101} )  z_{21} &
  r_{0011} z_{00} +r_2 z_{22}
    \end{bmatrix},
\end{equation*}
where we denote $r_0:= -(r_{0011} \!+\! r_{0101} \!+\! r_{0202})$, 
$r_1:= -(r_{0011} \!+\! r_{0101} \!+\! r_{1212})$ and
$r_2:= -(r_{0011} \!+\! r_{0202} \!+\! r_{1212})$.

Next we analyse which of these solutions define positive maps $\Psi$, which is equivalent to $\Psi(\ketbra{\x}{\x})$ being positive semidefinite for all $\x \in \PP^2(\CC)$.
To this end we view the $1 \times 1$ and $2 \times 2$ principal minors and the determinant of $\Psi(\ketbra{\x}{\x})$ as linear, quadratic and cubic polynomials in $s_i:=x_i \ols{x_i}=|x_i|^2$, respectively. 
The minors attain a simpler form if we replace
\begin{IEEEeqnarray}{C} \label{eq:paramrepl}
r_{0011}  \rightarrow r, \ \  c_{0101} \rightarrow c, \\
r_{0101} \rightarrow -\frac{1}{2} (r \!+\! r_0 \!+\! r_1 \!-\! r_2),\,  
r_{0202} \rightarrow -\frac{1}{2} (r  \!+\! r_0  \!-\! r_1  \!+\! r_2),\,  r_{1212} \rightarrow -\frac{1}{2} (r  \!-\! r_0  \!+\! r_1  \!+\! r_2).
\nonumber
\end{IEEEeqnarray}
Then the following expressions must be nonnegative for all $s_0, s_1, s_2 \geq 0$: from the diagonal elements of $\Psi(\ketbra{\x}{\x})$ we get
\begin{equation} \label{eq:minors1}
    r_0\, s_0 + r\, s_1, \ \
    r_1\, s_1 + r\, s_2, \ \ 
     r\, s_0 +r_2\, s_2,
\end{equation}
the principal $2 \times 2$ minors are
\begin{IEEEeqnarray}{L} \label{eq:minors2}
\left( 
r_0 r_1 -1/4 (r + r_0 + r_1 - r_2)^2 - c^2
\right) s_0 s_1 +  r r_0\, s_0 s_2 + 
r r_1\, s_1^2 + r^2\, s_1 s_2, 
\vspace{1mm} \nonumber \\
r r_0\,  s_0^2 + r^2\, s_0 s_1  +
\left( 
 r_0 r_2 -1/4 (r + r_0 - r_1 + r_2)^2 - c^2
\right) s_0 s_2 + r r_2\, s_1 s_2 , \vspace{1mm} \\
r r_1\, s_0 s_1 + r^2\, s_0 s_2 + 
\left( r_1 r_2 - 1/4  (r - r_0 + r_1 + r_2)^2 -c^2 
\right) s_1 s_2 + 
     r r_2\, s_2^2, \nonumber 
\end{IEEEeqnarray}
and the determinant is
\begin{IEEEeqnarray}{C}  \label{eq:minors3}
r \left( 
    r_0 r_1 -1/4 (r + r_0 + r_1 - r_2)^2 - c^2 
\right) s_0^2 s_1 +  \nonumber  \\ 
 r \left( 
r_0 r_2 -1/4 (r + r_0 - r_1 + r_2)^2 - c^2
\right) s_0 s_2^2 +   \nonumber  \\
r \left( 
  r_1 r_2 -1/4 (r - r_0 + r_1 + r_2)^2 - c^2
    \right) s_1^2 s_2 + \\
  r^2 r_1\, s_0 s_1^2 +   r^2 r_0\, s_0^2 s_2  + 
  r^2 r_2\, s_1 s_2^2 -  \nonumber  \\
 \frac{r}{4}  \left(
   6(r_0 r_1  +  r_0 r_2 +  r_1 r_2 ) +
 2 r (r_0 + r_1 + r_2) - 3 (r^2 + r_0^2 +  r_1^2 +r_2^2)  - 12 c^2  
    \right) s_0 s_1 s_2.  \nonumber 
\end{IEEEeqnarray}

Additionally, from the Choi matrix we can determine whether  
$\Psi$ is completely positive. We evaluate the Choi matrix $C(\Psi)$ by \eqref{eq:ChoiMat},
\begin{equation*}
    \begin{bmatrix}
r_0 & \cdot & \cdot & \cdot & 
r_{0101} + i\, c & \cdot & \cdot & \cdot & r_{0202} - i\, c \\ 
\cdot & r & 
\cdot & \cdot & \cdot & \cdot & \cdot & \cdot & \cdot  \\ 
\cdot &\cdot &\cdot &\cdot &\cdot &\cdot &\cdot &\cdot &\cdot  \\ 
\cdot &\cdot &\cdot &\cdot &\cdot &\cdot &\cdot &\cdot &\cdot  \\ 
r_{0101} - i\, c &
 \cdot & \cdot & \cdot & r_1 & 
 \cdot & \cdot & \cdot & r_{1212} + i\, c \\ 
\cdot & \cdot & \cdot & \cdot & \cdot & r & 
\cdot & \cdot & \cdot \\ 
\cdot & \cdot & \cdot & \cdot & \cdot & \cdot & r & 
\cdot & \cdot  \\
\cdot &\cdot &\cdot &\cdot &\cdot &\cdot &\cdot &\cdot &\cdot  \\ 
r_{0202} +i\, c & \cdot & \cdot & \cdot & 
r_{1212} - i\, c & 
\cdot & \cdot & \cdot & r_2
    \end{bmatrix}
\end{equation*}
(for clarity purposes
we write only the nonzero entries, and $r_{0101}, r_{0202}, r_{1212}$ are as in \eqref{eq:paramrepl}). The eigenvalues of $C(\Psi)$ are
\begin{IEEEeqnarray*}{CL}
 -r, &\\
 r &  \text{with multiplicity } 3,\\
 0 & \text{with multiplicity } 3, \\ 
   \frac{1}{2}( r \!+\! r_0 \!+\! r_1 \!+\! r_2) \pm & 
   \sqrt{3 c^2 \!+\! r_0^2 \!+\! r_1^2 \!+\! r_2^2
\!-\! r_0 r_1  \!-\!  r_0 r_2 \!-\! r_1 r_2 },
\end{IEEEeqnarray*}
which shows that $C(\Psi)$ can be positive semidefinite only if  $r=0$.
In the special case, when $r=0$, the matrix 
$ \Psi \left( \ketbra{\x}{\x} \right)$ is
\begin{equation*}
    \begin{bmatrix}
r_0\, x_0 \ols{x_0}  & 
-\frac{1}{2} (r_0 \!+\! r_1 \!-\! r_2  \!-\! 2 i c)  x_0 \ols{x_1} & 
- \frac{1}{2} (  r_0 \!-\! r_1 \!+\! r_2 \!+\! 2 i c)  x_0 \ols{x_2} \\
 -\frac{1}{2} (r_0 \!+\! r_1 \!-\! r_2  \!+\! 2 i c)  x_1 \ols{x_0} & 
 r_1 x_1 \ols{x_1}  & 
 \frac{1}{2} (r_0 \!-\! r_1 \!-\! r_2  \!+\! 2 i c)  x_1 \ols{x_2} \\ 
 -\frac{1}{2} (r_0 \!-\! r_1 \!+\! r_2  \!-\! 2 i c)   x_2 \ols{x_0} & 
\frac{1}{2} (r_0 \!-\! r_1 \!-\! r_2  \!-\! 2 i c)   x_2 \ols{x_1} &
r_2  x_2 \ols{x_2}
    \end{bmatrix},
\end{equation*}
whose determinant is zero and its $2 \times 2$ principal minors are equal to 
\begin{equation*}
   \left( 
   r_0 r_1 + r_0 r_2 + r_1 r_2- \frac{1}{4} (r_0 + r_1 + r_2)^2- c^2
   \right) x_i \ols{x_i}\,  x_j \ols{x_j} \ \ \text{ for } 
     i \neq j. 
\end{equation*}
This proves that for $r=0$ the map $\Psi$ is completely positive under  the following conditions: $r_0 \geq 0,\, r_1 \geq 0,\, r_2 \geq 0$
and
$2 (r_0 r_1 + r_0 r_2 + r_1 r_2) - r_0^2 - r_1^2 - r_2^2 -4 c^2
\geq 0$.
(Geometrically, these conditions define a conic in $\RR^3_{+}$, which we denote by $\mathcal{C}_c$; specifically $\mathcal{C}_0$ is a cone with vertex $(0,0,0)$ and axis along $(1,1,1)$.)  
On the other hand, when $r \neq 0$, we will from now on 
assume that $r=1$. It remains to determine the conditions when $\Psi$ is positive. 
The $1 \times 1$ and $2 \times 2$  principal minors \eqref{eq:minors1} and \eqref{eq:minors2} of $\Psi(\ketbra{\x}{\x})$ are everywhere nonnegative if 
\begin{IEEEeqnarray}{C} \label{eq:ChoiCondit}
r_0 \geq 0,\ r_1 \geq 0,\ r_2 \geq 0, \nonumber  \\
   4 r_0 r_1 - (1 + r_0 + r_1 - r_2)^2 - 4 c^2 \geq 0, \\
4 r_0 r_2 - (1 + r_0 - r_1 + r_2)^2 - 4 c^2  \geq 0, \nonumber \\
 4 r_1 r_2 - (1 - r_0 + r_1 + r_2)^2 - 4 c^2 \geq 0. \nonumber
\end{IEEEeqnarray} 
We claim that these conditions also imply $\det \Psi(\ketbra{\x}{\x})\geq 0$. Indeed, if we write 
$R:=r_0 r_1 + r_0 r_2 + r_1 r_2- 1/4 (1 + r_0 + r_1 + r_2)^2- c^2$, then the inequalities \eqref{eq:ChoiCondit} become 
$r_0 \geq 0,\ r_1 \geq 0,\ r_2 \geq 0$ and
$R + r_2 \geq 0, \ R + r_1 \geq 0,\ R + r_0 \geq 0$,
and the determinant \eqref{eq:minors3} is
\begin{IEEEeqnarray*}{C} 
(R+r_2) s_0^2 s_1 +  
 (R+r_1) s_0 s_2^2 + 
(R+r_0) s_1^2 s_2 + \\
   r_1\, s_0 s_1^2 +    r_0\, s_0^2 s_2  + 
  r_2\, s_1 s_2^2 - \\
   \left( 3 R + 2 r_0 + 2 r_1 + 2 r_2 
    \right) s_0 s_1 s_2,
\end{IEEEeqnarray*}
which is nonnegative for all $s_0=x_0 \ols{x_0},\, s_1=x_1 \ols{x_1},\, s_2=x_2 \ols{x_2}$ by 
the weighted AM-GM inequality (the inequality of arithmetic and geometric means). We combine the two constructions above and obtain the complete 5 parameter family of positive maps, 
\begin{IEEEeqnarray*}{C}
\Psi_{r,r_0,r_1,r_2,c} \colon \ \ 
 \left[ z_{ij} \right]_{i,j=0}^2  \ \ \mapsto \vspace{1mm}\\
    \begin{bmatrix}
r_0\, z_{00} +  r\, z_{11}  & 
( \substack{-\frac{1}{2} (r + r_0 + r_1 - r_2)} + i\, c) z_{01} & 
(\substack{-\frac{1}{2} (r + r_0 - r_1 + r_2)} - i\, c) z_{02} \\
 (\substack{-\frac{1}{2} (r + r_0 + r_1 - r_2)} - i\, c) z_{10} & 
 r_1\, z_{11} + r\, z_{22}  & 
(\substack{-\frac{1}{2} (r - r_0 + r_1 + r_2)}+ i\, c ) z_{12} \\ 
(\substack{-\frac{1}{2} (r + r_0 - r_1 + r_2)} + i\, c)  z_{20} & 
(\substack{-\frac{1}{2} (r - r_0 + r_1 + r_2)} - i\, c )  z_{21} &
 r\,  z_{00} +r_2\, z_{22}
    \end{bmatrix},
\end{IEEEeqnarray*}
where $r, r_0,r_1,r_2,c$ satisfy the inequalities 
\begin{IEEEeqnarray*}{C} 
r \geq 0,\ r_0 \geq 0,\ r_1 \geq 0,\ r_2 \geq 0,   \\
R + r r_2 \geq 0, \ R + r r_1 \geq 0,\ R + r r_0 \geq 0
\end{IEEEeqnarray*} 
and $R$ stands for $r_0 r_1 + r_0 r_2 + r_1 r_2- 1/4 (r + r_0 + r_1 + r_2)^2- c^2$.

Finally, we select the maps that lie on the extreme rays of the positive cone. Recall that the above construction returns all positive maps $\Psi$ such that the associated form $p_{\Psi}$ is zero on the set 
$\mathcal{Z}_C$ prescribed in \eqref{eq:zerosChoiC}. Observe that 
$\Psi_{r,r_0,r_1,r_2,c} $ is positive if and only if $\Psi_{r,r_0,r_1,r_2,-c}$ is positive.
When positive map $\Psi_{r,r_0,r_1,r_2,c}$ has $c \neq 0$, then 
for any $\lambda \in (0,1)$ we get
$\lambda \Psi_{r,r_0,r_1,r_2,c} + (1-\lambda) \Psi_{r,r_0,r_1,r_2,-c}=
 \Psi_{r,r_0,r_1,r_2,(2 \lambda -1)c}$  another positive map which which is not extremal.
Next we present a geometric argument showing that
any  $\Psi_{1,r_0,r_1,r_2,c}$ can be written as the sum of nonzero positive and completely positive maps, unless at least one of the conditions $r_0=r_1,\, r_0=r_2,\, r_1=r_2$ is true (in particular this implies the extremality of the Choi map \eqref{eq:ChoisMap} with $r_0=r_1=r_2=1$ and $c=0$). 
Positive maps $\Psi_{1,r_0,r_1,r_2,c}$ are determined by
the inequalities \eqref{eq:ChoiCondit},
which parametrise a convex set with vertex 
$(\frac{1}{3}(1+2\sqrt{1+3 c^2}) \times (1,1,1)$ obtained  as the intersection of three conics, namely the translations of 
$\mathcal{C}_c$ by 
$(1,0,0), (0,1,0)$ and $(0,0,1)$. Specifically, when $c=0$, positive maps $\Psi_{1,r_0,r_1,r_2,0}$ are determined by the intersection of cones
$(1,0,0)+\mathcal{C}_0$,
$(0,1,0)+\mathcal{C}_0$ and $(0,0,1)+\mathcal{C}_0$.
Given an extremal positive map $\Psi_{1,r_0,r_1,r_2,c}$, we can
without loss of generality assume  $1\leq r_0\leq r_1\leq r_2$ and that $(r_0,r_1,r_2)$ lies on the boundary of the conic $(1,0,0)+\mathcal{C}_c$, i.e.,   $r_0 r_1 + r_0 r_2 + r_1 r_2- 1/4 (1 + r_0 + r_1 + r_2)^2- c^2+r_0=0$. Then
$(r_0-1,r_1,r_2)$ lies on $\mathcal{C}_c$, which implies that $\Psi_{0,r_0-1,r_1,r_2,c}$ is completely positive. If furthermore 
$r_0<r_1$, there exists an $\varepsilon>0$ such that  
$(1-\varepsilon)r_0 +\varepsilon \leq (1-\varepsilon) r_1 \leq 
(1-\varepsilon) r_2$. Then it is straightforward to verify that 
$((1-\varepsilon)r_0 +\varepsilon , (1-\varepsilon) r_1 , (1-\varepsilon) r_2) \in (1,0,0)+\mathcal{C}_{(1-\varepsilon) c}$,
and consequently  $\Psi_{1,(1-\varepsilon)r_0 +\varepsilon , (1-\varepsilon) r_1 , (1-\varepsilon) r_2,(1-\varepsilon) c}$ is a positive map.
 Since 
 \begin{IEEEeqnarray*}{CCC} 
\Psi_{1,r_0,r_1,r_2,c} & = & \Psi_{0,\varepsilon (r_0-1), \varepsilon r_1, \varepsilon r_2, \varepsilon c} + \Psi_{1,(1-\varepsilon)r_0 +\varepsilon , (1-\varepsilon) r_1 , (1-\varepsilon) r_2,(1-\varepsilon) c} \\
& = & \varepsilon \, \Psi_{0,r_0-1, r_1,  r_2,  c} + \Psi_{1,(1-\varepsilon)r_0 +\varepsilon , (1-\varepsilon) r_1 , (1-\varepsilon) r_2,(1-\varepsilon) c},
\end{IEEEeqnarray*} 
we obtain a contradiction with the extremality, unless it holds $r_0=r_1$.
\end{exmp}

\section{New examples of entanglement witnesses}
\label{sec:examples}

In~\cite{BucSiv} we constructed new families of 
nonnegative forms $p_{\Psi}$ on $\PP^2(\CC) \times \PP^2(\CC)$ which
have 8, 9 or 10 real zeros. 
This ensures that the associated positive maps
$\Psi \colon \M_3^{\text{sa}} \rightarrow \M_3^{\text{sa}}$
are significantly different from the Choi map \eqref{eq:ChoisMap} 
and its generalizations considered in \cref{ex:Choi}, whose corresponding forms have 7 real zeros.

From the examples in~\cite{BucSiv} we want to choose positive maps that are \textit{new} entanglement witnesses, i.e.,
they detect entanglement in some states which the PPT criterion fails to certify. In order to achieve this, we can only pick the quantum maps which are not decomposable. Indeed, a positive map  $\Psi \in  \boldsymbol{DEC}$ is by definition of the form
$\Psi= \Psi_{\text{cp}_1} + T \circ \Psi_{\text{cp}_2}$ for some 
completely positive maps $\Psi_{\text{cp}_1}$ and $\Psi_{\text{cp}_2}$.
Then, by the correspondences in \cref{tab:conesO,tab:conesSoO}, it holds
\begin{equation*}
    C(\Psi) \in \text{co-}\!\PSD + \PSD = \mathcal{PPT}^{\ast}:=
    \{\tau \colon \langle \tau, \rho
    \rangle_{\text{HS}} \geq 0, \forall \rho \in \mathcal{PPT}  \}. 
\end{equation*}
This shows that whenever a decomposable $\Psi$ witnesses entanglement in a state $\rho$, the entanglemnt of $\rho$ is already detected by the transposition (i.e., by the PPT criterion). 

Consequently, a positive map $\Psi$ which lies on an extreme ray of the positive cone 
is decomposable if and only if  either $\Psi \in \CP$
or  $T \circ \Psi \in \CP$. This implies the following advantage of optimal entanglement witnesses.
An extremal positive map that is neither CP nor co-CP defines, not only an optimal entanglement witness, but also a new entanglement witness that confirms entanglement in some new PPT states.  

\subsection{The Buckley-\v Sivic maps} 
\label{subsec:BucSiv}
From the prescribed sets of zeros we can simultaneously determine the positivity, complete positivity and extremality for the entire families of quantum maps (see \cref{subsec:construction}). We consider 
examples of positive maps on $\M_3^{\text{sa}}$ from \cite{BucSiv}, 
which  belong to extreme rays in the cone of positive maps.
Moreover, we will select quantum maps that are not decomposable, therefore the linear functionals 
$\langle C(\Psi), \cdot \, \rangle_{\text{HS}}=
\Tr \left( C(\Psi) \, \cdot \, \right)$
are new optimal entanglement witnesses.

\begin{exmp}[10 real zeros] \label{ex:BS10}
We present the family of positive maps corresponding to 
$p_{\Psi}(\x, \y)$ with the following set of zeros in $\CC$
\begin{multline*} \mathcal{Z}_t :=
 \left\{     (e^{i \varphi_0},e^{i \varphi_1},e^{i \varphi_2};\, 
      e^{i \varphi_0},e^{i \varphi_1},e^{i \varphi_2}), \right.  \\
\left. (1,t e^{i \varphi},0;\, t e^{-i \varphi},1,0), 
(0,1, t e^{i \varphi};\, 0, t e^{-i \varphi}, 1), 
(t e^{i \varphi}, 0,1;\, 1,0,t e^{-i \varphi})
 \right\},
\end{multline*}
where $\varphi, \varphi_0,\varphi_1,\varphi_2 \in [0, 2 \pi)$ and 
$t \in \RR$; only 10 of them have real coordinates,
\begin{multline*}
      (1,1,1; 1,1,1), (1,1,-1;1,1,-1), (1,-1,1;1,-1,1), (-1,1,1;-1,1,1), \\
(1,t,0;t,1,0), (0,1,t,0,t,1), (t,0,1;1,0,t),\\ 
 (1,-t,0; -t,1,0), (0,1,-t;0,-t,1), (-t,0,1;1,0,-t).
\end{multline*}

\begin{theorem} \label{thm:BS10} Quantum maps 
$\Psi_t \colon \M_3^{\text{sa}} \to \M_3^{\text{sa}}$ 
of the form 
\begin{equation*}
 \begin{array}{c}
\begin{bmatrix}
    z_{00}& z_{01}&z_{02} \\
z_{10}&   z_{11} &z_{12}\\
z_{20} &z_{21} &  z_{22}
\end{bmatrix}
\\
\begin{tikzcd}
  \arrow[mapsto]{d}
  \\
\phantom{a} 
\end{tikzcd}\\
\begin{bmatrix}
(t^2-1)^2 z_{00}+z_{11}+t^4 z_{22} \!\!&\!\! -(t^4-t^2+1)\, z_{01}
\!\!&\!\! -(t^4-t^2+1)\, z_{02} \\
-(t^4-t^2+1)\, z_{10} \!\!&\!\!  t^4 z_{00}+(t^2-1)^2 z_{11}+z_{22} \!\!&\!\!
-(t^4-t^2+1)\, z_{12}\\
-(t^4-t^2+1)\, z_{20} \!\!&\!\! -(t^4-t^2+1)\, z_{21}  \!\!&\!\!
z_{00}+t^4 z_{11}+ (t^2-1)^2 z_{22}
\end{bmatrix}\\
\end{array}   
\end{equation*}
are  positive for all $ t\in \mathbb{R}.$
Apart from $t=\pm 1$, every $\Psi_t$ defines an extreme ray in the cone of positive maps and is neither completely nor co-completely positive. 
\end{theorem}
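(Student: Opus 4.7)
The plan is to verify the three claims—positivity of $\Psi_t$, extremality of $\Psi_t$ for $t \neq \pm 1$, and failure of both complete and co-complete positivity for $t \neq \pm 1$—by closely mirroring the template of \cref{ex:Choi} and the zero-set mechanism of \cref{subsec:construction}.

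For positivity, I would show that $\Psi_t(|\mathbf{x}\rangle\langle\mathbf{x}|)\succeq 0$ for every $\mathbf{x}\in\CC^3$ by checking that all its principal minors are nonnegative polynomials in the real variables $s_k := x_k\overline{x_k}$. The diagonal entries are manifestly nonnegative. A direct expansion shows the $(0,1)$ principal $2\times 2$ minor equals
\begin{equation*}
(t^2-1)^2(t^2 s_0 - s_1)^2 + t^4 s_2^2 + \bigl((t^2-1)^2 + t^8\bigr) s_0 s_2 + \bigl(1 + t^4(t^2-1)^2\bigr) s_1 s_2,
\end{equation*}
which is a sum of squares and nonnegative monomials; the other two $2\times 2$ minors follow by the cyclic $\mathbb{Z}/3$-symmetry of the coefficients, and the $3\times 3$ determinant admits an analogous regrouping in $s_0,s_1,s_2$. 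A short substitution also confirms that each point of $\mathcal{Z}_t$ annihilates $p_{\Psi_t}$: for example on the diagonal family $(e^{i\varphi_0},e^{i\varphi_1},e^{i\varphi_2};\,e^{i\varphi_0},e^{i\varphi_1},e^{i\varphi_2})$ the sum of the diagonal contributions is $6(t^4-t^2+1)$, exactly cancelled by the six off-diagonal contributions, and the other families are analogous.

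For extremality I invoke the principle stated just after \eqref{eq:zeroCoeff}: any decomposition $\Psi_t=\Psi_1+\Psi_2$ into positive maps forces $p_{\Psi_1}$ and $p_{\Psi_2}$ to vanish on the entire prescribed set $\mathcal{Z}_t$, so $\Psi_1,\Psi_2$ must lie in the family of solutions of \eqref{eq:SAcoeff} and \eqref{eq:zeroCoeff} for $\mathcal{Z}_t$. The task therefore reduces to a real linear-algebra calculation: solve this linear system in the coefficients $a_{ijkl}$ and show that, for $t\neq\pm 1$, the real solution space collapses to the single ray $\RR_{\geq 0}\cdot\Psi_t$. The three continuous phase families of zeros in $\mathcal{Z}_t$—the diagonal $(\varphi_0,\varphi_1,\varphi_2)$-family and the three $\varphi$-families—each contribute, after differentiation in the angular parameters, many independent linear relations among the $a_{ijkl}$, which is what distinguishes the present situation from the $5$-parameter Choi-like family of \cref{ex:Choi}. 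Carrying out this dimension count explicitly, and identifying $t=\pm 1$ as precisely the degenerate values where additional symmetries enlarge the solution space (so that $\Psi_{\pm 1}$ becomes the decomposable reduction map), is the main computational obstacle.

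Finally, the Choi matrix $C(\Psi_t)$ has the same sparsity pattern as the one displayed in \cref{ex:Choi}, and its $3\times 3$ principal submatrix indexed by $\{(0,0),(1,1),(2,2)\}$ equals $(t^2-1)^2 I - (t^4-t^2+1)(J-I)$, whose eigenvalue on the all-ones vector $(1,1,1)^T$ is $-t^4-1<0$; hence $C(\Psi_t)\not\succeq 0$ and $\Psi_t\notin\CP$ for every $t\in\RR$. For failure of co-complete positivity I use $T\circ\Psi_t\in\CP\Longleftrightarrow \Gamma(C(\Psi_t))\succeq 0$ and examine the $2\times 2$ principal submatrix of $\Gamma(C(\Psi_t))$ indexed by $\{(1,0),(0,1)\}$; its diagonal entries are $t^4$ and $1$ and its off-diagonal entry is $-(t^4-t^2+1)$, so its determinant is $t^4-(t^4-t^2+1)^2=-(t^2-1)^2(t^4+1)$, which is strictly negative exactly when $t\neq\pm 1$, matching the hypothesis of the theorem.
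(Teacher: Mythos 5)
Your treatment of the non-(co-)complete-positivity claims is correct and in fact more economical than the paper's: the paper computes the full spectra of $C(\Psi_t)$ and $C(T\circ\Psi_t)$, whereas you read off the negative eigenvalue $-(1+t^4)$ of $C(\Psi_t)$ from the $3\times3$ principal block on $\{\ket{00},\ket{11},\ket{22}\}$, and you detect the failure of co-complete positivity for $t\neq\pm1$ from a single $2\times2$ principal minor of $\Gamma(C(\Psi_t))$, whose value $t^4-(t^4-t^2+1)^2=-(t^2-1)^2(t^4+1)$ cleanly isolates $t=\pm1$; both computations check out. Your decomposition of the $(0,1)$ principal $2\times2$ minor of $\Psi_t(\ketbra{\x}{\x})$ is also correct (the coefficient of $s_0s_1$ is $(t^2-1)^4+t^4-(t^4-t^2+1)^2=-2t^2(t^2-1)^2$, matching your square term), and the cyclic symmetry does yield the other two minors. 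Checking all seven principal minors is a valid PSD criterion, so up to this point the positivity argument is sound.

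There are, however, two genuine gaps. First, the determinant. You assert that it ``admits an analogous regrouping in $s_0,s_1,s_2$,'' but no regrouping of the kind you used for the $2\times2$ minors can exist: the determinant equals $(1-t^2)^2$ times the generalized Robinson polynomial, a homogeneous cubic in the $s_i$ that vanishes at the interior point $s_0=s_1=s_2$. A nonnegative combination of monomials is strictly positive there unless identically zero, and a homogeneous cubic cannot be a sum of squares of polynomials; moreover several of its monomial coefficients (e.g.\ $t^8-2t^2$ and $1-2t^6$) are genuinely negative for some $t$. This is precisely the hard core of the positivity proof --- the paper disposes of it by identifying the determinant with the generalized Robinson form and citing Reznick for its nonnegativity --- so you need either that reference or an explicit Schur-/weighted-AM--GM-type argument as in \cref{ex:Choi}. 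Second, extremality. You correctly reduce it to showing that the linear system \eqref{eq:SAcoeff}--\eqref{eq:zeroCoeff} for $\mathcal{Z}_t$ has a one-dimensional real solution space when $t\neq\pm1$, which is exactly the mechanism the paper invokes, but you explicitly leave that computation undone and label it ``the main computational obstacle''; as written the proposal therefore does not establish that $\Psi_t$ spans an extreme ray. (Note also that the exclusion of $t=\pm1$ is forced already by co-complete positivity there: $\Psi_{\pm1}(Z)=(\Tr Z)I-Z$ is the reduction map, and the Kraus-rank-$3$ decomposition of $T\circ\Psi_{\pm1}$ exhibits $\Psi_{\pm1}$ as a sum of three non-proportional positive maps, independently of any enlargement of the solution space of the linear system.)
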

\begin{proof} 
The above maps are the solutions of the linear system 
of equations \eqref{eq:SAcoeff} and \eqref{eq:zeroCoeff} for every $(\x, \y) \in \mathcal{Z}_t$. 
First we verify that $\Psi_t$ are positive maps. The matrix
$\Psi_t\left(  \ketbra{\x}{\x} \right)$ is positive semidefinite for all $\x =(x_0, x_1,x_2) \in \CC^3$ if
the following conditions hold:
\begin{enumerate}
    \item $\Tr \Psi_t\left(  \ketbra{\x}{\x} \right) = 
    2 \left(1 - t^2 + t^4 \right) \left(|x_0|^2  + |x_1|^2 + |x_2|^2 \right)
    \geq 0$,
    \item the sum of the principal $2 \times 2$ minors 
    $\left( 1 - t^2 + t^4 \right)^2 
    \left( |x_0|^2  + |x_1|^2 + |x_2|^2 \right)^2  \geq 0$,
     \item $\det \Psi_t\left(  \ketbra{\x}{\x} \right) \geq 0$, 
     where \vspace{-5mm}
\end{enumerate}
\begin{multline*} 
    \det \Psi_t\left(  \ketbra{\x}{\x} \right) \ \ \  = \ \ \ 
    (1-t^2)^2\times  \vspace{1mm} \\
     \left[   t^4 \left(|x_0|^6   \!+\! |x_1|^6  \!+\! |x_2|^6 \right) +
    (t^8 \!-\! 2 t^2) \left(|x_0|^4 |x_1|^2  \!+\! 
   |x_0|^2 |x_2|^4    \!+\! |x_1|^4  |x_2|^2 \right)+ \right.  \vspace{1mm} \\
   \left.  (1 \!\!-\!\! 2 t^6) \left(|x_0|^2  |x_1|^4  \!+\! 
   |x_0|^4  |x_2|^2  \!+\!   |x_1|^2 |x_2|^4 \right)
   \!-\! 3 (1 \!\!-\!\! 2 t^2 \!\!+\!\! t^4 \!\!-\!\! 2 t^6 \!\!+\!\! t^8) |x_0|^2|x_1|^2|x_2|^2  \right].
\end{multline*}
Polynomials in 1. and 2. are nonnegative,
and 3. equals the \textit{generalized Robinson polynomial}~\cite{Rezn}, which is known to be positive everywhere except in 10 zeros.

Next we calculate the Choi matrix $C(\Psi_t)$, 
\begin{equation*}
 \left[   \begin{array}{ccc|ccc|ccc}
\!\!\! \substack{\left(t^2-1 \right)^2 } \!\!&\!\! \cdot &  \cdot  &  
\cdot   \!\!&\!\!
  \substack{-1 + t^2 - t^4 } \!\!&\!\! \cdot  & 
\cdot  &  \cdot   \!\!&\!\!
  \substack{-1 + t^2 - t^4 } \!\!\!  \\
 \cdot  & 1 &  \cdot & \cdot &  \cdot  &  \cdot  &
   \cdot  &  \cdot  &  \cdot  \\
 \cdot  &  \cdot  & t^4 &
  \cdot  &  \cdot  &  \cdot & \cdot & \cdot  &  \cdot  \\
\hline
 \cdot  &\cdot &  \cdot  &
 t^4 &  \cdot  &  \cdot  &
  \cdot  &  \cdot  &  \cdot  \\
\!\!\! \substack{-1 + t^2 - t^4 }  \!\!&\!\!   \cdot  &  \cdot  &
 \cdot \!\!&\!\! \substack{\left(t^2-1 \right)^2} \!\!&\!\!  \cdot &
  \cdot  &  \cdot   \!\!&\!\!  \substack{-1 + t^2 - t^4 }  \!\!\!  \\
 \cdot  &  \cdot &  \cdot  &
  \cdot  &  \cdot  & 1 &
   \cdot & \cdot & \cdot \\
\hline
 \cdot  &  \cdot  & \cdot &
  \cdot  &  \cdot  &  \cdot &
  1  &  \cdot  &  \cdot \\
 \cdot  &  \cdot  &  \cdot  &
  \cdot  &  \cdot  & \cdot & 
   \cdot & t^4 &  \cdot \\
\!\!\!  \substack{-1 + t^2 - t^4 } \!\!&\!\!  \cdot  &  \cdot  &
 \cdot  \!\!&\!\!  \substack{-1 + t^2 - t^4 } \!\!&\!\!  \cdot  &
  \cdot &  \cdot  \!\!&\!\! \substack{\left(t^2-1 \right)^2} \!\!\!
\end{array} \right].
\end{equation*}
The eigenvalues of $C(\Psi_t)$ are 
\begin{equation*}
    1,\, 1,\, 1,\, t^4,\, t^4,\, t^4,\, -(1 + t^4),\, 2 - 3 t^2 + 2 t^4,\, 2 - 3 t^2 + 2 t^4,
\end{equation*}
therefore $\Psi_t$ is not completely positive since $ -(1 + t^4)$ is negative for all $t \in \RR$. For an alternative proof see \cref{rem:Rzeros} and the discussion on the number of real zeros.

Finally, $\Psi_t$ is an extremal map since it is,
up to a positive factor, the only positive linear map for which $p_{\Psi}$ has the prescribed set of zeros.
It remains to be checked whether there exist $t$ for which $\Psi_t$ is co-completely positive, i.e., $T \circ \Psi_t$ is completely positive. 
To this end we  calculate the Choi matrix $C(T \circ  \Psi_t)$ 
\begin{equation*}
 \left[   \begin{array}{ccc|ccc|ccc}
\!\!\! \substack{\left(t^2-1 \right)^2 } \!\!&\!\! \cdot &  \cdot  &  
\cdot  & \cdot & \cdot  & 
\cdot  &  \cdot   & \cdot \\
 \cdot  & 1 &  \cdot  \!\!&\!\!
  \substack{-1 + t^2 - t^4 } \!\!&\!\!  \cdot  &  \cdot  &
   \cdot  &  \cdot  &  \cdot  \\
 \cdot  &  \cdot  & t^4 &
  \cdot  &  \cdot  &  \cdot  \!\!&\!\!
  \substack{-1 + t^2 - t^4 } \!\!&\!\! \cdot  &  \cdot  \\
\hline
 \cdot   \!\!&\!\! \substack{-1 + t^2 - t^4 } \!\!&\!\!  \cdot &
 t^4 &  \cdot  &  \cdot  &
  \cdot  &  \cdot  &  \cdot  \\
 \cdot &  \cdot  &  \cdot  &
 \cdot \!\!&\!\! \substack{\left(t^2-1 \right)^2} \!\!&\!\! \cdot &
  \cdot  &  \cdot  & \cdot \\
 \cdot  &  \cdot &  \cdot  &
  \cdot  &  \cdot  & 1 &
   \cdot  \!\!&\!\! \substack{-1 + t^2 - t^4 } \!\!&\!\! \cdot \\
 \hline
 \cdot  &  \cdot  \!\!&\!\! \substack{-1 + t^2 - t^4 } \!\!&\!\!
  \cdot  &  \cdot  &  \cdot &
  1  &  \cdot  &  \cdot \\
 \cdot  &  \cdot  &  \cdot  & 
 \cdot  & \cdot  \!\!&\!\!  \substack{-1 + t^2 - t^4 } \!\!&\!\!
   \cdot & t^4 &  \cdot \\
\cdot & \cdot  &  \cdot  &
 \cdot  & \cdot &  \cdot  &
  \cdot &  \cdot  \!\!&\!\! \substack{\left(t^2-1 \right)^2} \!\!\!
\end{array} \right],
\end{equation*}
whose eigenvalues are
\begin{equation*}
    (-1 + t^2)^2,\   \frac{1}{2} \left(1 + t^4 \pm
    \sqrt{ 5 - 8 t^2 + 10 t^4 - 8 t^6 + 5 t^8} \right),
\end{equation*}
each with multiplicity 3. It is easy to verify that 
$1 + t^4 - \sqrt{ 5 - 8 t^2 + 10 t^4 - 8 t^6 + 5 t^8}$ is negative except for $t= \pm1$ when it attains value zero.
This proves that $T \circ \Psi_{t}$ is not completely positive unless $t= \pm1$. The completely positive map $T \circ \Psi_{\pm 1}$
has the Kraus decomposition 
\begin{equation*}
  T \circ  \Psi_{\pm 1}\left(  Z \right) = 
    A_1 Z A_1^{\ast}  + A_2 Z A_2^{\ast} 
   +A_3 Z A_3^{\ast},
\end{equation*}
where
\begin{equation*}
    A_1= \begin{bmatrix}
    0 & 1 & 0 \\ 
    -1 & 0 & 0 \\
    0& 0& 0
    \end{bmatrix}, \ 
    A_2= \begin{bmatrix}
    0 & 0 & -1 \\ 
    0 & 0 & 0 \\
    1 & 0 & 0
    \end{bmatrix}, \
    A_3 = \begin{bmatrix}
    0 & 0 & 0 \\ 
    0 & 0 & 1 \\
    0 & -1 & 0
    \end{bmatrix}.
\end{equation*}
\end{proof}

\begin{remark}
For $t=0$, the set of zeros $\mathcal{Z}_t$ is the Choi set of zeros
\eqref{eq:zerosChoiC} considered in \cref{ex:Choi}. The map $\Psi_0$ 
is equal to the Choi map \eqref{eq:ChoisMap}.
The extreme positive maps in \cref{thm:BS10} were first found in 
\cite{HaKyeE}, as an exposed and indecomposable subset of the Choi-type maps in \cite{HaKye}, by using alternative methods to ours. 
Note also that the families of positive maps in \cite{HaKye} and \cref{ex:Choi} overlap. 
\end{remark}

\end{exmp}

\begin{exmp}[9 real zeros] \label{ex:BS9}
Next we present the family of positive maps corresponding to 
$p_{\Psi}(\x, \y)$ with the following zeros,
\begin{multline*} \label{eq:zerospq}
\mathcal{Z}_{p,q} := \left\{
      (e^{i \varphi_0},e^{i \varphi_1},e^{i \varphi_2};\, 
      e^{i \varphi_0},e^{i \varphi_1},e^{i \varphi_2}), \right. \\
\left. (1,p\, e^{i \varphi},0;\, q\, e^{-i \varphi},1,0), 
 (0,1,q\, e^{i \varphi};\, 0,p\, e^{-i \varphi},1), (0,0,1;\, 1,0,0) 
 \right\},
\end{multline*}
where $p,q\in\RR$ and $\varphi, \varphi_0,\varphi_1,\varphi_2 \in [0, 2 \pi)$.
Note that 9 of these zeros are real,
\begin{multline*}
      (1,1,1; 1,1,1), (1,1,-1;1,1,-1), (1,-1,1;1,-1,1), (-1,1,1;-1,1,1), \\
(1,p,0; q,1,0), (1,-p,0; -q,1,0), 
 (0,1,q; 0,p,1), (0,1,-q; 0,-p,1),  (0,0,1; 1,0,0).
\end{multline*}

\begin{theorem} \label{thm:BS9}
Superoperators $\Psi_{p,q} \colon \M_3^{\text{sa}} \to \M_3^{\text{sa}}$ 
of the form 
\begin{equation*}
 \begin{array}{c}
Z
\\
\begin{tikzcd}
  \arrow[mapsto]{d}
  \\
\phantom{a} 
\end{tikzcd}\\
\begin{bmatrix} \!\!
\substack{p^2\!(p q\!-\!1)^2\! z_{00}\!+\!q(2p\!-\!q)z_{11}}  
\!\!\!\!\!\!&\!\!\!\!\!\! 
\substack{-p q (1\!-\!q^2\!+\!p^2\! q^2) z_{01}} 
\!\!\!\!\!\!\!\!\!\!\!\!&\!\!\!\!\!\!\!\!\!\!\!\!
\substack{(p q\!-\!1)(p^2\!+\!p q\!-\!p^3 q\!-\!q^2\!+\!p^2 q^2) z_{02}} \\
\substack{-p q (1\!-\!q^2\!+\!p^2\! q^2) z_{10}} 
\!\!\!\!\!\!&\!\!\!\!\!\!  
\substack{p^2\! q^3\! (2p\!-\!q) z_{00} \!+\! q^2\!(p q\!-\!1)^2\! z_{11}
\!+\!q (2p\!-\!q)z_{22} } 
\!\!\!\!\!\!\!\!\!\!\!\!&\!\!\!\!\!\!\!\!\!\!\!\! 
\substack{-p q(1\!-\!q^2\!+\!p^2\! q^2) z_{12} }\\
\!\! \substack{(p q\!-\!1)(p^2\!+\!p q\!-\!p^3\! q\!-\!q^2\!+\!p^2\! q^2) z_{20}} \!\!\!\!\!\!&\!\!\!\!\!\!
\substack{-p q(1\!-\!q^2\!+\!p^2\! q^2) z_{21} } 
\!\!\!\!\!\!\!\!\!\!\!&\!\!\!\!\!\!\!\!\!\!\!\! 
\substack{q(2p\!-\!q)(1\!-\!p^2\!q^2)z_{00} \!+\! 
p^2\! q^3\!(2p\!-\!q) z_{11} \!+\! p^2\!(p q\!-\!1)^2\! z_{22} }
\!\! \end{bmatrix}\\
\end{array}   
\end{equation*}
are  extreme positive maps in the region
\begin{displaymath}  \mathcal{R} =
\set{ (p,q) \in (0,\frac{1}{\sqrt{2}}) \times (0,\sqrt{2}) \colon
2p-q \geq 0, (p^2-1)^2q^2-p^2 \geq 0}.
\end{displaymath}
Apart from the segment $2p=q$, they are neither completely nor co-completely positive.
\end{theorem}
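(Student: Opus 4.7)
The plan is to follow the same four-stage template as in the proof of \cref{thm:BS10}: (i) derive the formula for $\Psi_{p,q}$ as the unique (up to scalar) solution of the linear system imposed by the prescribed zeros, (ii) verify positivity by expressing the three traces/minors/determinant of $\Psi_{p,q}(\ketbra{\x}{\x})$ in the real variables $s_i=|x_i|^2$ and identifying the region $\mathcal{R}$ as the set of parameters for which these expressions are globally nonnegative, (iii) obtain extremality for free from the ``prescribed zero set'' argument at the end of \cref{subsec:construction}, and (iv) rule out $\CP$ and co-$\CP$ via direct spectral analysis of $C(\Psi_{p,q})$ and $C(T\circ\Psi_{p,q})$.

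For stage (i), the relations \eqref{eq:SAcoeff} and \eqref{eq:zeroCoeff} applied to each point of $\mathcal{Z}_{p,q}$ yield a linear system in the coefficients $a_{ijkl}$; solving it (or verifying by direct substitution) shows that the displayed $\Psi_{p,q}$ indeed annihilates $p_{\Psi}$ on $\mathcal{Z}_{p,q}$ and is the only such $\CC$-linear self-adjointness-preserving map up to positive scalar. This last uniqueness statement is exactly what makes stage (iii) automatic: if $\Psi_{p,q} = \Psi_1 + \Psi_2$ with $\Psi_1,\Psi_2$ nonzero positive maps, then both $p_{\Psi_1}$ and $p_{\Psi_2}$ vanish on $\mathcal{Z}_{p,q}$, so each $\Psi_i$ is a positive multiple of $\Psi_{p,q}$, proving extremality in the cone of positive maps whenever $\Psi_{p,q}$ is positive.

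Stage (ii) is the main obstacle. Setting $s_i:=|x_i|^2\ge 0$, the diagonal of $\Psi_{p,q}(\ketbra{\x}{\x})$ and the three $2\times 2$ principal minors become polynomials in $s_0,s_1,s_2$ whose coefficients are polynomials in $p,q$; using the prescribed zeros we expect the conditions $p,q>0$, $2p-q\ge 0$ and $(p^2-1)^2q^2-p^2\ge 0$ to emerge as the exact system guaranteeing nonnegativity of the diagonal and the $2\times 2$ minors. The delicate step will be the determinant: after factoring out $p^2(pq-1)^2$ or similar, I expect to land on a form of the generalized Robinson-type polynomial with a weighted AM--GM mixed term, just as in the proof of \cref{thm:BS10}, and the inequalities defining $\mathcal{R}$ should be precisely what is needed to apply weighted AM--GM to the sum of all monomials in $s_0,s_1,s_2$. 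The bounds $p<\frac{1}{\sqrt 2}$ and $q<\sqrt 2$ will appear naturally as the intersection of $2p-q\ge 0$ with $(p^2-1)^2q^2-p^2\ge 0$ together with the requirement that the isolated entries on the diagonal of $\Psi_{p,q}(\ketbra{\x}{\x})$ have nonnegative coefficients; I anticipate this to be the most computation-heavy part of the proof.

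For stage (iv) I will write out $C(\Psi_{p,q})$ in the same block form as in the proof of \cref{thm:BS10}; it decomposes into $3\times 3$ Hermitian blocks together with diagonal entries, so its spectrum can be computed in closed form. A negative eigenvalue for every $(p,q)\in\mathcal{R}$ (for instance coming from a $3\times 3$ block with trace forced to be negative, or through an entry like $-(1-q^2+p^2q^2)$) will show $\Psi_{p,q}\notin\CP$. An alternative and cleaner argument is to invoke \cref{rem:Rzeros}: the associated form has $9$ real zeros, exceeding the bound $6$ for SOS forms, hence $\Psi_{p,q}$ cannot be completely positive. For co-complete positivity, I compute the eigenvalues of $C(T\circ\Psi_{p,q})$ by the same block procedure and locate the smallest eigenvalue as a function of $(p,q)$; I expect it to be nonnegative exactly on the segment $2p=q$ (where the map becomes decomposable) and strictly negative elsewhere in $\mathcal{R}$, paralleling the behaviour at $t=\pm 1$ in \cref{thm:BS10}. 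The check that indecomposability is lost precisely on $2p=q$ can be corroborated by exhibiting a Kraus-type decomposition of $T\circ\Psi_{p,2p}$ analogous to the $A_1,A_2,A_3$ written down at the end of the proof of \cref{thm:BS10}.
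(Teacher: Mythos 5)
Your four-stage template matches the paper's: the paper likewise gets extremality for free from the prescribed zero set, defers the positivity of $\Psi_{p,q}$ on $\mathcal{R}$ entirely to \cite{BucSiv} (so your stage (ii), which you only sketch as an expectation of an AM--GM argument, is not actually carried out here either), and does the CP / co-CP analysis spectrally on $C(\Psi_{p,q})$ and $C(T\circ\Psi_{p,q})$. However, stage (iv) of your plan contains a concrete error: you have interchanged the roles of complete positivity and co-complete positivity on the segment $2p=q$. You are modelling the situation on $t=\pm 1$ in \cref{thm:BS10}, where $T\circ\Psi_{\pm1}$ becomes completely positive; here the situation is the mirror image. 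It is $\Psi_{p,2p}$ \emph{itself} that is completely positive on the segment --- it is the Kraus rank one map $Z\mapsto p^2(2p^2-1)^2 AZA$ with $A=\diag(1,-2,1)$, and the one potentially negative eigenvalue of $C(\Psi_{p,q})$ changes sign exactly with $2p-q$ --- whereas $T\circ\Psi_{p,q}$ is \emph{never} completely positive anywhere on $\mathcal{R}$, including on the segment, because its smallest eigenvalue is negative precisely when $(1+pq)^2-2q^2=(1-pq)^2+2(2p-q)q>0$, which holds throughout $\mathcal{R}$. Consequently your expectation that ``a negative eigenvalue for every $(p,q)\in\mathcal{R}$ will show $\Psi_{p,q}\notin\CP$'' is false on $2p=q$, and your plan to exhibit a Kraus decomposition of $T\circ\Psi_{p,2p}$ cannot succeed, since that map is co-completely positive but not completely positive.

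The same error infects your proposed shortcut via \cref{rem:Rzeros}: arguing ``$9$ real zeros $>6$, hence not SOS, hence not CP'' uniformly over $\mathcal{R}$ would prove the false statement that $\Psi_{p,2p}$ is not completely positive. The resolution is that on the segment $2p=q$ the associated form degenerates to the single square $p^2(2p^2-1)^2\,(x_0y_0-2x_1y_1+x_2y_2)^2$, whose real zero set is infinite, so Quarez's bound (at most $6$ real zeros \emph{or infinitely many} for an SOS form) gives no contradiction there. If you want to use that argument, you must first show the real zero set is finite, which fails exactly where the map becomes CP. Aside from this, your stage (ii) still owes the actual verification that the inequalities defining $\mathcal{R}$ are exactly the nonnegativity conditions for the minors and determinant; asserting that the bounds $p<1/\sqrt2$, $q<\sqrt2$ ``appear naturally'' is not yet an argument, though since the paper itself cites \cite{BucSiv} for this step, it is a deferral rather than a gap relative to the paper.
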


\begin{figure}[htbp]
  \centering
  \includegraphics[width=0.3\textwidth]{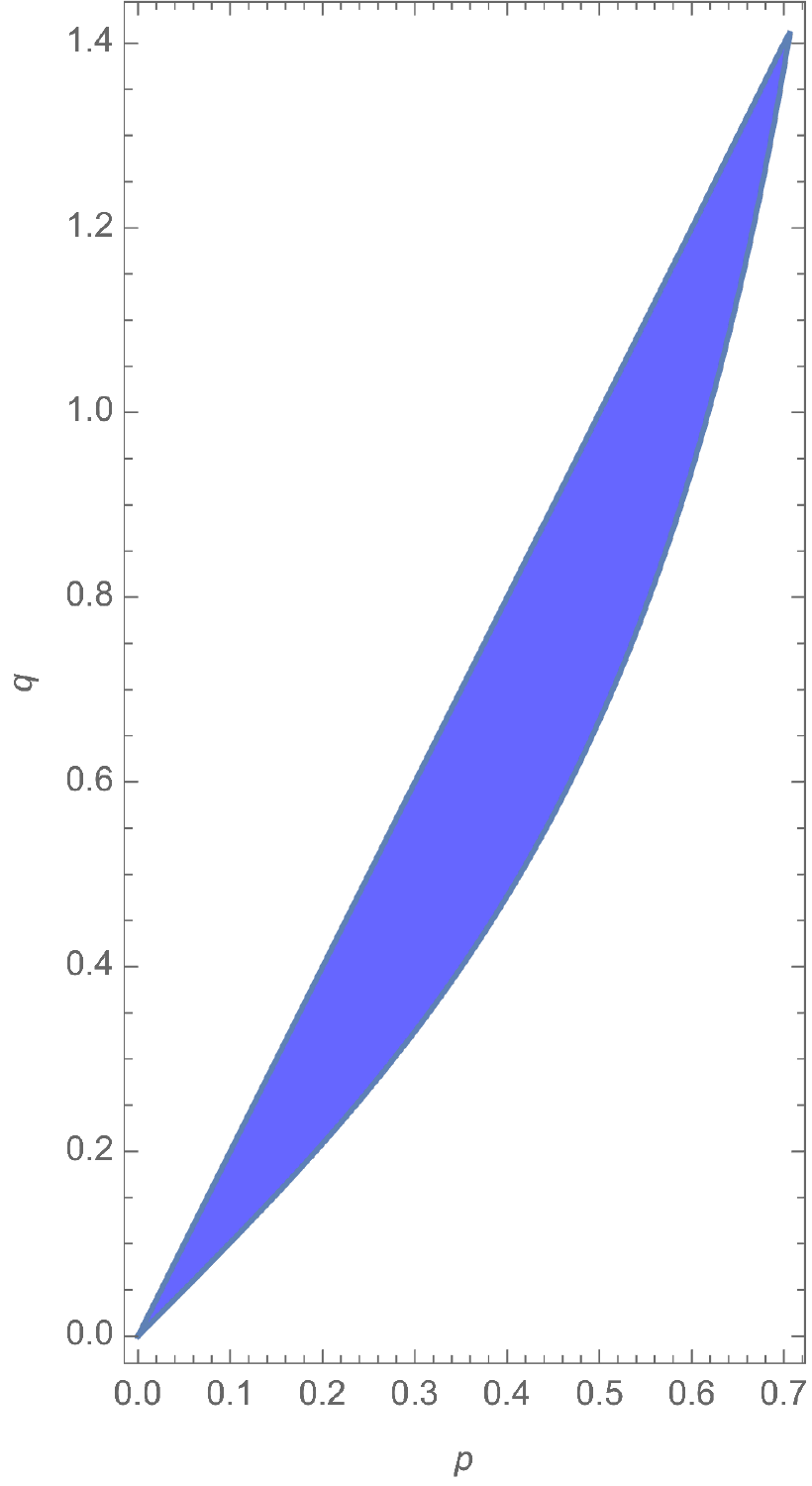}
   \hspace{8mm}
   \includegraphics[width=0.3\textwidth]{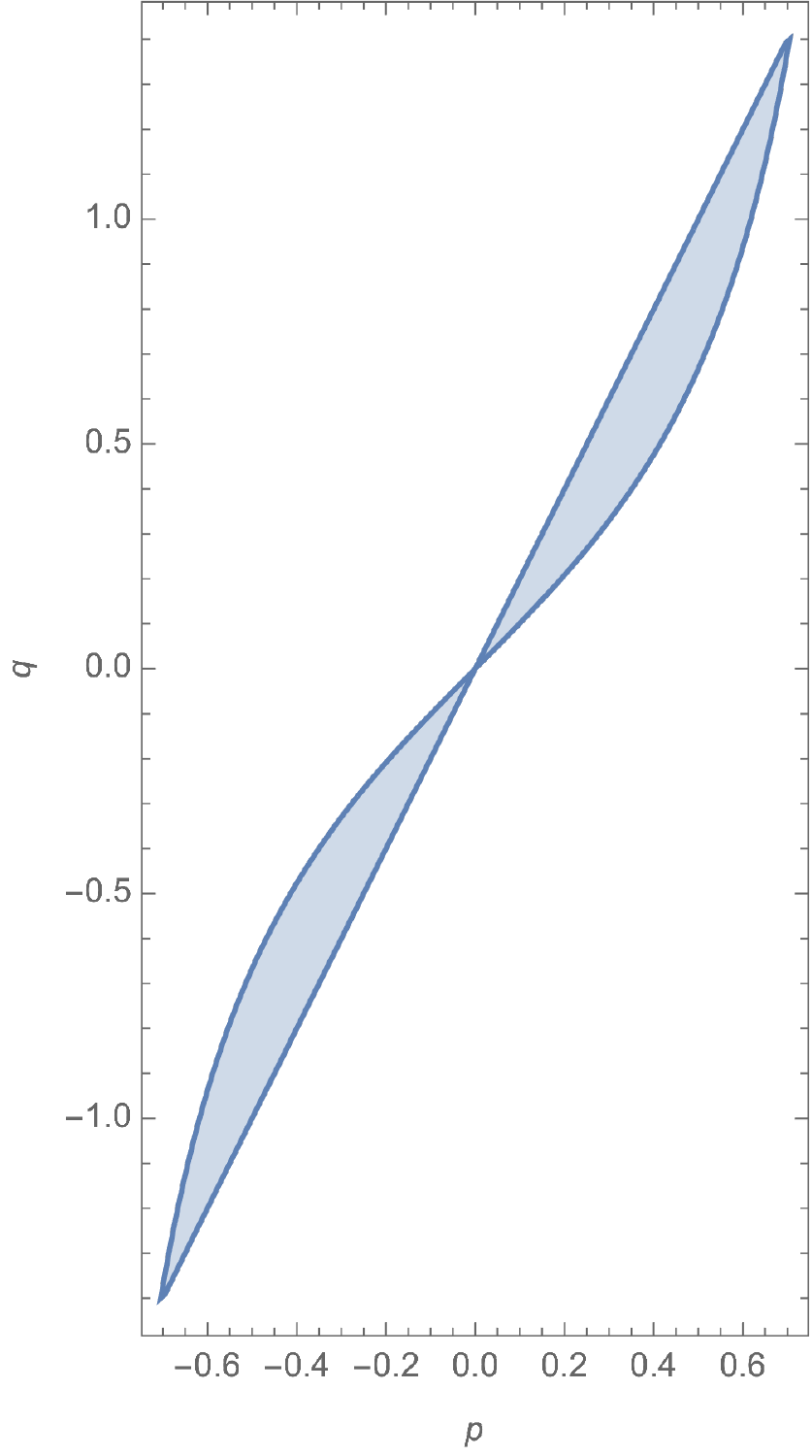}
   \put(-210,130){$\substack{2p-q \geq 0}$}
   \put(-180,75){$\substack{q \geq \frac{p}{1-p^2}}$}
  \caption{Region of points $(p,q) \in  \mathcal{R}$ in \cref{thm:BS9}
  for which $\Psi_{p,q}$ are positive maps. }
  \label{fig:leafpq}
\end{figure}

\begin{proof}
See \cite{BucSiv} for the complete proof that $p_{\Psi_{p,q}}$ 
with the prescribed zeros $\mathcal{Z}_{p,q}$ are 
nonnegative and extremal in the region $\mathcal{R}$ in \cref{fig:leafpq}.
As remarked in \cref{subsec:construction}, the main part constitutes of the proofs of positivity, whereas extremality follows directly from prescribing large enough sets of zeros. 

The Choi matrix $C(\Psi_{p,q})$ is 
\begin{equation*}
 \left[   \begin{array}{ccc|ccc|ccc}
\substack{p^2} c_{00} & \cdot &  \cdot  &  
\cdot  & c_{04} & \cdot  & 
\cdot  &  \cdot   & c_{08}   \\
 \cdot  & c_{11} &  \cdot & \cdot &  \cdot  &  \cdot  &
   \cdot  &  \cdot  &  \cdot  \\
 \cdot  &  \cdot  & 0 &
  \cdot  &  \cdot  &  \cdot & \cdot & \cdot  &  \cdot  \\ 
\hline 
 \cdot  &\cdot &  \cdot  &
 \substack{p^2 q^2} c_{11} &  \cdot  &  \cdot  &
  \cdot  &  \cdot  &  \cdot  \\
c_{04} & \cdot  &  \cdot  &
 \cdot & \substack{q^2} c_{00} & \cdot &
  \cdot  &  \cdot  & c_{04}  \\
 \cdot  &  \cdot &  \cdot  &
  \cdot  &  \cdot  & c_{11} &
   \cdot & \cdot & \cdot \\
\hline
 \cdot  &  \cdot  & \cdot &
  \cdot  &  \cdot  &  \cdot &
  \substack{(1 - p^2 q^2)} c_{11}  &  \cdot  &  \cdot \\
 \cdot  &  \cdot  &  \cdot  &
  \cdot  &  \cdot  & \cdot & 
   \cdot & \substack{p^2 q^2} c_{11} &  \cdot \\
c_{08} & \cdot  &  \cdot  &
 \cdot  & c_{04} &  \cdot  &
  \cdot &  \cdot  & \substack{p^2} c_{00}
\end{array} \right],
\end{equation*}
where 
\begin{align*}
    c_{00}   := & (1 - p q)^2  & c_{04}   := & -p q 
    \left(1 + (p^2-1) q^2 \right)  \\
     c_{11}   := & (2 p - q) q & c_{08}   := & -(1 - p q) 
     \left( p^2  + (p^2-1)  (q^2-p q)  \right),
\end{align*}
and  its eigenvalues are 
\begin{multline*}
    0,\, 2 \times c_{11},\, (1-p^2 q^2) c_{11},\, 
    p^2 c_{00}-c_{08},\,
    2 \times p^2 q^2 c_{11},  \\
    \frac{1}{2} \left(c_{08} + (p^2 +q^2) c_{00} \pm
   \sqrt{(c_{08} + (p^2 +q^2) c_{00} )^2 + 
    4 (2 c_{04}^2 - q^2 c_{00} c_{08}  - p^2 q^2 c_{00}^2 )} \right).
\end{multline*}
A simple plot in \texttt{Mathematica} verifies that 
in the region $\mathcal{R}$
only the last of these eigenvalues attains negative value. 
In particular,
for $(p,q) \in \mathcal{R}$ it holds 
\begin{IEEEeqnarray*}{C}
c_{08} + (p^2 +q^2) c_{00} -
   \sqrt{(c_{08} + (p^2 +q^2) c_{00} )^2 + 
    4 (2 c_{04}^2 - q^2 c_{00} c_{08}  - p^2 q^2 c_{00}^2 )} < 0 \\
    \Longleftrightarrow \\
    2 c_{04}^2 - q^2 c_{00} c_{08}  - p^2 q^2 c_{00}^2 > 0\\
     \Longleftrightarrow \\
    (2 p - q) q^2 \Big(
    p (1+p^2 q^2) \left( (1-p q)^2 +(2 p -q)q \right)+ q(1-p q)^2 
    \Big) > 0. 
\end{IEEEeqnarray*}
From the last inequality we conclude that,
 for $(p,q) \in \mathcal{R}$, the Choi matrix
$C(\Psi_{p,q})$ is positive semidefinite precisely on the line segment $2p-q = 0$.
The completely positive map $\Psi_{p,2p}$ has Kraus rank 1 since we can write,
$\Psi_{p,2p}(Z)=$
\begin{equation*} \label{pg:pqCP}
    p^2 (2 p^2-1)^2
    \begin{bmatrix}
        z_{00} & -2 z_{01} & z_{02}\\
        -2 z_{10} & 4 z_{11} & -2 z_{12} \\
        z_{20} & -2 z_{21} & z_{22}
    \end{bmatrix}= 
    p^2 (2 p^2-1)^2
     \begin{bmatrix}
        1 & 0 & 0 \\
        0 & -2 & 0 \\
        0 & 0 & 1
    \end{bmatrix}
      Z 
    \begin{bmatrix}
        1 & 0 & 0 \\
        0 & -2 & 0 \\
        0 & 0 & 1
    \end{bmatrix}.
\end{equation*}

Next we verify whether any $\Psi_{p,q}$ is co-completely positive. To this end we compute the Choi matrix $C\left( T \circ \Psi_{p,q} \right)$
and observe that one of its eigenvalues 
\begin{equation*}
    \frac{1}{2} q \left( (1 + p^2 q^2) (2 p - q) - 
   \sqrt{(1 + p^2 q^2)^2 (2 p - q)^2 + 
    4 p^2 (1 - p q)^2 ( (1 + p q)^2 - 2 q^2)  } \right) 
\end{equation*}
is negative for all $(p,q) \in \mathcal{R}$. Indeed, the eigenvalue is 
negative on $ \mathcal{R}$ if and only if 
\begin{IEEEeqnarray*}{C}
4 p^2 (1 - p q)^2 \left( (1 + p q)^2 - 2 q^2 \right) >0\\
\Longleftrightarrow \\
(1 + p q)^2 - 2 q^2 = (1 - p q)^2 + 2 (2 p-q)q >0,
\end{IEEEeqnarray*}
which proves that no $\Psi_{p,q}$ is co-completely positive.
\end{proof}

\begin{remark}
Since $\Psi_{p,q}$ and $\mathcal{Z}_{p,q}$ are  
invariant under the simultaneous sign change of $\pm p,\, \pm q$, \cref{thm:BS9} extends to the region in \cref{fig:leafpq} in the negative quadrant.
\end{remark}
\end{exmp}

\begin{exmp}[8 real zeros] \label{ex:BS8}
The last family of positive maps from~\cite{BucSiv} corresponds to $p_{\Psi}(\x, \y)$ with the following set of zeros,
\begin{multline} \label{eq:zerosmn}
 \mathcal{Z}_{m,n} = \left\{     (1, 1, e^{i \varphi};\,  1, 1, e^{i \varphi}), \,
       (1, -1, e^{i \varphi};\,  1, -1, e^{i \varphi}), \right. \\
 \left. (1,0,0;\, m,1,0),\, (1,n,0;\, 0,1,0),\, 
 (0,1,0;\, 0,0,1),\, (0,0,1;\, 1,0,0) \right\},
\end{multline}
where $m,n\in\RR$ and $\varphi \in [0, 2 \pi)$. 
In general 8 of these zeros are real,
\begin{multline*}
      (1,1,1; 1,1,1),\, (1,1,-1;1,1,-1),\, (1,-1,1;1,-1,1),\, (-1,1,1;-1,1,1), \\
(1,0,0;\, m,1,0),\, (1,n,0;\, 0,1,0),\, 
 (0,1,0;\, 0,0,1),\, (0,0,1;\, 1,0,0).
\end{multline*}

\begin{theorem} \label{thm:BS8}
Quantum maps $\Psi_{m,n} \colon \M_3^{\text{sa}} \to \M_3^{\text{sa}}$ 
of the form 
\begin{equation*}
 \begin{array}{c}
Z
\\
\begin{tikzcd}
  \arrow[mapsto]{d}
  \\
\phantom{a} 
\end{tikzcd}\\
\begin{bmatrix} \!\!
\substack{ n^2 \left(z_{00} + m (z_{01} + z_{10}) + m^2 z_{11} \right)}  
\!&\! 
\substack{ -m n (n z_{00} - z_{01} + m n z_{10} - m z_{11})} 
\!&\!
\substack{-n (m + n) (z_{02} + m z_{12})} \\
\substack{ -m n (n z_{00} - z_{10} + m n z_{01} - m z_{11})} 
\!&\! 
\substack{ m^2 \left(n^2 z_{00} - n (z_{01} + z_{10}) + z_{11} \right) } 
\!&\! 
\substack{  m (m + n) (n z_{02} - z_{12}) }\\
\!\! \substack{-n (m + n) (z_{20} + m z_{21})} 
\!&\! 
\substack{ m (m + n) (n z_{20} - z_{21}) } 
\!&\! 
\substack{(m + n)^2 z_{22} }
\!\! \end{bmatrix}\\
\\
+ b \begin{bmatrix}
    z_{11} & 0 & -z_{02} \\
    0 & z_{22} & -z_{12} \\
    -z_{20} & -z_{21} & z_{00} + z_{22}
\end{bmatrix}
+ c \begin{bmatrix}
    0& z_{01} - z_{10} & 0\\
    z_{10} - z_{01} & 0 & 0\\
    0 & 0 & 0
\end{bmatrix},
\end{array}   
\end{equation*}
where 
\begin{align*}
     b=& \min \set{-2 m n - m^2 n^2 - n^2,\, -2 m n - m^2 n^2 - m^2}, \\
     c=& \max \set{ -\frac{1}{2} m n (1 + m n - \sqrt{1 - n^2}),\, 
     -\frac{1}{2} m n (1 + m n - \sqrt{1 - m^2}) }
\end{align*}
are extreme positive maps in the region
\begin{displaymath}  \mathcal{A} =
\set{ (m,n) \in [-1,1]^2 \colon
-2 m n - m^2 n^2 - n^2 \geq  0, -2 m n - m^2 n^2 - m^2 \geq 0}.
\end{displaymath}
Apart from the boundary of $\mathcal{A}$, they are neither completely nor co-completely positive.
\end{theorem}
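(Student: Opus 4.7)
My strategy parallels the proofs of \cref{thm:BS10,thm:BS9} and follows the general methodology from \cref{subsec:construction}. The displayed $\Psi_{m,n}$ is the candidate produced by the method of prescribing zeros applied to $\mathcal{Z}_{m,n}$ together with the specific extremal choices of the two remaining free parameters $b$ and $c$. As a preliminary bookkeeping step I would verify that the coefficients of $\Psi_{m,n}$ satisfy \eqref{eq:SAcoeff} and that $p_{\Psi_{m,n}}(\x,\y)=0$ on every point of $\mathcal{Z}_{m,n}$, including the $\varphi$-parametrized family of pure-phase zeros; this is a direct substitution into \eqref{eq:zeroCoeff}.

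The main work is positivity of $\Psi_{m,n}$ on the region $\mathcal{A}$. Writing $s_i=|x_i|^2$, I would expand the three diagonal entries, the three principal $2\times 2$ minors, and the determinant of $\Psi_{m,n}(\ketbra{\x}{\x})$ as polynomials in the $s_i$ and in the off-diagonal products $x_i\overline{x_j}$. The defining inequalities of $\mathcal{A}$ are precisely $-2mn-m^2n^2-n^2\geq 0$ and $-2mn-m^2n^2-m^2\geq 0$, so $b=\min\{-2mn-m^2n^2-n^2,\,-2mn-m^2n^2-m^2\}$ is the largest value for which the diagonal entries of $\Psi_{m,n}(\ketbra{\x}{\x})$ all remain nonnegative, and the choice of $c$ as a maximum of two candidates is calibrated so that the three principal $2\times 2$ minors are nonnegative with one of them saturating on a boundary direction. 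The hard step is the determinant: at these extremal $b$ and $c$ one expects cancellations reducing $\det\Psi_{m,n}(\ketbra{\x}{\x})$ to a sum of monomials in $s_0,s_1,s_2$ with nonnegative coefficients on $\mathcal{A}$ plus a manifestly nonnegative remainder handled by an AM--GM argument, as in \cref{ex:Choi}. Because the two branches in the $\min$ and in the $\max$ swap under the involution $(m,n)\leftrightarrow(n,m)$ built into $\mathcal{Z}_{m,n}$, it suffices to treat one branch and conclude by symmetry.

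Extremality then follows from \cref{subsec:construction}: any decomposition $\Psi_{m,n}=\Psi_1+\Psi_2$ into nonzero positive summands forces both $\Psi_1$ and $\Psi_2$ to vanish on $\mathcal{Z}_{m,n}$ and hence to lie in the same candidate family parametrized by $(b,c)$, so it suffices to observe that the choice of $b$ as a $\min$ and $c$ as a $\max$ pins $(b,c)$ to the extremal boundary of the positivity region within that family, leaving no room for a nontrivial splitting in the interior of $\mathcal{A}$. To rule out complete positivity and co-complete positivity I would compute $C(\Psi_{m,n})$ and $C(T\circ\Psi_{m,n})$ explicitly, as in the previous theorems, and exhibit one eigenvalue that is strictly negative throughout the interior of $\mathcal{A}$ and vanishes only on $\partial\mathcal{A}$; on this boundary the map degenerates to a (co-)completely positive map analogous to the $2p=q$ segment of \cref{thm:BS9}. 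The principal obstacle throughout is the positivity argument: unlike in \cref{thm:BS10}, where the determinant coincides with the Robinson polynomial and positivity is immediate, here the SOS/AM--GM certificate for the determinant must be produced by hand, and the two cases contained in the $\min$ and $\max$ for $b$ and $c$ must be tracked in parallel and patched along the symmetry line $m=n$.
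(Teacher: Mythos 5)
Your outline follows the same template as the paper's proof (prescribe zeros, check positivity via principal minors, extremality via the zero set, CP/co-CP via eigenvalues of the Choi matrices), and the Choi-matrix part of your plan matches what the paper actually does: it computes $C(\Psi_{m,n})$ and $C(T\circ\Psi_{m,n})$ explicitly, reads off the eigenvalue signs (partly by a numerical plot of the roots of a residual $5\times5$ characteristic polynomial), and identifies the degenerate Kraus-rank-one maps on $\partial\mathcal{A}$ and the three exceptional points $(0,0),(\pm1,\mp1)$ for co-complete positivity. Be aware, however, that for the two hardest claims — positivity on $\mathcal{A}$ and extremality — the paper does not argue at all: it reduces to the quarter region $0\le -m\le n\le 1$ by symmetry and then cites the complete proof in \cite{BucSiv}. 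So the burden you have taken on is real, and your sketch of it has two concrete defects.

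First, your rationale for the value of $b$ is backwards. The diagonal entries of $\Psi_{m,n}(\ketbra{\x}{\x})$ are $n^2|x_0+mx_1|^2+b|x_1|^2$, $m^2|nx_0-x_1|^2+b|x_2|^2$ and $(m+n)^2|x_2|^2+b(|x_0|^2+|x_2|^2)$; they are increasing in $b$ and only force $b\ge 0$, so $b$ is certainly not ``the largest value for which the diagonal entries remain nonnegative.'' Whatever pins $b$ and $c$ must come from the $2\times2$ minors, the determinant, or the extremality requirement, and your hoped-for reduction of the determinant to monomials in $s_i=|x_i|^2$ plus an AM--GM remainder is optimistic: unlike $\Psi_t$ in \cref{thm:BS10}, the map $\Psi_{m,n}$ feeds the off-diagonal entries $z_{01},z_{10},z_{02},\dots$ into its diagonal, so $\Psi_{m,n}(\ketbra{\x}{\x})$ genuinely depends on the phases of $x_i\ols{x_j}$ and its determinant is not a polynomial in the $s_i$ alone. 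Second, your extremality argument is incomplete: the zero-set argument correctly confines any summand to the family spanned by the leading term and the $b$- and $c$-matrices, but sitting on the \emph{boundary} of the positivity region in the $(b,c)$-parameters does not preclude a decomposition \emph{within} that family. The paper's own \cref{ex:Choi} is the cautionary example — there, boundary maps of the positivity region still split as a positive map plus a completely positive map from the same family unless an additional equality ($r_0=r_1$, etc.) holds, and ruling this out required a separate geometric argument. An analogous argument (or the citation to \cite{BucSiv}) is needed here and is missing from your proposal.
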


\begin{figure}[htbp]
  \centering
  \includegraphics[width=0.33\textwidth]{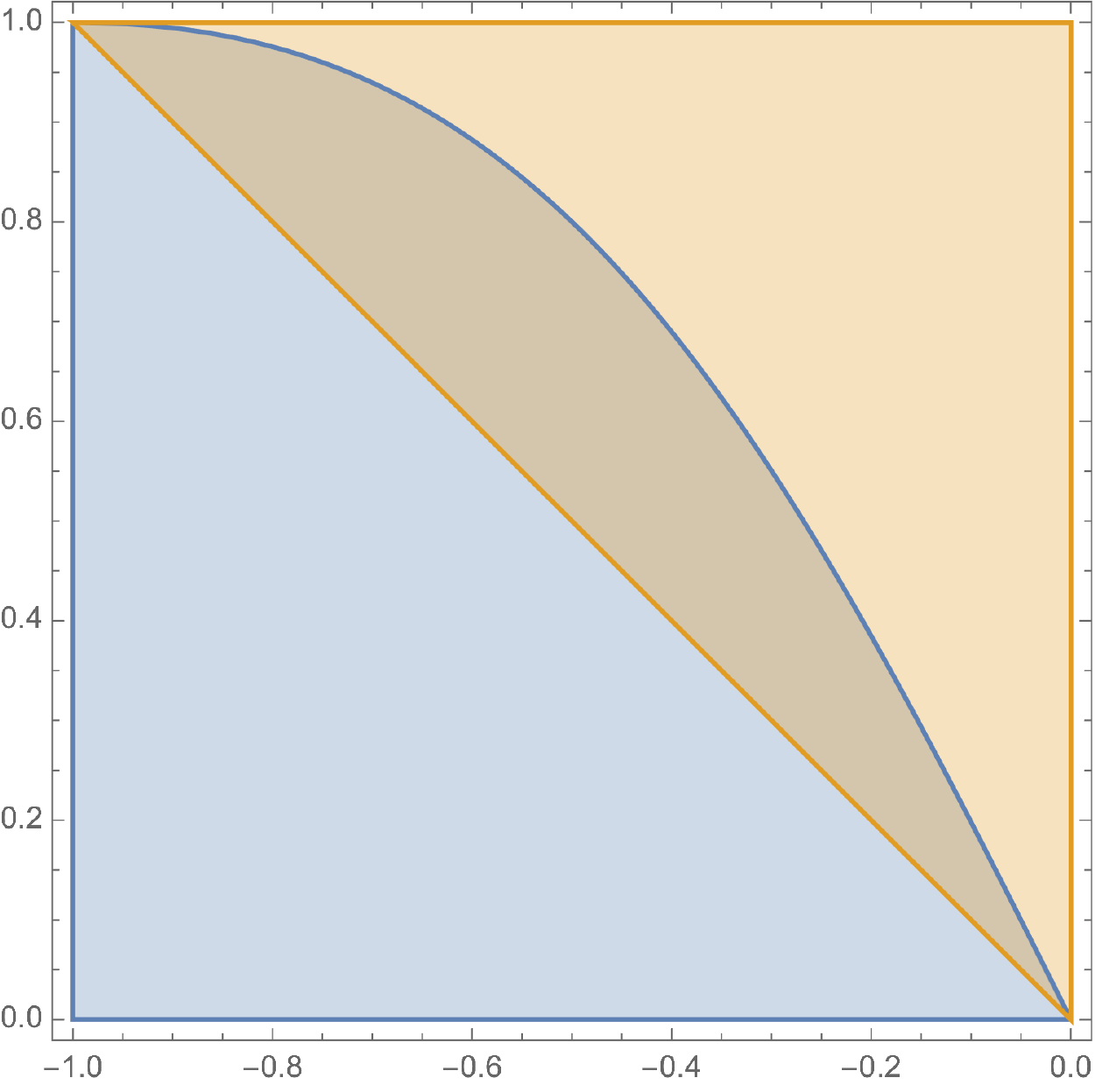}
   \hspace{8mm}
   \includegraphics[width=0.46\textwidth]{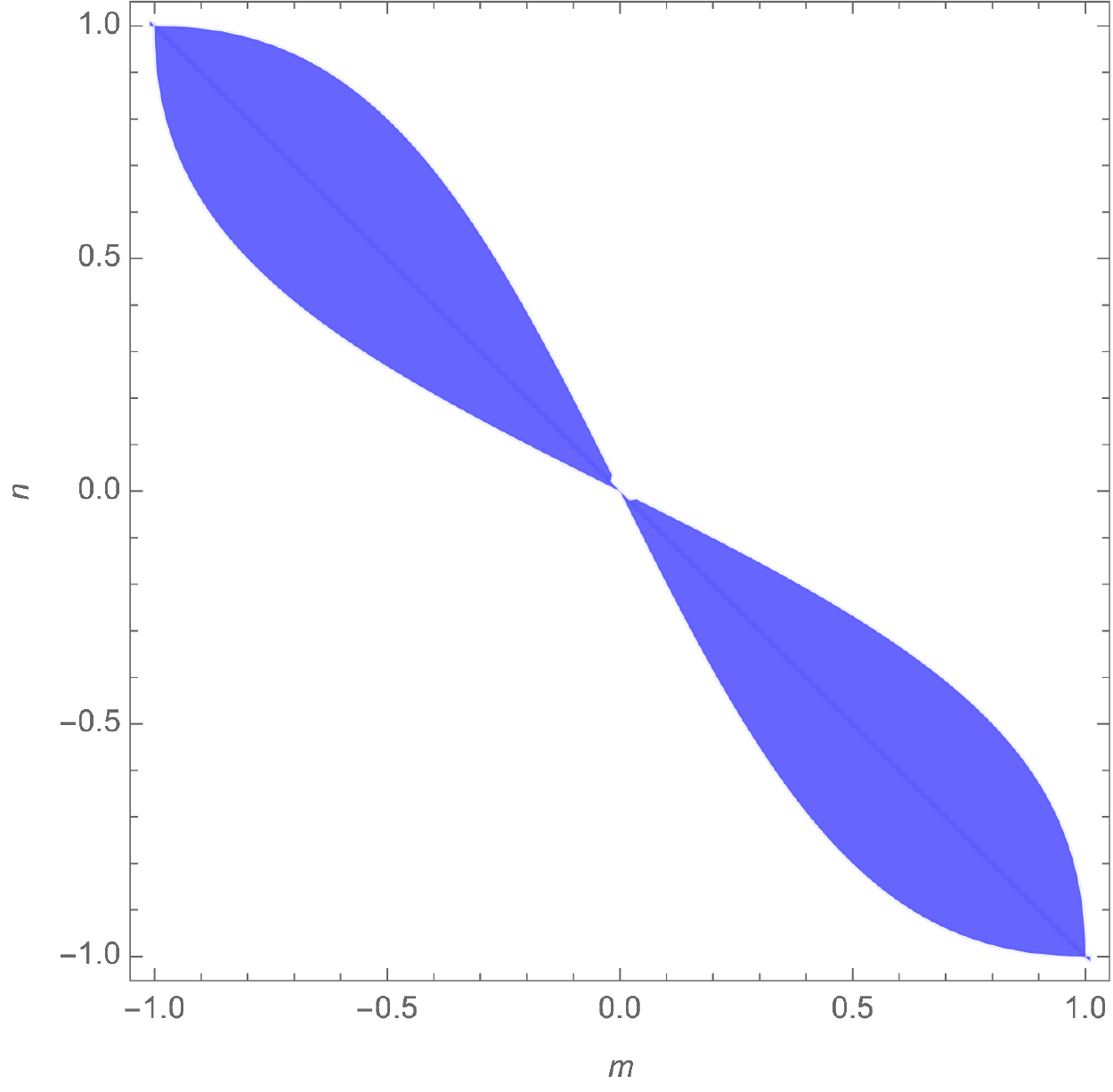}
   \put(-325,63){$\substack{n}$}
   \put(-260,-3){$\substack{m}$}
   \put(-237,90){$\substack{n \geq -m}$}
   \put(-300,40){$\substack{
   n \leq - \frac{2 m}{1 + m^2}}$}
  \caption{Region of points $(m,n) \in  \mathcal{A}$ in \cref{thm:BS8}
  for which $\Psi_{m,n}$ are positive maps.}
  \label{fig:leafmn}
\end{figure}

\begin{proof}
The symmetries of $(m,n)$ in the maps $\Psi_{m,n}$ and in the double leaf $\mathcal{A}$ permit us to restrict our study to a quarter of this region. 
Thus, from now on assume that 
$0\leq -m \leq n \leq 1$, as shown in \cref{fig:leafmn},
in which case $b=-2 m n - m^2 n^2 - n^2 $ and 
$c=-\frac{1}{2} m n (1 + m n - \sqrt{1 - n^2})$.

Using similar methods as in \cref{ex:BS9} proves that  $\Psi_{m,n}$ are positive and extremal maps on $\mathcal{A}$; see \cite{BucSiv} for the complete proof. 

The Choi matrix $C(\Psi_{m,n})$ is 
\begin{equation*}
 \left[   \begin{array}{ccc|ccc|ccc}
\substack{n^2} & \substack{m n^2}  &  \cdot  &  
\substack{-m n^2}  & \substack{m n+c} & \cdot  & 
\cdot  &  \cdot   & \substack{m n + m^2 n^2}   \\
 \substack{m n^2}   & \substack{-2 m n - n^2} &  \cdot & 
 \substack{-m^2 n^2-c} &  \substack{m^2 n}  &  \cdot  &
   \cdot  &  \cdot  &  \substack{-m n (m + n)}  \\
 \cdot  &  \cdot  & \substack{0} &
  \cdot  &  \cdot  &  \cdot & \cdot & \cdot  &  \cdot  \\ 
\hline 
 \substack{-m n^2}  &\substack{-m^2 n^2 -c} &  \cdot  &
 \substack{m^2 n^2}  & \substack{-m^2 n}  &  \cdot  &
  \cdot  &  \cdot  &  \substack{m n (m + n)}  \\
\substack{m n+c} & \substack{m^2 n}  &  \cdot  &
 \substack{-m^2 n} & \substack{m^2} & \cdot &
  \cdot  &  \cdot  & \substack{-m (m + n) -b}  \\
 \cdot  &  \cdot &  \cdot  &
  \cdot  &  \cdot  & \substack{b} &
   \cdot & \cdot & \cdot \\
\hline
 \cdot  &  \cdot  & \cdot &
  \cdot  &  \cdot  &  \cdot &
\substack{b}  &  \cdot  &  \cdot \\
 \cdot  &  \cdot  &  \cdot  &
  \cdot  &  \cdot  & \cdot & 
   \cdot & \substack{0} &  \cdot \\
\substack{m n  + m^2 n^2} & \substack{-m n (m + n)}  &  \cdot  &
 \substack{m n (m + n)}  & \substack{-m (m + n) -b} &  \cdot  &
  \cdot &  \cdot  & \substack{m^2 - m^2 n^2} 
\end{array} \right].
\end{equation*}
From the shape of the matrix $C(\Psi_{m,n})$ we can read
its 9 eigenvalues: $2 \times 0,\, 2 \times b $ 
and the five roots of the characteristic polynomial of the  remaining
$5 \times 5$ minor.
The characteristic polynomial is of the form
\begin{equation*}
    C_0+ C_1 \#+ C_2 \#^2+ \frac{1}{2} b (2 m^2 + 4 n^2 + 7 m^2 n^2) \#^3 +4m (n-m) \#^4 + \#^5,
\end{equation*}
where $C_0,C_1,C_2$ are simple functions of $m$ and $n$.
An explicit plot of its roots in \texttt{Mathematica}, shown in \cref{fig:eigenval}, verifies that 
all the roots are real and only one root attains negative value on $\mathcal{A} \backslash \partial \mathcal{A}$. 
\begin{figure}[htbp]
  \centering
  \includegraphics[width=0.7\textwidth]{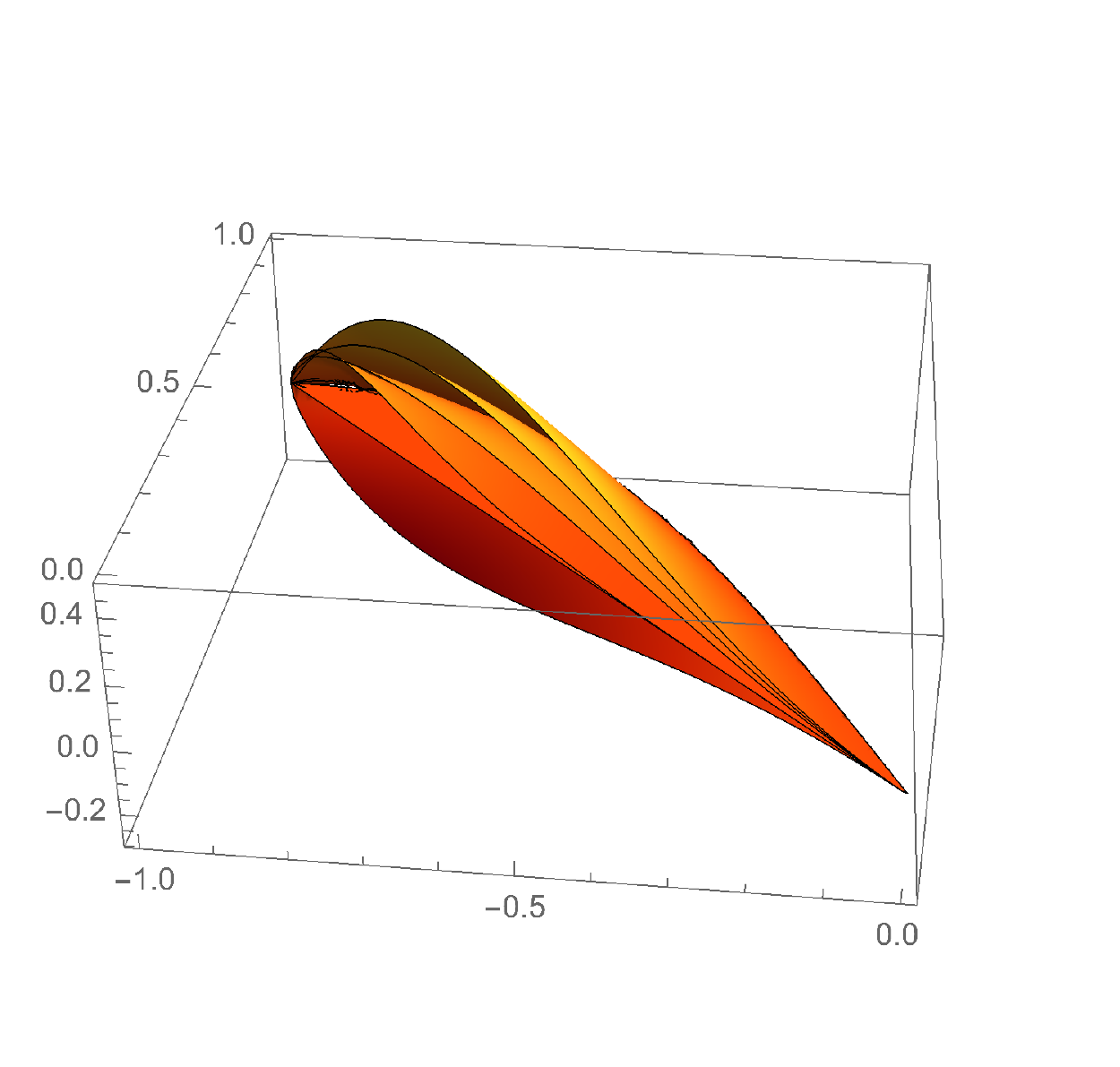}
   \includegraphics[width=0.075\textwidth]{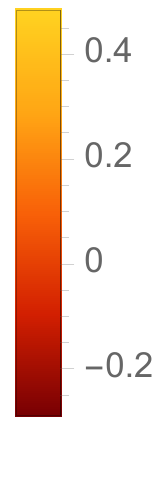}
   \put(-270,133){$\substack{n}$}
   \put(-135,30){$\substack{m}$}
  \caption{Eigenvalues of $C(\Psi_{m,n})$ as functions of $m,n$: only one eigenvalue is negative.}
  \label{fig:eigenval}
\end{figure}
This implies that on the boundary of $\mathcal{A}$, where $b=c=0$,
the maps $\Psi_{m,n}$ are completely positive. Moreover, 
for $(m,n) \in \partial \mathcal{A}$,
\begin{equation} \label{pg:mnCP}
   \Psi_{m,n}(Z)=
     \begin{bmatrix}
        -n & -m n & 0 \\
        m n & -m & 0 \\
        0 & 0 & m+n
    \end{bmatrix}
      Z 
    \begin{bmatrix}
        -n & m n & 0 \\
        -m n & -m & 0 \\
        0 & 0 & m+n
    \end{bmatrix}
\end{equation}
is a completely positive map with the Kraus rank 1. 

It remains to be checked if any $\Psi_{m,n}$ is co-completely positive.
To this end we compute the Choi matrix 
$C\left( T \circ \Psi_{m,n} \right)$
and observe that two of its eigenvalues 
are negative for all $(m,n) \in \mathcal{A} \backslash 
\set{(0,0), (-1,1), (1,-1)}$; thus in these points the maps
 $\Psi_{m,n}$ are not 
co-completely positive. At $(m,n)=(0,0)$ the map $\Psi_{0,0}$ is constantly zero. When $(m,n)$ is $(-1,1)$ or $(1,-1)$, we get the following rank 1 
maps,
\begin{equation*}
\Psi_{\pm 1,\mp 1}(Z) = T \circ \Psi_{\pm 1,\mp 1}(Z) =
( z_{00} \pm z_{01} \pm z_{10} + z_{11})  \begin{bmatrix}
       1 & \mp 1 & 0 \\
         \mp 1 & 1 & 0 \\
           0 & 0 & 0 \\
    \end{bmatrix}.
\end{equation*}
\end{proof}

\begin{remark}
Note that the positive maps $\Psi_{m,n}$ are very different from the maps $\Psi_t$ and $\Psi_{p,q}$ in
\cref{thm:BS10,thm:BS9}. Most notably,
$\Psi_{m,n} \colon \M_3^{\text{sa}} \to \M_3^{\text{sa}}$
is not a trivial extension of its restriction to real symmetric matrices, i.e., 
$\Psi_{m,n} \colon \M_3^{\text{sym}} \to \M_3^{\text{sym}}$.
\end{remark}

\end{exmp}

\subsection{Unital entanglement witnesses}
\label{subsec:unital}
In quantum information theory it is convenient to require that 
an entanglement witness $\Psi$ is \textit{unital}, i.e., $\Psi(I)=I$. 
By the following lemma we can make the examples in \cref{subsec:BucSiv} unital. 

\begin{lemma} \label{ex:UnEW}
Horodecki's entanglement witness 
theorem 
remains valid if we require that positive maps
are unital. 
\end{lemma}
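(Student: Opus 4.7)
The ``if'' direction is immediate, because a unital positive map is in particular positive and \cref{cor:HEW} applies verbatim. For the ``only if'' direction, suppose $\rho$ is entangled and let $\Phi\colon \M_{d_1}^{\text{sa}} \to \M_{d_2}^{\text{sa}}$ be a positive map with $M := (\Phi \otimes \Id)\rho \not\succeq 0$, as supplied by \cref{cor:HEW}. I would upgrade $\Phi$ to a unital witness in two steps: an additive perturbation that forces $\Phi(I_{d_1})$ to become invertible, followed by a Hermitian congruence that rescales it to $I_{d_2}$.

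\emph{Step 1 (perturbation).} Fix the entanglement-breaking map $\Phi_0(X) := \Tr(X)\, I_{d_2}/d_1$ and set $\tilde\Phi_\varepsilon := \Phi + \varepsilon\,\Phi_0$. Then $\tilde\Phi_\varepsilon$ is positive as the sum of two positive maps, and $\tilde\Phi_\varepsilon(I_{d_1}) = \Phi(I_{d_1}) + \varepsilon I_{d_2} \succ 0$ for every $\varepsilon > 0$. Since $\Phi_0$ is completely positive, $N := (\Phi_0 \otimes \Id)\rho$ is PSD; if $v$ is a unit eigenvector for a negative eigenvalue $\lambda < 0$ of $M$, then
\begin{equation*}
v^\dagger(M + \varepsilon N)v \;\leq\; \lambda + \varepsilon \|N\| \;<\; 0
\end{equation*}
for all sufficiently small $\varepsilon > 0$. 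Thus $\tilde\Phi_\varepsilon$ remains an entanglement witness for $\rho$.

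\emph{Step 2 (congruence).} Set $A := \tilde\Phi_\varepsilon(I_{d_1}) \succ 0$ and define $\Phi'(X) := A^{-1/2}\,\tilde\Phi_\varepsilon(X)\,A^{-1/2}$. As the congruence of a positive map by a Hermitian matrix, $\Phi'$ is positive; it is unital because $\Phi'(I_{d_1}) = A^{-1/2} A A^{-1/2} = I_{d_2}$; and the identity
\begin{equation*}
(\Phi' \otimes \Id)\rho \;=\; (A^{-1/2} \otimes I)\,\bigl[(\tilde\Phi_\varepsilon \otimes \Id)\rho\bigr]\,(A^{-1/2} \otimes I)
\end{equation*}
together with Sylvester's law of inertia shows that $(\Phi' \otimes \Id)\rho$ has the same number of negative eigenvalues as $(\tilde\Phi_\varepsilon \otimes \Id)\rho$, which is positive by Step 1. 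Hence $\Phi'$ is the desired unital entanglement witness.

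The only delicate point is the openness argument in Step 1, ensuring that the additive PSD perturbation does not destroy the strict non-positivity of $M$; this follows from the elementary Rayleigh-quotient estimate above. Step 2 is routine linear algebra, so I do not anticipate any further obstacle.
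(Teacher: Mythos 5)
Your proof is correct and takes essentially the same route as the paper: an additive perturbation by (a multiple of) the trace map $X \mapsto \varepsilon\,(\Tr X)\, I$ to make the image of the identity invertible, followed by the congruence $X \mapsto A^{-1/2}\Phi(X)A^{-1/2}$ to achieve unitality. The only difference is that you spell out the two points the paper leaves implicit (the Rayleigh-quotient estimate showing the negative eigenvalue survives small $\varepsilon$, and the inertia argument for the congruence), which is a welcome addition rather than a deviation.
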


\begin{proof}
If $\rho$ and $\Phi$ are as in \Cref{cor:HEW}, then the map 
$\hat{\Phi}(X) := \Phi(X) + \varepsilon\, (\Tr X)\, I $ also 
fulfills corollary's conclusions for small 
enough $\varepsilon>0$. Consequently, the map 
\begin{equation*}
    \tilde{\Phi} (X):= \hat{\Phi}(I)^{-1/2}\, \hat{\Phi}(X)\,  \hat{\Phi}(I)^{-1/2}
\end{equation*}
is unital and satisfies the properties of Horodecki's entanglement witness 
theorem.
Moreover, 
when $\Phi(I)$ is positive definite (i.e., invertible), we can choose $\varepsilon=0$ and then $\tilde{\Phi}$ is an example of 
the $\text{PGL}_3 \times \text{PGL}_3$ action on the cone of positive maps $\bP(\M_3^{\text{sa}}, \M_3^{\text{sa}})$ considered in
\eqref{eq:PQaction}.
\end{proof}

It is also useful for positive maps to be \textit{trace preserving}, i.e., $\Tr \Psi(\rho)=\Tr \rho$ for all $\rho \in \M^{\text{sa}}_3$. It is easily verified that a quantum map $\Psi$ is
trace preserving if and only if $\Psi^{\dagger}$  is  unital.
We invite the reader to consult 
\cite[\S 2.4]{AandBbook} on how this can be achieved.

In the rest of this subsection we apply \cref{ex:UnEW}
to the families of positive maps in \cref{thm:BS10,thm:BS9,thm:BS8}.
Let $\Psi_t$ be the positive map from \cref{thm:BS10}. 
Then it is easily checked that the map 
$\frac{1}{ 2 \left(1 - t^2 + t^4 \right)}\, \Psi_t$ is  unital and trace preserving. 

The map $\Psi_{p,q}$ from \cref{thm:BS9} is  neither unital nor trace preserving. 
Since 
\begin{equation*} \Psi_{p,q}(I)=
    \begin{bmatrix}
    \substack{p^2 (1-p q)^2 + (2 p-q) q}    & 0 & 0\\
    0 &  
    \substack{2 p q \left( (1-p q)^2 +(2 p-q)q \right)} & 0\\
  0 & 0 &
\substack{ p^2 (1-p q)^2 + (2 p-q) q}  
\end{bmatrix}
\end{equation*}
is positive definite for all $(p,q) \in \mathcal{R}$, we can set $\varepsilon=0$ in the proof of \cref{ex:UnEW}. 
Then the newly obtained unital map has the form 
\begin{equation} \label{eq:pqUnital}
    \Psi_{p,q}^{\text{u}} (Z):= \Psi_{p,q}(I)^{-1/2}\, \Psi_{p,q}(Z)\,  \Psi_{p,q}(I)^{-1/2}.
\end{equation}
In particular, on the segment $q=2p$ on $\mathcal{R}$, 
the Krause rank 1 of the completely positive maps $\Psi_{p,2p}$ is preserved. Moreover, from 
\begin{equation*}
\Psi_{p,2p}(I)^{-1/2} =     \frac{1}{p (1-2 p^2)}
     \begin{bmatrix}
        1 & 0 & 0 \\
        0 & 1/2 & 0 \\
        0 & 0 & 1
    \end{bmatrix}
\end{equation*}
it follows that $\Psi_{p,2p}^{\text{u}}$ is independent on $p$:
\begin{equation*}
\Psi_{p,2p}^{\text{u}} (Z) =
     \begin{bmatrix}
        1 & 0 & 0 \\
        0 & -1 & 0 \\
        0 & 0 & 1
    \end{bmatrix}
      Z 
    \begin{bmatrix}
        1 & 0 & 0 \\
        0 & -1 & 0 \\
        0 & 0 & 1
    \end{bmatrix}.
\end{equation*}

Also the map $\Psi_{m,n}$ in \cref{thm:BS8} is  neither unital nor trace preserving. 
First observe that 
\begin{equation*} \Psi_{m,n}(I)=
    \begin{bmatrix}
    \substack{n^2+ m^2 n^2+b}    & \substack{m n (m-n)} & 0\\
     \substack{m n (m-n)} &   \substack{m^2+ m^2 n^2+b}   & 0\\
  0 & 0 &
\substack{ (m+n)^2 +2b}  
\end{bmatrix}
\end{equation*}
is positive definite for all 
$(m,n) \in \mathcal{A} \backslash \set{(0,0),(-1,1),(1,-1)}$
(recall $b\geq 0$ on $\mathcal{A}$ and 0 on $\partial \mathcal{A}$),
because
\begin{equation*}
    \det \left( \Psi_{m,n}(I) \right)=
\left(2 b + (m+n)^2 \right) 
\left(b^2+b (m^2 + n^2+2  m^2 n^2) + m^2 n^2 (1 + m n )^2 \right).
\end{equation*}
Therefore we can again set $\varepsilon=0$ and define
the unital map  as
\begin{equation} \label{eq:mnUnital}
    \Psi_{m,n}^{\text{u}} (Z):= \Psi_{m,n}(I)^{-1/2}\, \Psi_{m,n}(Z)\,  \Psi_{m,n}(I)^{-1/2} .
\end{equation}
Consequently, for $(m,n) \in \partial \mathcal{A}$
we get
\begin{equation*}
\Psi_{m,n}(I)^{-1} =     
     \begin{bmatrix}
\frac{1+n^2}{n^2 (1+m n)^2}  & \frac{n-m}{m n (1+m n)^2} & 0 \\
\frac{n-m}{m n (1+m n)^2} & \frac{1+m^2}{m^2 (1+m n)^2} & 0 \\
        0 & 0 & \frac{1}{(m + n)^2} 
    \end{bmatrix},
\end{equation*}
and by \cref{pg:mnCP} the completely positive maps 
$\Psi^{\text{u}}_{m,n}$
on the boundary are  equal to
\begin{equation*}
\Psi_{m,n}^{\text{u}} (Z) =\Psi_{m,n}(I)^{-1/2}
     \begin{bmatrix}
        \substack{-n} & \substack{-m n} & \substack{0} \\
        \substack{m n} & \substack{-m} & \substack{0} \\
       \substack{ 0} & \substack{0} & \substack{m+n}
    \end{bmatrix}
      Z 
    \begin{bmatrix}
        \substack{-n} & \substack{m n} & \substack{0} \\
        \substack{-m n} & \substack{-m} & \substack{0} \\
       \substack{ 0} & \substack{0} & \substack{m+n}
    \end{bmatrix}
    \Psi_{m,n}(I)^{-1/2}.
\end{equation*}

\section{Algorithm and its analysis}
\label{sec:alg}
The summary of the construction of a new entanglement witness 
$\Psi$ or $C(\Psi)$
in \cref{sec:examples}  is as follows: 
\begin{equation*}
    C(\Psi) \notin \text{co-}\!\PSD + \PSD  \Longleftrightarrow
    \text{ there exists } \rho \in \mathcal{PPT} 
    \text{ such that } 
     \langle C(\Psi), \rho
    \rangle_{\text{HS}} <0.
\end{equation*}
Such an entanglement witness induces a linear functional 
$\langle C(\Psi), \, \cdot \,  \rangle_{\text{HS}}$. 
The next proposition rephrases the above equivalence 
as an optimization problem. 

\begin{proposition} Let $\Psi \colon \M_3^{\text{sa}} \rightarrow 
\M_3^{\text{sa}}$ be a 
positive map which is not decomposable.
The solution of the semidefinite program
\begin{IEEEeqnarray*}{CC} 
\text{minimize}\colon & \Tr \left( C(\Psi)\, \rho \right)\\
\text{subject to}\colon & \rho^{\Gamma} \succeq 0 \\
    & \rho \succeq 0
\end{IEEEeqnarray*}
is an entangled PPT state on $\CC^3 \otimes \CC^3$. 
\end{proposition}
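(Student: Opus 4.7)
The plan is to unpack the SDP by duality: the minimum is negative precisely because $\Psi$ is not decomposable, and the minimizer is forced to be entangled because $\Psi$ is positive. Throughout I will tacitly add the normalisation $\Tr \rho = 1$ (so that the feasible set is literally the compact, convex base $\mathcal{PPT}$ rather than the unbounded cone, and the minimum is attained).

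First, I would show that the optimum value is strictly negative. By the hypothesis that $\Psi$ is not decomposable, its Choi matrix satisfies $C(\Psi) \notin \text{co-}\!\PSD + \PSD$. Using the duality pairing in \cref{tab:conesO},
\begin{equation*}
  \left( \text{co-}\!\PSD + \PSD \right)^{\ast} = \mathcal{PPT},
\end{equation*}
so a matrix fails to lie in $\text{co-}\!\PSD + \PSD$ exactly when some element of the dual cone separates it from this cone. Applying this to $C(\Psi)$ gives a matrix $\rho_0 \in \mathcal{PPT}$ (which we may rescale to be a state) with $\Tr(C(\Psi)\, \rho_0) < 0$. Hence the feasible set is nonempty and the infimum of the SDP is strictly negative.

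Next, by compactness of the set $\mathcal{PPT} \cap \{\Tr \rho = 1\}$ (it is a bounded closed subset of $\M^{\text{sa}}_{9}$, being the intersection of two closed convex cones with a hyperplane) and continuity of the linear objective $\rho \mapsto \Tr(C(\Psi)\, \rho)$, a minimiser $\rho^\ast$ exists and satisfies $\Tr(C(\Psi)\, \rho^\ast) < 0$ together with $\rho^\ast \succeq 0$ and $\Gamma(\rho^\ast) \succeq 0$, i.e., $\rho^\ast \in \mathcal{PPT}$.

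Finally I would rule out separability of $\rho^\ast$. Because $\Psi$ is a positive map, the correspondence in \cref{tab:conesSoO} places its Choi matrix in the cone of block positive matrices, $C(\Psi) \in \mathcal{BP} = \SEP^{\ast}$. Consequently $\Tr(C(\Psi)\, \sigma) \geq 0$ for every $\sigma \in \SEP$. Since $\Tr(C(\Psi)\, \rho^\ast) < 0$, the minimiser $\rho^\ast$ cannot be separable; combined with $\rho^\ast \in \mathcal{PPT}$, this proves $\rho^\ast$ is an entangled PPT state.

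The only real subtlety, and the step worth being careful about, is the duality argument in the first paragraph: one has to invoke a separating hyperplane for the closed convex cone $\text{co-}\!\PSD + \PSD$ (which is closed as the sum of two closed cones one of which, $\PSD$, is pointed with a bounded base) and observe that the separating functional can be represented by an element of the dual cone $\mathcal{PPT}$ by the Hilbert--Schmidt self-duality identification used throughout \cref{subsec:Notation}. Once this is in place, the rest of the proof is a direct application of the positivity of $\Psi$ and block positivity of $C(\Psi)$.
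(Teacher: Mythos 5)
Your argument is correct and is essentially the proof the paper leaves implicit: the proposition is reduced there to the duality statement
$C(\Psi)\notin\text{co-}\!\PSD+\PSD \Longleftrightarrow \exists\,\rho\in\mathcal{PPT}$ with $\langle C(\Psi),\rho\rangle_{\text{HS}}<0$, and you supply exactly the separating-hyperplane, compactness and block-positivity steps needed to justify it (your explicit trace normalisation is also needed, since over the unnormalised cone the infimum is $-\infty$). One correction to the step you yourself flag as the main subtlety: ``sum of two closed cones, one of which is pointed with a bounded base'' is not a valid closedness criterion in general --- a single ray is pointed with a bounded base, yet adding a ray tangent to the boundary of the ice-cream cone already produces a non-closed sum. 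The correct, equally short argument here is that $\PSD\cap\left(-\,\text{co-}\!\PSD\right)=\{0\}$: if $M\succeq 0$ and $\Gamma(M)\preceq 0$, then $\Tr M=\Tr\Gamma(M)\leq 0$ together with $M\succeq 0$ forces $M=0$; for closed convex cones in finite dimensions this condition does guarantee that their sum is closed, so the bipolar theorem applies and yields your separating $\rho_0\in\mathcal{PPT}=\left(\text{co-}\!\PSD+\PSD\right)^{\ast}$. With that repair the proof is complete.
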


\Cref{alg:SDP} takes $\Psi \in \bP \backslash \boldsymbol{DEC}$ as 
input and the output is an entangled state $\rho \in \M_9^{\text{sa}}$
whose entanglement cannot be detected by  
$\Gamma=T \otimes \Id$.

\begin{algorithm}
\caption{Semidefinite program}
\label{alg:SDP}
\begin{algorithmic} \vspace{2mm}
\STATE{\hspace{2cm} \textbf{minimize:} \hspace{5mm} 
$\Tr \left(C(\Psi)\, \rho \right)$}\vspace{2mm}
\STATE{\hspace{18.5mm} \textbf{subject to:} \hspace{8mm} 
$\rho^{\Gamma} \succeq 0 $}\vspace{1mm}
\STATE{\hspace{49.5mm}  
$\rho \succeq 0 $}\vspace{1mm}
\end{algorithmic}
\end{algorithm}

Furthermore, we wish to construct density matrices that are not detected by the Choi map \eqref{eq:ChoisMap}. This imposes a new constraint to the semidefinite program, namely  
$(\Psi_C^{\dagger} \otimes \Id) \rho \succeq 0$ as shown 
in \cref{fig:PT_Choi}.
By taking into account this additional constraint, \cref{alg:SDPandChoi} returns states which are detected neither by the transposition nor by the Choi map. See \cref{fig:Psi_t} for the graphical description of \cref{alg:SDPandChoi}.

\begin{algorithm}
\caption{Semidefinite program}
\label{alg:SDPandChoi}
\begin{algorithmic} \vspace{2mm}
\STATE{\hspace{2cm} \textbf{minimize:} \hspace{5mm} 
$\Tr \left(C(\Psi)\, \rho \right)$}\vspace{2mm}
\STATE{\hspace{18.5mm} \textbf{subject to:} \hspace{2mm} 
$(\Psi_C^{\dagger} \otimes \Id) \rho \succeq 0$ } \vspace{1mm}
\STATE{\hspace{49.5mm}  $\rho^{\Gamma} \succeq 0 $}\vspace{1mm}
\STATE{\hspace{50.5mm}  
$\rho \succeq 0 $}\vspace{1mm}
\end{algorithmic}
\end{algorithm}

\begin{figure}[htbp]
  \centering
  \includegraphics[width=0.73\textwidth, height=95mm]{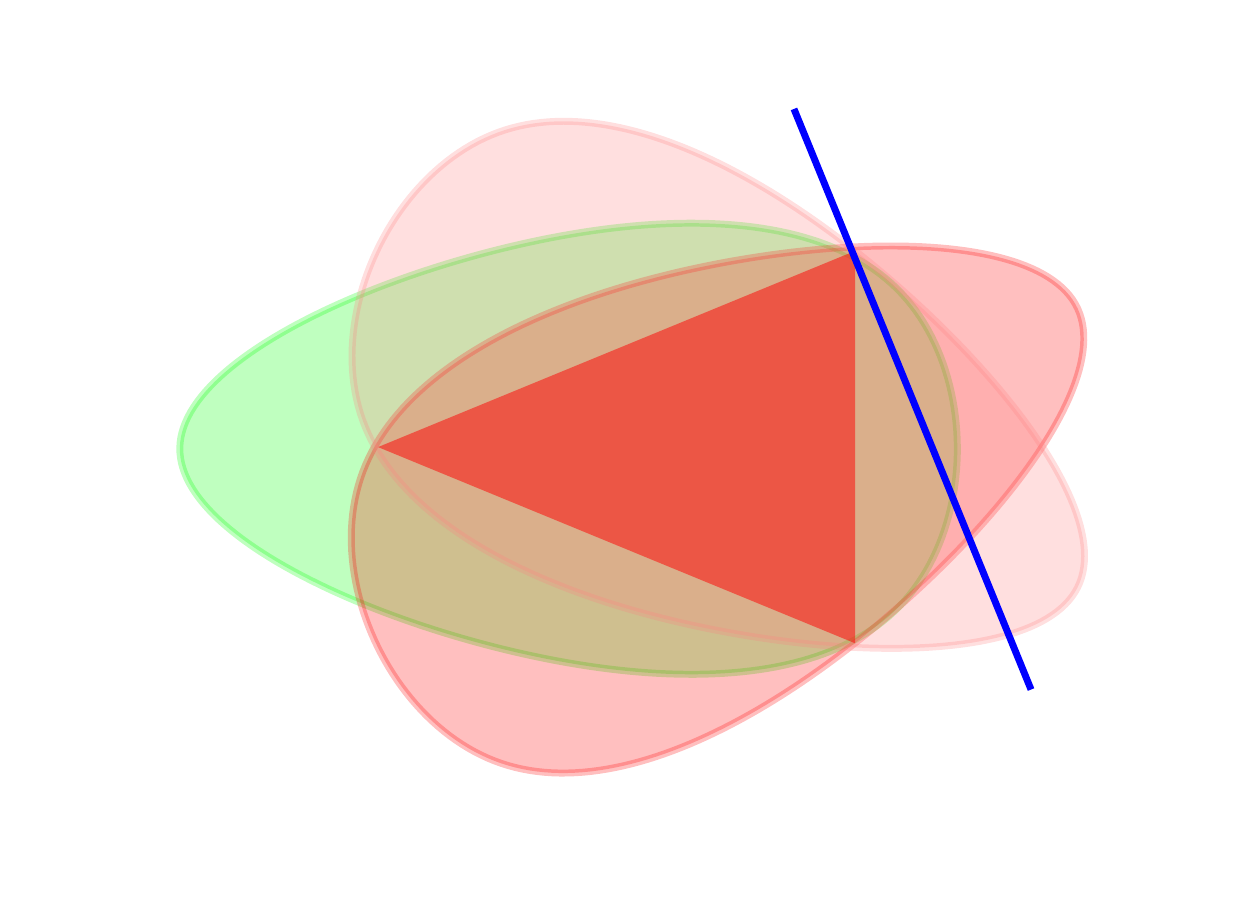}
   \put(-160,215){$\Gamma(\D)$}
    \put(-135,134){$\Sep$ }
    \put(-250,134){$( \Psi_C^{\dagger} \otimes \Id )(\D)$ }
    \put(-160,50){$\D$}
         \put(-75,61) {\vector(0,1){61}}
         \put(-82,61) {\vector(-1,1){23}}
       \put(-102,52){\color{blue} $\langle 
        C(\Psi), \cdot \,  \rangle_{\text{HS}} > 0$}
       \put(-100,40){entangled density matrices }
       \put(-100,30){not detected by $\Psi$ }
       \put(-76,205) {\vector(0,-1){32}}
       \put(-82,233){\color{blue} \boldmath$\langle 
        C(\Psi), \cdot \,  \rangle_{\text{HS}} < 0$}
            \put(-80,220){entangled density matrices  }
             \put(-80,210){detected by $\Psi$}
             \put(-71,166){\color{blue} $\bullet \rho$}
  \caption{Entanglement witness $\Psi$ as a linear functional 
 $\langle C(\Psi),\, \cdot \,  \rangle_{\text{HS}}$. Density matrix $\rho$ is such that $\langle C(\Psi), \rho  \rangle_{\text{HS}} =
 \Tr \left(C(\Psi)\, \rho \right)< 0$ is minimal.}
  \label{fig:Psi_t}
\end{figure}

Before we run our semidefinite program on positive maps in
\cref{thm:BS10,thm:BS9,thm:BS8}, we prove that it suffices to 
execute it for real symmetric matrices.

 \begin{lemma} \label{lem:sym}
 When the linear objective function in \cref{alg:SDPandChoi} is independent on the imaginary part of 
 $\rho \in \M_9^{\text{sa}}$,
 it suffices to run the algorithm for real symmetric matrices with trace 1.
 \end{lemma}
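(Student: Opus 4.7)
The plan is to show that every feasible Hermitian trace-one matrix $\rho$ in \cref{alg:SDPandChoi} can be replaced by a real symmetric trace-one feasible matrix achieving the same objective value. Since real symmetric matrices form a subset of Hermitian matrices, this will imply that restricting to $\rho \in \M_9^{\text{sym}}(\RR)$ does not alter the infimum.

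The construction is to take $\rho_R := \frac{1}{2}(\rho + \ols{\rho})$, the real part of $\rho$, which is automatically real symmetric because $\rho$ is Hermitian. Since the trace of a Hermitian matrix is real, $\Tr(\rho_R) = \Tr(\rho) = 1$. To verify feasibility, I would use the fact that all three constraints in \cref{alg:SDPandChoi} are preserved under entrywise complex conjugation. Indeed, $\rho \succeq 0$ implies $\ols{\rho} \succeq 0$ because $\ols{\rho}$ has the same (real) spectrum as $\rho$; partial transposition has real coefficients so it commutes with conjugation, giving $\ols{\rho}^{\,\Gamma} = \ols{\rho^{\Gamma}} \succeq 0$; and the Choi map $\Psi_C$ of \eqref{eq:ChoisMap} is defined by a matrix with integer entries, so $\Psi_C^{\dagger} \otimes \Id$ also commutes with conjugation and hence $(\Psi_C^{\dagger} \otimes \Id)\ols{\rho} = \ols{(\Psi_C^{\dagger}\otimes \Id)\rho} \succeq 0$. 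By convexity of the PSD cone, these three constraints are preserved when averaging $\rho$ and $\ols{\rho}$, so $\rho_R$ is feasible.

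It remains to compare the objective values. Decomposing $\rho = \rho_R + i\, \rho_I$ into its real symmetric and (purely imaginary) skew-symmetric parts, the hypothesis that the linear functional $\Tr(C(\Psi)\,\cdot\,)$ is independent of $\rho_I$ yields directly
\begin{equation*}
\Tr\bigl(C(\Psi)\, \rho\bigr) \;=\; \Tr\bigl(C(\Psi)\, \rho_R\bigr).
\end{equation*}
Thus $\rho_R$ is a feasible real symmetric trace-one point attaining the same objective as $\rho$, which proves the claim.

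There is no genuine obstacle: the argument is purely a symmetrization by complex conjugation, and the only point that needs checking is that \emph{all} constraints appearing in the program come from maps with real-valued coefficients (transposition, identity, and the Choi map $\Psi_C$), so that conjugation acts as a symmetry of the feasible set. If one were to augment \cref{alg:SDPandChoi} with an additional positivity constraint coming from a map with genuinely complex coefficients, this symmetrization argument would no longer apply, and the reduction to real symmetric matrices would fail.
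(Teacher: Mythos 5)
Your proof is correct and follows essentially the same route as the paper: both symmetrize via $\rho \mapsto \tfrac{1}{2}(\rho+\ols{\rho})$, which for Hermitian $\rho$ equals $\tfrac{1}{2}(\rho+\rho^T)$, and both verify that the constraints $\rho \succeq 0$, $\rho^{\Gamma}\succeq 0$ and $(\Psi_C^{\dagger}\otimes\Id)\rho\succeq 0$ are invariant under this operation because the maps involved have real coefficients. The only cosmetic difference is that you phrase the invariance through complex conjugation while the paper phrases it through transposition; on Hermitian matrices these coincide, so the arguments are the same.
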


\begin{proof}
Consider a solution $\rho = [s_{ij}]_{i,j=0}^8 \in \M_9^{\text{sa}}$ of \cref{alg:SDPandChoi}.
First observe that $ \frac{1}{2} \left(\rho+\ols{\rho}\right)= 
    \frac{1}{2} \left(\rho+\rho^T\right) \in \M_9^{\text{sym}}$
is a real symmetric density matrix which, under the assumptions of the lemma, achieves the same minimal value 
$\Tr \left(C(\Psi)\, \rho \right)$.     
Next we explicitly write the constraints in \cref{alg:SDPandChoi} for 
$\rho  \in \M_9^{\text{sa}}$:

\begin{displaymath}
\rho^{\Gamma} = (T \otimes \Id) \rho \succeq 0,
\end{displaymath}
\begin{multline*}
  \hspace{5.3cm}  (\Psi_C^{\dagger} \otimes \Id) \rho \succeq 0 \\
    || \\
\left[\! \begin{array}{ccc|ccc|ccc}
\substack{s_{00}+s_{66}} & \substack{s_{01} + s_{67}}  & \substack{s_{02} + s_{68}} & 
-s_{03} & - s_{04} & -s_{05} &
-s_{06} & - s_{07} & -s_{08}\\
\substack{\ols{s_{01}}+\ols{s_{67}}} & \substack{s_{11} + s_{77}}  & \substack{s_{12} + s_{78}} & 
-s_{13} & - s_{14} & -s_{15} &
-s_{16} & - s_{17} & -s_{18}\\
\substack{\ols{s_{02}}+\ols{s_{68}}} & \substack{\ols{s_{12}}+\ols{s_{78}} } & \substack{ s_{22} + s_{88} } & 
-s_{23} & - s_{24} & -s_{25} &
-s_{26} & - s_{27} & -s_{28} \vspace{1mm} \\
\hline 
-\ols{s_{03}} & -\ols{s_{13}} & -\ols{s_{23}} & 
\substack{s_{00}+s_{33} } & \substack{s_{01}+s_{34}} & \substack{s_{02}+s_{35}} &
-s_{36} & - s_{37} & -s_{38}\\
-\ols{s_{04}} & -\ols{s_{14}} & -\ols{s_{24}} & 
\substack{\ols{s_{01}} + \ols{s_{34}}} & \substack{s_{11}+s_{44} } & \substack{s_{12}+s_{45}}  &
-s_{46} & - s_{47} & -s_{48}\\
-\ols{s_{05}} & -\ols{s_{15}} & -\ols{s_{25}} & 
\substack{\ols{s_{02}}+\ols{s_{35}}} & \substack{\ols{s_{12}}+\ols{s_{45}}}  &\substack{ s_{22}+s_{55}} &
-s_{56} & - s_{57} & -s_{58} \vspace{1mm}\\
\hline 
-\ols{s_{06}} & -\ols{s_{16}} & -\ols{s_{26}} & 
-\ols{s_{36}} & -\ols{s_{46}} & -\ols{s_{56}} & 
\substack{s_{33}+s_{66}} & \substack{s_{34}+s_{67}} & \substack{s_{35}+s_{68}} \\ 
-\ols{s_{07}} & -\ols{s_{17}} & -\ols{s_{27}} & 
-\ols{s_{37}} & -\ols{s_{47}} & -\ols{s_{57}} & 
\substack{\ols{s_{34}}+\ols{s_{67}}} &\substack{ s_{44}+s_{77}} & \substack{s_{45}+s_{78}}\\
-\ols{s_{08}} & -\ols{s_{18}} & -\ols{s_{28}} & 
-\ols{s_{38}} & -\ols{s_{48}} & -\ols{s_{58}} & 
\substack{\ols{s_{35}}+\ols{s_{68}}} & \substack{\ols{s_{45}}+\ols{s_{78}}} & \substack{s_{55}+s_{88}}
\end{array} \!\right]\!\!.
\end{multline*}
Since the transposition and the Choi map commute with the transposition
(which can also be seen from the above matrix), it holds that
\begin{IEEEeqnarray*}{CCCC} 
(T\otimes \Id)\rho^T &=& \left((T\otimes \Id)\rho \right)^T, & \\
(\Psi_C^{\dagger} \otimes \Id)\rho^T &=& 
\left( (\Psi_C^{\dagger} \otimes \Id)\rho \right)^T & \text{ and }\\
\Tr \left(C(\Psi)\, \rho \right) &=& 
\Tr \left(C(\Psi)\, \rho^T \right). & 
\end{IEEEeqnarray*} 
Therefore $\rho \in \M_9^{\text{sa}}$ is an instance of \cref{alg:SDPandChoi} if and only if $\rho^T$ is an instance of the same algorithm.
Moreover, the same is true for the real symmetric state $\frac{1}{2} \left(\rho +\rho^T\right)$. 
 
Together with the trace condition 
$\Tr \rho = \sum_{i=0}^8 s_{ii} =1$,
we can execute the semidefinite program in \cref{alg:SDPandChoi} on a convex compact set in 
 $\M_9^{\text{sym}} \cong \RR^{45}$.
\end{proof}

The next lemma explores how \cref{alg:SDP,alg:SDPandChoi} 
change if we make entanglement witnesses unital, like in the examples in 
\cref{subsec:unital}. 
\begin{lemma} \label{lem:projA}
Let $\Psi \colon \M_3^{\text{sa}} \rightarrow \M_3^{\text{sa}}$ be a positive map and $Q \in \M_3$. Then $\Psi_Q$ defined by 
$Z  \mapsto  Q \, \Psi(Z) Q^{\dagger}$ is another positive map, and 
their Choi matrices are connected as 
\begin{equation*}
    C(\Psi_Q) =   (Q \otimes I)\,  C(\Psi)
    \,  (Q \otimes I)^{\dagger}.
\end{equation*}
\end{lemma}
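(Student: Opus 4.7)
The plan is to verify the two claims separately, each by a short direct calculation that follows from the definitions. The computation is routine and I do not anticipate any real obstacle; the only content is keeping track of the tensor factors.

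First I would check that $\Psi_Q$ is a positive map. Self-adjointness preservation is immediate from
\begin{equation*}
\Psi_Q(Z^{\dagger}) = Q\,\Psi(Z^{\dagger})\,Q^{\dagger} = Q\,\Psi(Z)^{\dagger}\,Q^{\dagger} = \bigl(Q\,\Psi(Z)\,Q^{\dagger}\bigr)^{\dagger} = \Psi_Q(Z)^{\dagger},
\end{equation*}
since $\Psi$ itself preserves self-adjointness. Positivity then follows: if $Z \succeq 0$, then $\Psi(Z) \succeq 0$, and conjugation by $Q$ preserves the positive semidefinite cone, so $\Psi_Q(Z) = Q\,\Psi(Z)\,Q^{\dagger} \succeq 0$.

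Next I would establish the Choi matrix identity directly from the basis expansion of $C$. Using $C(\Phi) = \sum_{i,j} \Phi(E_{ij}) \otimes E_{ij}$ and unwinding the definition of $\Psi_Q$ term by term gives
\begin{equation*}
C(\Psi_Q) = \sum_{i,j} Q\,\Psi(E_{ij})\,Q^{\dagger} \otimes E_{ij}.
\end{equation*}
The key step is the mixed-product identity $(A \otimes B)(C \otimes D) = AC \otimes BD$, which lets me rewrite each summand as
\begin{equation*}
Q\,\Psi(E_{ij})\,Q^{\dagger} \otimes E_{ij} = (Q \otimes I)\bigl(\Psi(E_{ij}) \otimes E_{ij}\bigr)(Q^{\dagger} \otimes I).
\end{equation*}
Pulling $(Q \otimes I)$ and $(Q \otimes I)^{\dagger} = Q^{\dagger} \otimes I$ out of the sum produces exactly $(Q \otimes I)\,C(\Psi)\,(Q \otimes I)^{\dagger}$, completing the proof.

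Alternatively, one can argue more conceptually via the maximally entangled vector presentation \eqref{eq:ChoiMat}: since $(Q \otimes I)$ commutes with $\Id_{\M_{d_2}}$ on the second tensor factor,
\begin{equation*}
C(\Psi_Q) = (\Psi_Q \otimes \Id)(\ketbra{\chi}{\chi}) = (Q \otimes I)\,(\Psi \otimes \Id)(\ketbra{\chi}{\chi})\,(Q \otimes I)^{\dagger},
\end{equation*}
which is the same identity. Either derivation amounts to the same mixed-product manipulation; no serious difficulty is hidden anywhere.
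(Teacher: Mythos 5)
Your proof is correct and essentially matches the paper's: the paper verifies the identity $(\Psi_Q\otimes\Id)\rho=(Q\otimes I)\bigl((\Psi\otimes\Id)\rho\bigr)(Q^{\dagger}\otimes I)$ on product states, extends by linearity, and applies \eqref{eq:ChoiMat}, which is exactly your ``alternative'' derivation, while your primary basis-expansion computation is a trivially equivalent variant of the same mixed-product manipulation. You additionally spell out why $\Psi_Q$ is positive and self-adjointness-preserving, which the paper leaves implicit; that is a harmless (and slightly more complete) addition.
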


\begin{proof}
For a product state $\rho=\tau \otimes \tau'$ it holds
\begin{equation*}
    (\Psi_Q \otimes \Id) \rho = 
    Q \Psi(\tau) Q^{\dagger} \otimes \tau' =
    (Q \otimes I)\, \left( (\Psi \otimes \Id) \rho  \right)\,  
    (Q^{\dagger} \otimes I),
\end{equation*}
and by linearity the same holds for any state. 
The relation between $C(\Psi) $ and $C(\Psi_Q) $ then
easily follows from the Choi matrix representation \cref{eq:ChoiMat}. 
\end{proof}

The relation between the Choi matrices of $\Psi$ and $\Psi_Q$ implies the relation between the objective functions in \cref{alg:SDP,alg:SDPandChoi}:
\begin{equation*}
    \Tr \left(C(\Psi_Q)\, \rho \right)= 
     \Tr \left(   C(\Psi)
    \,  (Q^{\dagger} \otimes I)  \rho (Q \otimes I) \right).
\end{equation*}
Next we prove that,
when $Q$ is invertible,  the constraints of \cref{alg:SDP} 
remain invariant under the action 
$ \rho \mapsto (Q^{\dagger} \otimes I)\,  \rho \, (Q \otimes I)$.
Indeed, positive semidefiniteness is 
clearly preserved. Moreover,  $\rho^{\Gamma} \succeq 0$ if and only if
$\left( (Q^{\dagger} \otimes I)\,  \rho \, (Q \otimes I) \right)^{\Gamma} \succeq 0$, which follows by linearity from the below calculation 
on a product state $\rho = \tau \otimes \tau'$,
\begin{multline*}
(T \otimes \Id) 
\left( (Q^{\dagger} \otimes I)\,  \rho \, (Q \otimes I) \right)
 =
(T \otimes \Id) 
\left( Q^{\dagger} \tau Q  \otimes \tau' \right) = 
 Q^T \tau^T \Bar{Q}  \otimes \tau' = \\
   (Q^T \otimes I)\,  (\tau^T \otimes \tau') \, (\Bar{Q} \otimes I) = 
   \left(Q^T \otimes I \right)\,  \rho^{\Gamma} \, 
  \left( (Q^T)^{\dagger} \otimes I \right).
\end{multline*}
On the contrary, the constraint
in \cref{alg:SDPandChoi} involving the Choi map $\Psi_C$
is not invariant  under the action 
$ \rho \mapsto (Q^{\dagger} \otimes I)\,  \rho \, (Q \otimes I)$.

\section{Experimental results: New examples of entangled density matrices}
\label{sec:experiments}
In this section we execute \cref{alg:SDPandChoi} for 
entanglement witnesses in \cref{thm:BS10,thm:BS9,thm:BS8}. As explained in
\cref{sec:examples,sec:alg}, for
positive maps that are not decomposable, the outputs will be new entangled states.

\begin{exmp}
The map $\frac{1}{ 2 \left(1 - t^2 + t^4 \right)}\, \Psi_t$ from \cref{thm:BS10} is  unital and trace preserving. From now on, for the sake of shorter notation, we denote it by $\Psi_t$.

In \cref{alg:SDPandChoi} for $\Psi_t$ we are minimizing 
\begin{multline} \label{eq:trace_t}
\Tr \left(C(\Psi_t)\, \rho \right) = \\
\frac{1}{ 2 \left(1 - t^2 + t^4 \right)} \left( s_{11} + s_{55} + s_{66} +
    t^4  (s_{22} + s_{33} + s_{77}) +
   (1 - t^2)^2 (s_{00} + s_{44} + s_{88})  - \right. \\
 \left.   (1 - t^2 + t^4) (s_{04} + \ols{s_{04}} + 
   s_{08} + \ols{s_{08}}+
   s_{48} + \ols{s_{48}})  \right),
\end{multline}
which does not depend on the imaginary part of $\rho = [s_{ij}]_{i,j=0}^8$,
thus \cref{lem:sym} applies.
We execute \cref{alg:SDPandChoi} in \texttt{Mathematica} (which permits semidefinite programming only on real symmetric matrices) for $\Psi_t$ for 2001 equidistant values $t \in [-10, 10]$. The output values 
$\Tr \left(C(\Psi_t)\, \rho_t \right)$ are shown in \cref{fig:tracesPsi_t}.
 \begin{figure}[htbp]
  \centering
  \includegraphics[width=0.73\textwidth, height=50mm]{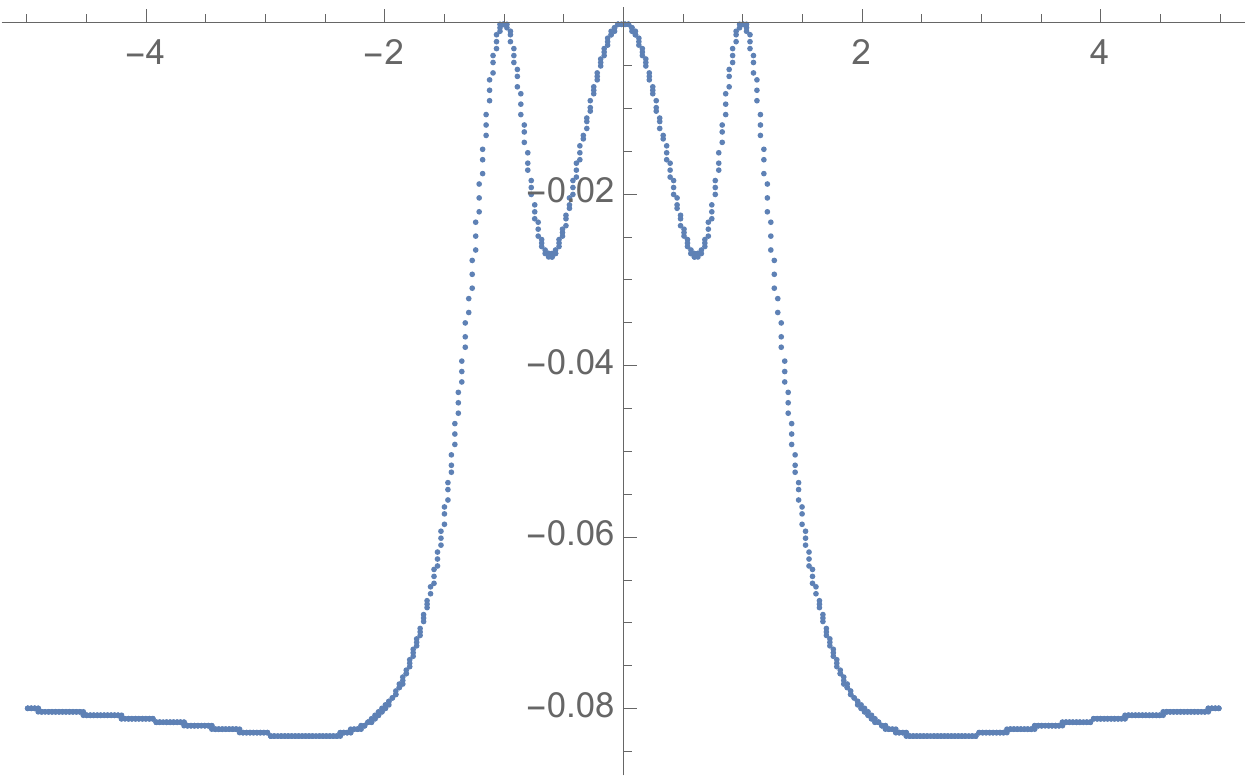}
  \put(3,135){$t$}
        \put(-170,-8){
        $\Tr \left(C(\Psi_t)\, \rho_t \right)$}
  \caption{Plot of outputs of \cref{alg:SDPandChoi}:
  $\left( t, 
 \Tr \left[C(\Psi_t)\, \rho_t \right] \right)$  for $t\in[-10,10]$.}
  \label{fig:tracesPsi_t}
\end{figure}
The output density matrices are of the sparse form 
\begin{equation*} \label{eq:rho_t}
 \rho_t = \left[   \begin{array}{ccc|ccc|ccc}
s_{00} & \cdot &  \cdot  &  
\cdot  & s_{04} & \cdot  & 
\cdot  &  \cdot   & s_{04}  \\
 \cdot  & s_{11} &  \cdot & \cdot &  \cdot  &  \cdot  &
   \cdot  &  \cdot  &  \cdot  \\
 \cdot  &  \cdot  &  s_{22} &
  \cdot  &  \cdot  &  \cdot & \cdot & \cdot  &  \cdot  \\
\hline
 \cdot  &\cdot &  \cdot  &
 s_{22} &  \cdot  &  \cdot  &
  \cdot  &  \cdot  &  \cdot  \\
\ols{s_{04}} &   \cdot  &  \cdot  &
 \cdot & s_{00} & \cdot &
  \cdot  &  \cdot   & s_{04}  \\
 \cdot  &  \cdot &  \cdot  &
  \cdot  &  \cdot  & s_{11} &
   \cdot & \cdot & \cdot \\
\hline
 \cdot  &  \cdot  & \cdot &
  \cdot  &  \cdot  &  \cdot &
  s_{11}  &  \cdot  &  \cdot \\
 \cdot  &  \cdot  &  \cdot  &
  \cdot  &  \cdot  & \cdot & 
   \cdot & s_{22} &  \cdot \\
\ols{s_{04}} & \cdot  &  \cdot  &
 \cdot  & \ols{s_{04}} &   \cdot  &
  \cdot &  \cdot  & s_{00}
\end{array} \right].
\end{equation*}

Since $\Tr \left(C(\Psi_t)\, \rho \right)$ is linear, the minimum is achieved on the boundary of the constraint domain.
If we exploit the symmetries and the sparse form of $\rho_t$,
\begin{IEEEeqnarray*}{L} 
s_{11} = s_{55} =  s_{66},\ \  
s_{22} =  s_{33} = s_{77}, \\
s_{00} =  s_{44} =  s_{88}, \ \
s_{04} =  s_{48} =  s_{08} \ \text{ and }\\
s_{00}+ s_{11}+s_{22}= \frac{1}{3}, 
\end{IEEEeqnarray*} 
we can run \cref{alg:SDPandChoi} symbolically by using the method of Lagrange multipliers.  
The constraints command 
$\rho_t$, 
$(T\otimes \Id)\rho_t$ and
$(\Psi_C^{\dagger} \otimes \Id)\rho_t$ to be positive semidefinite.
The respective eigenvalues are: 
\begin{align*}
    \rho_t \colon & 
    s_{00}-s_{04}, s_{00}-s_{04}, s_{00}+ 2 s_{04},\,
    s_{11},s_{11}, s_{11},\,
    s_{22}, s_{22}, s_{22}\, ; \\
     \rho_t^{\Gamma} \colon &
     3 \times s_{00},\, 3 \times 
\frac{1}{2} \left(s_{11} + s_{22} \pm 
\sqrt{4 s_{04}^2 + (s_{11} - s_{22})^2} \right) \, ;\\   
(\Psi_C^{\dagger} \otimes \Id)\rho_t \colon &
s_{00}- 2 s_{04} + s_{11},\, 2 \times s_{00} + s_{04} + s_{11}, \,
3 \times s_{00} + s_{22},\, 
3 \times s_{11} + s_{22}\, .
\end{align*}

Together with the trace constraint $s_{00}=\frac{1}{3}-s_{11}-s_{22}$,
\cref{alg:SDPandChoi} translates into the following optimization 
problem: \vspace{2mm}

\textbf{minimize:}
\begin{equation*}
\frac{(1 - t^2)^2 + 3  t^2 (2 - t^2) s_{11} -
   3 (1-2 t^2) s_{22}  - 6 (1 - t^2 + t^4)s_{04}}
{2(1 - t^2 + t^4)} 
\end{equation*}  

\textbf{subject to:}
\begin{IEEEeqnarray*}{C}
0 \leq s_{11} \leq \frac{1}{3},\  0 \leq s_{22} \leq \frac{1}{3},\ 
-\frac{1}{3} \leq s_{04} \leq \frac{1}{3}, \\
s_{11} s_{22} - s_{04}^2 \geq 0, \\
\frac{1}{3}- s_{11} - s_{22} + 2 s_{04} \geq 0,\  
\frac{1}{3} - s_{11} - s_{22} - s_{04} \geq 0, \ 
 \frac{1}{3}  - s_{22} - 2 s_{04}  \geq 0.
\end{IEEEeqnarray*}
The solutions are displayed in \cref{tab:rho_t}.

\begin{table}[htbp]
{\footnotesize
  \caption{All the values of $\Tr \left(C(\Psi_t)\, \rho_t \right)$ are nonpositive}  \label{tab:rho_t}
\begin{center}
  \begin{tabular}{|c|c|c|} \hline 
   $t\in\RR$ & $\min \Tr \left(C(\Psi_t)\, \rho_t \right) $ & 
   $\rho_t$  \\ \hline 
    & & \\
       &  & $s_{11}=
       \frac{2}{9}  -  \frac{1}{3 \sqrt{3 + 3 t^4 + 3 t^8}}  $ \\
   $|t| \geq 1$ &  
   $\frac{3 (1 + t^4) - 2 \sqrt{3 + 3 t^4 + 3 t^8}}{6 (1 - t^2 + t^4)}$ 
   & $s_{22}=
       \frac{2}{9}  -  \frac{t^4}{3 \sqrt{3 + 3 t^4 + 3 t^8}}  $ \\
        &    & 
   $s_{00}=s_{04}=\frac{1}{3}-s_{11}-s_{22}$ \\
    & & \\   \hline
    & & \\
              &  & 
$ s_{11}=
\frac{ 4 + t^2}{6 \sqrt{ 12 - 3t^4}} - \frac{1}{6} $ \\ 
 $|t| \leq 1$  &
 $\frac{t^2 (t^2 - 4 + \sqrt{12 - 3 t^4})}{4 (1 - t^2 + t^4)}$ & 
 $ s_{22}= 
 \frac{\sqrt{12 - 3 t^4}}{9  (2 + t^2)}$  \\ 
& & $s_{00}=\frac{1}{3}-s_{11}-s_{22}$ \\
& & $s_{04}=\frac{1}{2} (s_{00}+s_{11})$ \\
& & \\ \hline
  \end{tabular}
\end{center}
}
\end{table}

\begin{remark} Note that for $t=0$ or $t =\pm 1$, the output minimum value of \cref{alg:SDPandChoi} is zero. This means that for these $t$ the algorithm does not return new entangled sates. This is as expected, since $\Psi_0$ is the Choi map used in the constraints and $T\circ \Psi_{\pm 1}$ are completely positive. \Cref{tab:rho_t} shows that \cref{alg:SDPandChoi} actually returns separable states $\rho_{\pm 1}$ and  $\rho_0$.
\end{remark}

\end{exmp}

\begin{exmp} Next 
we execute \cref{alg:SDPandChoi} for positive maps $\Psi_{p,q}$ in \cref{thm:BS9}. We are minimizing 
\begin{multline} \label{eq:trace_pq}
\Tr \left(C(\Psi_{p,q})\, \rho \right) = \\
p^2 (1 - p q)^2 r_{00}  +  (2 p - q) q r_{11} +  
p^2 (2 p - q) q^3 r_{33} + q^2 (1 - p q)^2 r_{44} +  
(2 p - q) q r_{55}+ \\
 (2 p - q) q (1 - p^2 q^2) r_{66}  +  p^2 (2 p - q) q^3 r_{77} + 
  p^2 (1 - p q)^2 r_{88} -  \\
  p q (1 -q^2 + p^2 q^2) (r_{04} + \ols{r_{04}} +r_{48} + \ols{r_{48}})-
 (1 - p q) \left( 
  p^2 + (p- q)(1 - p^2)q  
 \right) 
    (r_{08} + \ols{r_{08}}). 
   \end{multline}
Since the objective function depends only on the real part of 
$\rho \in \M_9^{\text{sa}}$, we can use \cref{lem:sym}. 
 In \texttt{Mathematica} we run the semidefinite program for $\Psi_{p,q}$ and $\rho \in \M_9^{\text{sym}}$ 
   for 2100 points $(p,q) \in \mathcal{R}$. 
   The output states are all of the sparse form
   \begin{equation*}
    \rho_{p,q}=
   \left[   \begin{array}{ccc|ccc|ccc}
s_{00} & \cdot &  \cdot  &  
\cdot  & s_{04} & \cdot  & 
\cdot  &  \cdot   & s_{08}  \\
 \cdot  & s_{11} &  \cdot & \cdot &  \cdot  &  \cdot  &
   \cdot  &  \cdot  &  \cdot  \\
 \cdot  &  \cdot  &  s_{22} &
  \cdot  &  \cdot  &  \cdot & \cdot & \cdot  &  \cdot  \\
\hline
 \cdot  &\cdot &  \cdot  &
 s_{33} &  \cdot  &  \cdot  &
  \cdot  &  \cdot  &  \cdot  \\
s_{04} &   \cdot  &  \cdot  &
 \cdot & s_{44} & \cdot &
  \cdot  &  \cdot   & s_{48}  \\
 \cdot  &  \cdot &  \cdot  &
  \cdot  &  \cdot  & s_{55} &
   \cdot & \cdot & \cdot \\
\hline
 \cdot  &  \cdot  & \cdot &
  \cdot  &  \cdot  &  \cdot &
  s_{66}  &  \cdot  &  \cdot \\
 \cdot  &  \cdot  &  \cdot  &
  \cdot  &  \cdot  & \cdot & 
   \cdot & s_{77} &  \cdot \\
s_{08} & \cdot  &  \cdot  &
 \cdot  & s_{48} &   \cdot  &
  \cdot &  \cdot  & s_{88}
\end{array} \right].
   \end{equation*}
The output values 
$\Tr \left(C(\Psi_{p,q})\, \rho_{p,q} \right)$ are shown in \cref{fig:tracesPsi_pq}, where $(p,q) \in \mathcal{R}$ are the 2100 points in \cref{fig:leafpq}.
\begin{figure}[htbp]
  \centering
   \includegraphics[width=0.27\textwidth, height=50mm]{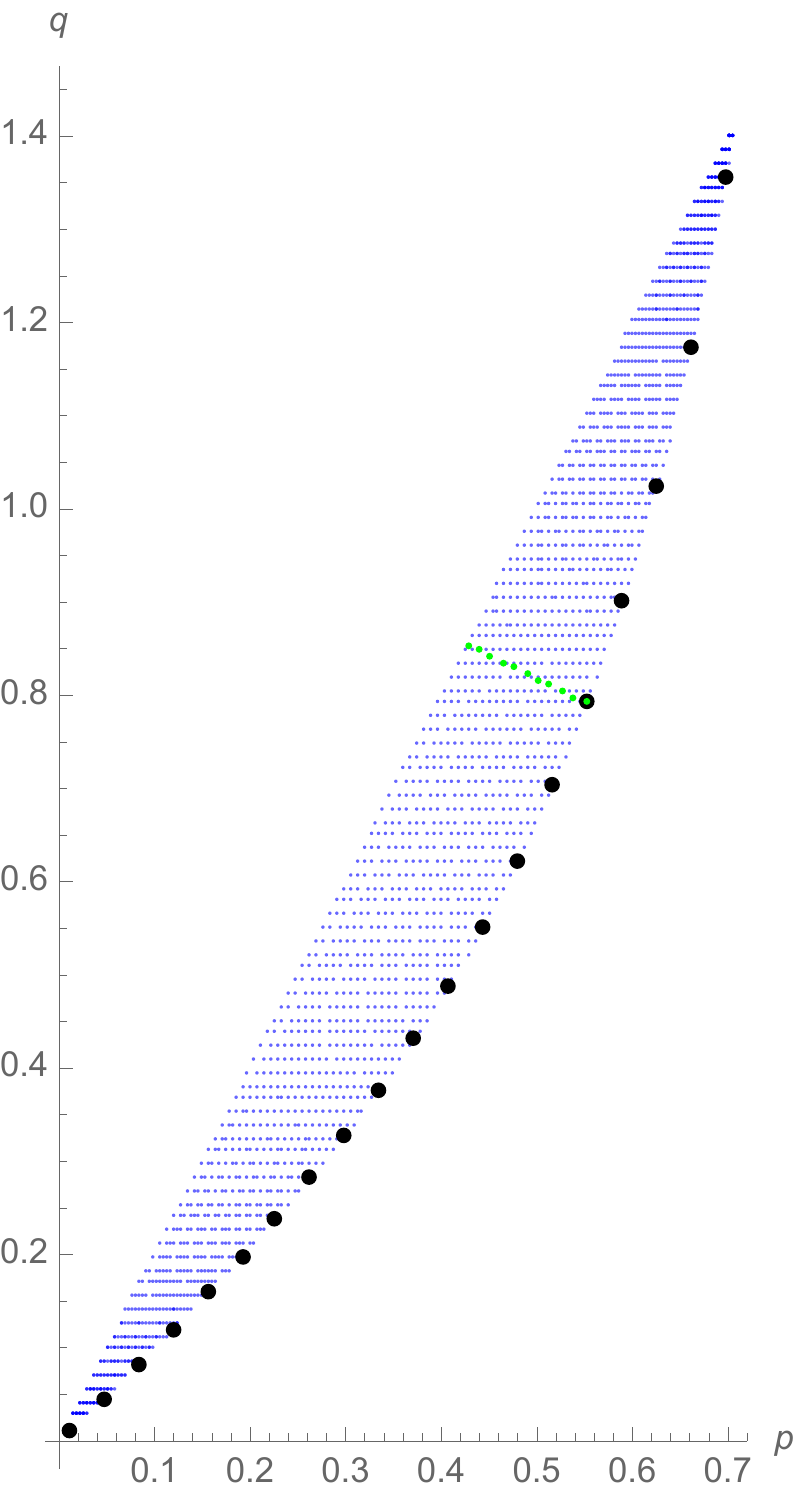}
  \includegraphics[width=0.70\textwidth, height=50mm]{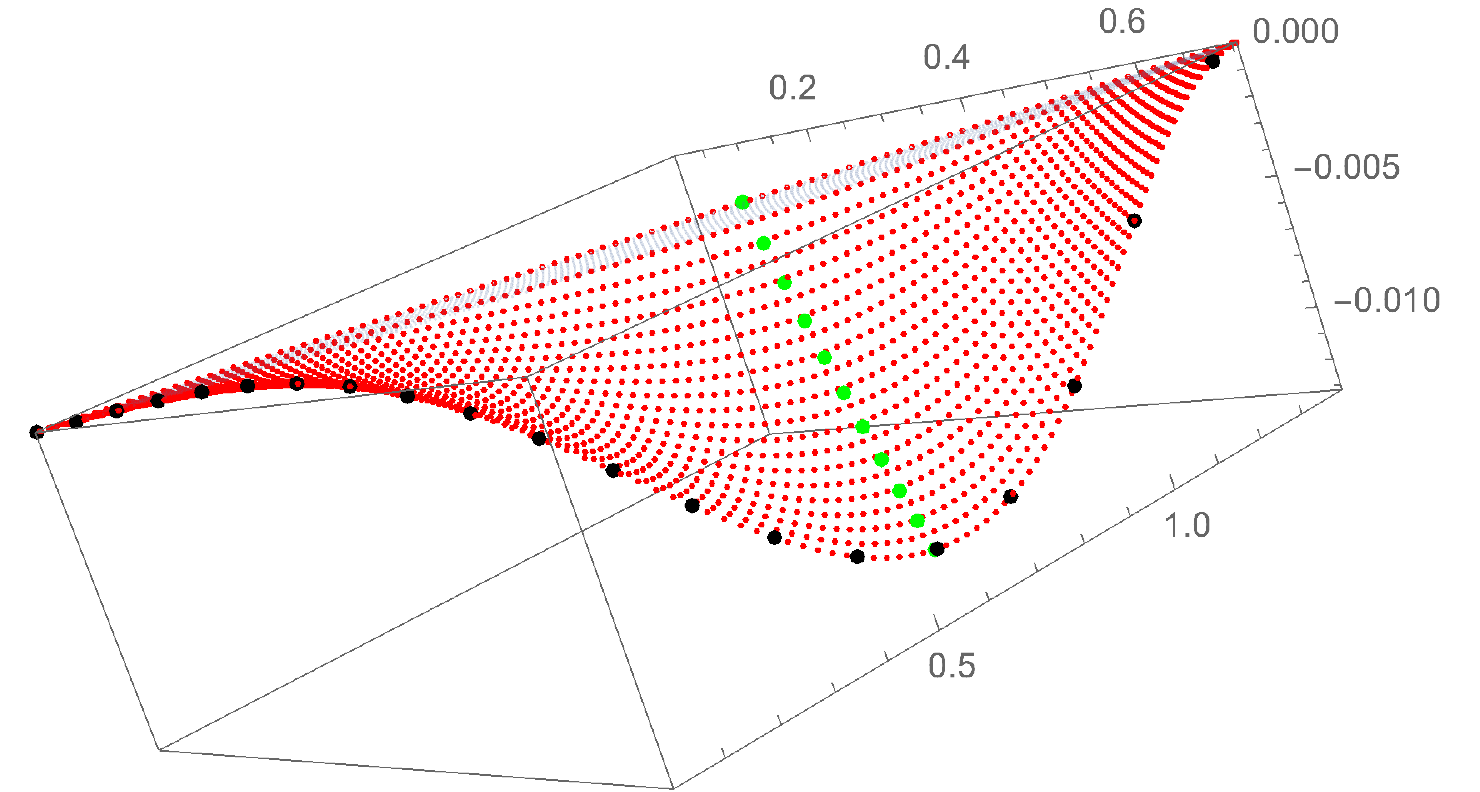}
  \put(-78,30){$\substack{p}$}
  \put(-110,127){$\substack{q}$}
  \put(-30,123){$\substack{\Tr}$}
  \caption{Plot of outputs of \cref{alg:SDPandChoi}:
  $\left( p, q, 
 \Tr \left[C(\Psi_{p,q})\, \rho_{p,q} \right] \right)$  for 2100 points
 $(p,q) \in \mathcal{R}$.}
  \label{fig:tracesPsi_pq}
\end{figure}
\begin{table}[htbp]
{\footnotesize
 \caption{All the values of $\Tr \left(C(\Psi_{p,q})\, \rho_{p,q} \right)$ are nonpositive}  
 \label{tab:rho_pq}
\begin{center}
  \begin{tabular}{|c|c|}\hline 
$(p,q)\in\mathcal{R}$ & $\min \Tr \left(C(\Psi_{p,q})\, \rho_{p,q} \right)$
\\ \hline 
\multicolumn{2}{|c|}{ $\rho_{p,q} \colon s_{00}, s_{11}, s_{22}, s_{33}, s_{44}, s_{55}, s_{66}, s_{77}, s_{88}, s_{04}, s_{08}, s_{48}$  }  \\ \hline \hline
  &  \\
 $\substack{(0.427039, 0.854079)}$ & 0   \\
\multicolumn{2}{|c|}{
$\substack{ 
0.10456, 0.13609, 0.11404, 0.12302, 0.04577, 0.12536, 0.11617, 0.13062, 0.10437, 0.04582, -0.01292, 0.04573}$  } \\ \hline
 $\substack{(0.439463, 0.84786)}$ & $\substack{-0.00160017}$   \\
\multicolumn{2}{|c|}{
$\substack{ 
0.18313,0.06871,0.16802,0.20581,0.09568,0.03886,0.01706,0.13273,0.08999,0.11892,0.05354,0.07182}$  } \\ \hline
$\substack{(0.451887, 0.841655)}$ & $\substack{-0.00313393}$   \\
\multicolumn{2}{|c|}{
$\substack{ 
0.18281,0.06947,0.16907,0.20530,0.09619,0.03923,0.01710,0.13123,0.08961,0.11942,0.05376,0.07175}$  } \\ \hline
$\substack{(0.464311, 0.835443)}$ & $\substack{-0.00460266}$   \\
\multicolumn{2}{|c|}{
$\substack{ 
0.18252,0.07023,0.17016,0.20474,0.09665,0.03958,0.01713,0.12974,0.08924,0.11991,0.05400,0.07166}$  } \\ \hline
$\substack{(0.476735, 0.829231)}$ & $\substack{-0.00600778}$   \\
\multicolumn{2}{|c|}{
$\substack{ 
0.18225,0.07099,0.17130,0.20415,0.09706,0.03992,0.01718,0.12826,0.08890,0.12038,0.05425,0.07156}$ } \\ \hline
$\substack{(0.501582, 0.816807)}$ & $\substack{-0.00863339}$   \\
\multicolumn{2}{|c|}{
$\substack{ 
0.18180,0.07249,0.17380,0.20289,0.09767,0.04052,0.01728,0.12527,0.08827,0.12128,0.05481,0.07125}$ } \\ \hline
$\substack{(0.514006, 0.810595)}$ & $\substack{-0.00985712}$   \\
\multicolumn{2}{|c|}{
$\substack{ 
0.18165,0.07324,0.17517,0.20223,0.09786,0.04078,0.01735,0.12373,0.08800,0.12170,0.05513,0.07103}$ } \\ \hline
$\substack{(0.52643, 0.804383)}$ & $\substack{-0.0110238}$   \\
\multicolumn{2}{|c|}{
$\substack{ 
0.18154,0.07399,0.17666,0.20156,0.09795,0.04099,0.01742,0.12214,0.08775,0.12212,0.05548,0.07076}$ } \\ \hline
$\substack{(0.538854, 0.798172)}$ & $\substack{-0.0121351}$   \\
\multicolumn{2}{|c|}{
$\substack{ 
0.18150,0.07475,0.17828,0.20088,0.09793,0.04116,0.01751,0.12047,0.08752,0.12254,0.05587,0.07042}$ } \\ \hline
$\substack{(0.551278, 0.79196)}$ & $\substack{-0.0131931}$   \\
\multicolumn{2}{|c|}{
$\substack{ 
0.18155,0.07551,0.18008,0.20020,0.09777,0.04126,0.01761,0.11869,0.08733,0.12295,0.05632,0.06998}$ } \\     \hline \hline
  &  \\    
$\substack{(0.082327, 0.0828888)}$ & $\substack{-0.0000147211 }$   \\
\multicolumn{2}{|c|}{
$\substack{ 
0.15636,0.06036,0.09497,0.30052,0.13139,0.02134,0.01037,0.15670,0.06797,0.13469,0.03139,0.05783
}$ } \\ \hline
    $\substack{(0.190818, 0.198028)}$ & $\substack{-0.000413618 }$   \\
\multicolumn{2}{|c|}{
$\substack{ 
0.15956,0.06105,0.09983,0.29387,0.12795,0.02255,0.01091,0.15433,0.06995,0.13394,0.03300,0.05900
}$ } \\ \hline
    $\substack{(0.299308, 0.32876 )}$ & $\substack{-0.00234474 }$   \\
\multicolumn{2}{|c|}{
$\substack{ 
0.16518,0.06265,0.10996,0.28002,0.12193,0.02506,0.01198,0.14958,0.07363,0.13245,0.03629,0.06123
}$ } \\ \hline
    $\substack{(0.407799, 0.489143)}$ & $\substack{-0.00696419}$   \\
\multicolumn{2}{|c|}{
$\substack{ 
0.17273,0.06592,0.12854,0.25600,0.11348,0.02956,0.01376,0.14100,0.07900,0.12991,0.04205,0.06456
}$ } \\ \hline
    $\substack{( 0.516289, 0.703923)}$ & $\substack{-0.0125464  }$   \\
\multicolumn{2}{|c|}{
$\substack{ 
0.18013,0.07235,0.16307,0.21696,0.10224,0.03753,0.01650,0.12578,0.08543,0.12529,0.05188,0.06870
}$ } \\ \hline
    $\substack{( 0.552453, 0.795129)}$ & $\substack{-0.0131932 }$   \\
\multicolumn{2}{|c|}{
$\substack{ 
0.18158,0.07563,0.18072,0.19960,0.09761,0.04139,0.01765,0.11843,0.08739,0.12286,0.05648,0.07002
}$ } \\ \hline
    $\substack{( 0.588616, 0.900671)}$ & $\substack{ -0.0122921 }$   \\
\multicolumn{2}{|c|}{
$\substack{ 
0.18177,0.07967,0.20306,0.17976,0.09232,0.04608,0.01888,0.10956,0.08891,0.11967,0.06191,0.07105
}$ } \\ \hline
    $\substack{(0.62478, 1.02482 )}$ & $\substack{  -0.00934857 }$   \\
\multicolumn{2}{|c|}{
$\substack{ 
0.17996,0.08461,0.23155,0.15744,0.08614,0.05171,0.02009,0.09894,0.08957,0.11542,0.06820,0.07152
}$ } \\ \hline
    $\substack{(0.660943, 1.17365 )}$ & $\substack{ -0.00459576 }$   \\
\multicolumn{2}{|c|}{
$\substack{ 
0.17515,0.09053,0.26800,0.13287,0.07879,0.05832,0.02112,0.08643,0.08879,0.10968,0.07523,0.07100
}$ } \\ \hline
    $\substack{(0.697107, 1.35613 )}$ & $\substack{-0.00033405 }$   \\
\multicolumn{2}{|c|}{
$\substack{ 
0.16618,0.09736,0.31449,0.10672,0.06994,0.06574,0.02165,0.07210,0.08581,0.10194,0.08252,0.06884
}$ } \\  \hline
  \end{tabular}
\end{center}
}
\end{table}
\Cref{tab:rho_pq} shows some example results of \cref{alg:SDPandChoi} for a selection of input points $(p,q) \in \mathcal{R}$, 
and the corresponding output density matrices $\rho_{p,q}$ together with the nonpositive values $\Tr \left(C(\Psi_{p,q})\, \rho_{p,q} \right)$.  
The top part of the table corresponds to the points on the segment $(p,1.0676-\frac{1}{2}p) \in \mathcal{R}$ (shown green in \cref{fig:tracesPsi_pq}).
The bottom part of the table corresponds to the boundary points 
$(p,\frac{p}{1-p^2})$
of $\mathcal{R}$  (shown black in \cref{fig:tracesPsi_pq}).
Recall that on the boundary segment $q=2p$ the maps $\Psi_{p,2p}$ are completely positive, therefore \cref{alg:SDPandChoi} returns 
 $ \Tr \left(C(\Psi_{p,2p})\, \rho_{p,2p} \right)=0$  and
no new entangled states.

We conclude the analysis by comparing the outputs of
\cref{alg:SDP,alg:SDPandChoi} for $\Psi_{p,q}$ and the unital map $\Psi_{p,q}^{\text{u}}$ defined in \eqref{eq:pqUnital}.
Since $\Psi_{p,q}(I)^{-1/2}$ is invertible, the outputs of \cref{alg:SDP} for $\Psi_{p,q}$ and $\Psi_{p,q}^{\text{u}}$ are related by \cref{lem:projA} as follows
$\rho \leftrightarrow  \left( \Psi(I)^{-1/2} \otimes I  \right) \, \rho \,
   \left( \Psi(I)^{-1/2} \otimes I  \right)$.
However, \cref{alg:SDPandChoi} yields some inherently different entangled states.
\end{exmp}

\begin{exmp} Finally 
we run \cref{alg:SDPandChoi} for positive maps $\Psi_{m,n}$ from \cref{thm:BS8}. On account of symmetries we only consider $(m,n)$ in the top quarter of the region $\mathcal{A}$ as it is shown in \cref{fig:leafmn}.
For $\rho=[r_{ij}]_{i,j=0}^8 \in \M_9^{\text{sa}}$ we are minimizing 
\begin{multline*} \label{eq:trace_mn}
\Tr \left(C(\Psi_{m,n})\, \rho \right) = \\
n^2 r_{00} -  (2 m n + n^2) r_{11} + m^2 n^2 r_{33} +  m^2 r_{44} + 
b( r_{55} +r_{66})+ 
  (b + (m + n)^2) r_{88}  \\
  + m n^2 (r_{01} + \ols{r_{01}}-r_{03} - \ols{r_{03}}) +
   m^2 n (r_{14} + \ols{r_{14}}- r_{34} - \ols{r_{34}}) 
    - m n (m + n) (r_{18} + \ols{r_{18}}-r_{38} -\ols{r_{38}})
   \\
+(m n +c) (r_{04} + \ols{r_{04}})
- (m^2 n^2 +c ) (r_{13} + \ols{r_{13}}) \\
  -  (b + n (m + n))  (r_{08} + \ols{r_{08}})
- (b + m (m + n)) (r_{48} + \ols{r_{48}}),
   \end{multline*}
which only depends on the real part of 
$\rho$, thus we can by \cref{lem:sym} execute the semidefinite program 
in \texttt{Mathematica}.
As inputs we take positive maps  $\Psi_{m,n}$
   for 4200 points $(m,n) \in \mathcal{A}$ as shown on \cref{fig:traces_mn}. 
   The output density matrices are real symmetric matrices of the form
   \begin{equation*}
    \rho_{m,n}=
   \left[   \begin{array}{ccc|ccc|ccc}
r_{00} & r_{01} &  \cdot  &  
r_{03}  & r_{04} & \cdot  & 
\cdot  &  \cdot   & r_{08}  \\
 r_{01}  & r_{11} &  \cdot & 
 r_{13}  & r_{14}  &  \cdot  &
   \cdot  &  \cdot  &  r_{18}  \\
 \cdot  &  \cdot  &  r_{22} &
  \cdot  &  \cdot  &   r_{25} &
  \cdot & \cdot  &  \cdot  \\
\hline
r_{03} & r_{13} &  \cdot  &  
r_{33}  & r_{34} & \cdot  & 
\cdot  &  \cdot   & r_{38}  \\
 r_{04}  & r_{14} &  \cdot & 
 r_{34}  & r_{44}  &  \cdot  &
   \cdot  &  \cdot  &  r_{48}  \\
 \cdot  &  \cdot &  r_{25} &
  \cdot  &  \cdot  & r_{55} &
   \cdot & \cdot & \cdot \\
\hline
 \cdot  &  \cdot  & \cdot &
  \cdot  &  \cdot  &  \cdot &
  r_{66}  &  r_{67}  &  \cdot \\
 \cdot  &  \cdot &  \cdot  &
  \cdot  &  \cdot  & \cdot & 
   r_{67} & r_{77} &  \cdot \\
r_{08} & r_{18}  &  \cdot  &
 r_{38}  & r_{48} &   \cdot  &
  \cdot &  \cdot  & r_{88}
\end{array} \right].
   \end{equation*}
The output values 
$\Tr \left(C(\Psi_{m,n})\, \rho_{m,n} \right)$, 
which are all nonpositive, are shown in \cref{fig:traces_mn}.
\begin{figure}[htbp]
  \centering
  \includegraphics[width=0.3\textwidth]{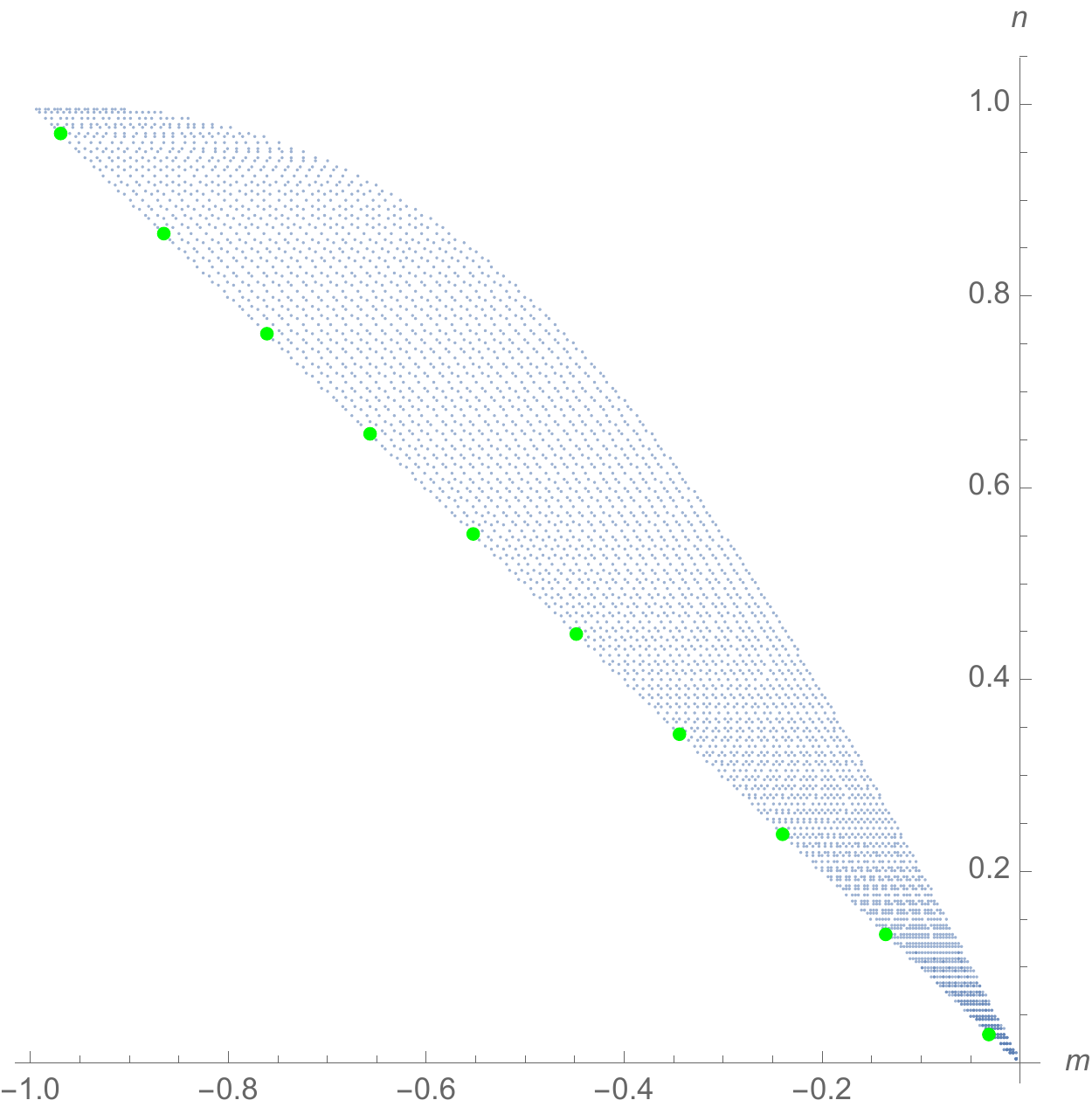}
   \includegraphics[width=0.6\textwidth]{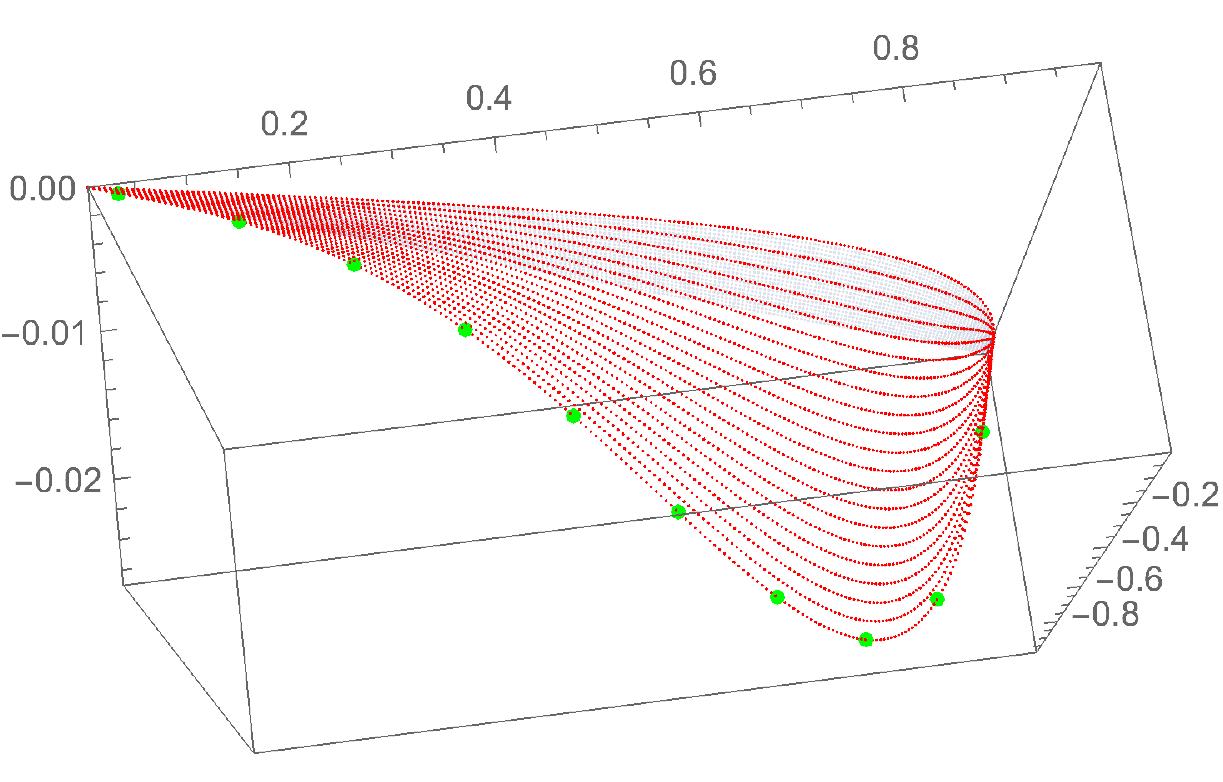}
   \put(-120,123){$\substack{n}$}
   \put(-32,40){$\substack{m}$}
   \put(-216,65){$\substack{\text{Tr}}$}
  \caption{Inputs $(m,n) \in  \mathcal{A}$  and outputs 
   $\left( m, n, 
 \Tr \left[C(\Psi_{m,n})\, \rho_{m,n} \right] \right)$ 
  of \cref{alg:SDPandChoi} 
  for $\Psi_{m,n}$. }
  \label{fig:traces_mn}
\end{figure}
\Cref{tab:rho_mn} contains some results of \cref{alg:SDPandChoi} 
for a selection of input points $(m,n) \in \mathcal{A}$ 
and the corresponding output density matrices $\rho_{m,n}$ 
(the corresponding points $\left( m, n, 
 \Tr \left[C(\Psi_{m,n})\, \rho_{m,n} \right] \right)$ are 
 shown green in \cref{fig:traces_mn}).
 
 \begin{table}[htbp]
{\footnotesize
 \caption{All the values of $\Tr \left(C(\Psi_{m,n})\, \rho_{m,n} \right)$ are nonpositive. }  
 \label{tab:rho_mn}
\begin{center}
  \begin{tabular}{|c|c|}\hline 
$(m,n)\in\mathcal{A}$ & $\min \Tr \left(C(\Psi_{m,n})\, \rho_{m,n} \right)$
\\ \hline 
\multicolumn{2}{|c|}{ $\rho_{m,n} \colon r_{00}, r_{11}, r_{22}, r_{33}, r_{44}, r_{55}, r_{66}, r_{77}, r_{88}, $  }  \\
\multicolumn{2}{|c|}{ \phantom{AAAAAAAAAA} $ r_{01}, r_{03}, r_{04}, r_{08}, r_{13}, r_{14}, r_{18}, r_{25}, r_{34}, r_{38}, r_{48}, r_{67}$  }  \\\hline \hline
 $\substack{(-0.97, 0.97)}$ & -0.00745894  \\
\multicolumn{2}{|c|}{
$\substack{ 
0.14921,0.03713,0.13783,0.25055,0.13227,0.04035,0.02639,0.16208,0.06417,  }$  } \\ 
\multicolumn{2}{|c|}{
$\substack{0.00065,-0.13956,0.01728,0.06009,0.00325,-0.02674,-0.00548,0.02036,0.09624,0.01157,0.07694,0.00323}$  } \\ \hline
  $\substack{ (-0.865556, 0.865556) }$ & -0.0230624  \\
\multicolumn{2}{|c|}{
$\substack{0.14923,0.03833,0.12543,0.26679,0.12891,0.04470,0.02677,0.15777,0.06208,}$  } \\ 
\multicolumn{2}{|c|}{
$\substack{-0.00376,-0.14500,0.02857,0.05772,-0.00195,-0.03110,-0.00685,0.02532,0.08814,0.01453,0.07756,0.00260}$  } \\ \hline
 $\substack{ (-0.761111, 0.761111) }$ & -0.0270046  \\
\multicolumn{2}{|c|}{
$\substack{0.14669,0.03788,0.11663,0.28154,0.12599,0.04829,0.02723,0.15511,0.06064,}$  } \\ 
\multicolumn{2}{|c|}{
$\substack{-0.01073,-0.14629,0.03949,0.05613,0.00070,-0.03274,-0.00809,0.02878,0.07886,0.01685,0.07779,0.00191}$  } \\ \hline
 $\substack{ (-0.656667, 0.656667) }$ & -0.0240555  \\
\multicolumn{2}{|c|}{
$\substack{0.14277,0.03754,0.10939,0.29419,0.12357,0.05160,0.02786,0.15348,0.05960,}$  } \\ 
\multicolumn{2}{|c|}{
$\substack{-0.01864,-0.14494,0.04993,0.05498,0.00710,-0.03396,-0.00933,0.03156,0.06860,0.01893,0.07785,0.00112}$  } \\ \hline
 $\substack{ (-0.447778, 0.447778) }$ & -0.0112646  \\
\multicolumn{2}{|c|}{
$\substack{0.13256,0.03944,0.09737,0.31112,0.12003,0.05787,0.02993,0.15305,0.05861,}$  } \\ 
\multicolumn{2}{|c|}{
$\substack{-0.03472,-0.13569,0.06872,0.05380,0.02520,-0.03792,-0.01210,0.03600,0.04615,0.02287,0.07771,-0.00119}$  } \\ \hline
 $\substack{ (-0.343333, 0.343333) }$ & -0.00574251  \\
\multicolumn{2}{|c|}{
$\substack{0.12693,0.04205,0.09206,0.31487,0.11865,0.06094,0.03156,0.15431,0.05862,}$  } \\ 
\multicolumn{2}{|c|}{
$\substack{-0.04220,-0.12826,0.07675,0.05376,0.03503,-0.04113,-0.01369,0.03787,0.03457,0.02486,0.07748,-0.00297}$  } \\ \hline
 $\substack{ (-0.238889, 0.238889) }$ & -0.00212082  \\
\multicolumn{2}{|c|}{
$\substack{0.12132,0.04574,0.08703,0.31536,0.11726,0.06397,0.03374,0.15666,0.05894,}$  } \\ 
\multicolumn{2}{|c|}{
$\substack{-0.04901,-0.11927,0.08370,0.05409,0.04443,-0.04514,-0.01542,0.03960,0.02317,0.02690,0.07709,-0.00539}$  } \\ \hline
 $\substack{ (-0.134444, 0.134444) }$ & -0.000404336  \\
\multicolumn{2}{|c|}{
$\substack{0.11593,0.05034,0.08220,0.31268,0.11561,0.06691,0.03657,0.16020,0.05957,}$  } \\ 
\multicolumn{2}{|c|}{
$\substack{-0.05499,-0.10897,0.08943,0.05479,0.05283,-0.04976,-0.01726,0.04121,0.01220,0.02902,0.07644,-0.00865}$  } \\ \hline
  \end{tabular}
\end{center}
}
\end{table}

\begin{remark}
For $(m,n)\in \mathcal{A}$ lying on the segment $m+n=0$, the objective function is much simpler,
\begin{multline*} 
\Tr \left(C(\Psi_{-n,n})\, \rho \right) = \\
n^2 \Big( r_{00} + r_{11} +  n^2 r_{33} + r_{44} + (1 \!-\! n^2) 
(r_{55} + r_{66} + r_{88}) 
-   2 n (r_{01} - r_{03} - r_{14} + r_{34})  \\
 -  (1 + n^2 + \sqrt{1 \!-\! n^2}) r_{04} - 
   2 (1 \!-\! n^2) (r_{08} + r_{48}) 
   - (1 + n^2 - \sqrt{1 \!-\! n^2}) r_{13}
  \Big).
   \end{multline*}
However, as seen in \cref{tab:rho_mn}, the optimal density matrices $\rho_{-n,n}$ do not simplify. Therefore we can not expect to find symbolic solutions to \cref{alg:SDPandChoi}. 
\end{remark}

For the points $(m,n)$  on the boundary of $\mathcal{A}$, \cref{alg:SDPandChoi} returns no new entangled states. Indeed, in \cref{pg:mnCP} we proved that for $(m,n) \in \partial \mathcal{A}$, the 
maps $\Psi_{m,n}$ completely positive with the  Kraus rank 1. The algorithm thus returns states for which  $ \Tr \left(C(\Psi_{m,n})\, \rho_{m,n} \right)=0$.

Finally, the comparison between the outputs of
\cref{alg:SDP,alg:SDPandChoi} for $\Psi_{m,n}$ and the corresponding unital entanglement witness $\Psi_{m,n}^{\text{u}}$ defined in \eqref{eq:mnUnital} is as in the previous example for $\Psi_{p,q}$.

\end{exmp}

\section{Conclusions}
\label{sec:conclusions}
We believe that our method of construction in \cref{sec:main}
can be generalised to higher dimensions.
Since we prescribe big infinite sets of zeros,
the linear system of equations yields families of positive maps that are easier to analyse. 

These examples may shed some light to Christandl's conjecture on entanglement breaking maps \cite{ChrConj} also for dimensions greater than 3.
 
Furthermore, our entanglement witnesses may contribute to understanding \textit{tensor stable positive maps} studied in \cite{Hermes} and \cite{Cuevas}.

\section*{Acknowledgments}
The author would like to acknowledge 
Alexander M\"uller-Hermes for suggesting to run an SDP on the positive maps in \cite{BucSiv} in order to obtain new entangled states. I thank 
the research groups of Gemma De las Cuevas and Tim Netzer, 
in particular to
Mirte van der Eyden and Andreas Klingler, whose questions helped to clarify the main ideas in the article.  
I am grateful to Seung-Hyeok Kye for pointing out the connection between
the parametrised examples of positive maps of the Choi type
and the positive maps in this article.

\bibliographystyle{siamplain}
\bibliography{article}

\end{document}